\newtheorem{theorem}{Theorem}[section]
\newtheorem{corollary}[theorem]{Corollary}
\newtheorem{remark}[theorem]{Remark}
\newtheorem{claim}[theorem]{Claim}
\newtheorem{conjecture}[theorem]{Conjecture}
\newtheorem{observation}[theorem]{Observation}
\newtheorem{definition}[theorem]{Definition}
\newcommand{\ket}[1]{|#1\rangle}
\newcommand{\spar}{\mathrm{spar}}
\newcommand{\rank}{\mathrm{rank}}
\newcommand{\bra}[1]{\left\{#1\right\}}
\newcommand{\cbra}[1]{\left\{#1\right\}}
\newcommand{\rbra}[1]{\left(#1\right)}
\newcommand{\mathify}[1]{\ifmmode{#1}\else\mbox{$#1$}\fi}
\newcommand{\textnaandt}{non-adaptive AND decision tree}
\newcommand{\textrnaandt}{randomized non-adaptive AND decision tree}
\newcommand{\textqnaandt}{quantum non-adaptive AND decision tree}
\newcommand{\pmone}{\bra{-1, 1}}
\newcommand{\zone}{\{0, 1\}}
\newcommand{\AND}{\mathsf{AND}}
\newcommand{\XOR}{\mathsf{XOR}}
\newcommand{\sw}{\mathsf{switch}}
\newcommand{\lc}{\lceil}
\newcommand{\rc}{\rceil}
\newcommand{\NAANDT}{\mathsf{NAADT}}
\newcommand{\RNAANDT}{\mathsf{RNAADT}}
\newcommand{\QNAANDT}{\mathsf{QNAADT}}
\newcommand{\NAPDT}{\mathsf{NAPDT}}
\newcommand{\Doneway}{\sD_{\cc}^\rightarrow}
\newcommand{\Dcc}{\sD_{\cc}}
\newcommand{\Ronewaywitherror}[1]{\sR_{\cc, #1}^\rightarrow}
\newcommand{\Qonewaywitherror}[1]{\mathsf{Q}_{\cc, #1}^\rightarrow}
\newcommand{\Qentonewaywitherror}[1]{\mathsf{Q}_{\cc, #1}^{*, \rightarrow}}
\newcommand{\Ddt}{\sD_{\dt}^\rightarrow}
\newcommand{\mpat}{\mathsf{Pat}^{\mathsf{M}}}
\newcommand{\Rdt}{\sR_{\dt}^\rightarrow}
\newcommand{\Qdt}{\sQ_{\dt}^\rightarrow}
\newcommand{\OMB}{\mathsf{OMB}}
\newcommand{\IP}{\mathsf{IP}}
\newcommand{\ADDR}{\mathsf{ADDR}}
\newcommand{\wh}{\widetilde}
\renewcommand{\S}{\mathcal{S}}
\newcommand{\B}{\mathcal{B}}
\newcommand{\X}{\mathcal{X}}
\newcommand{\A}{\mathcal{A}}
\newcommand{\U}{\mathcal{U}}
\newcommand{\Z}{\mathcal{Z}}
\newcommand{\R}{\mathcal{R}}
\newcommand{\Ind}{\mathbb{I}}
\newcommand{\eps}{\varepsilon}
\renewcommand{\epsilon}{\varepsilon}
\newcommand{\sR}{\mathsf{R}}
\newcommand{\sD}{\mathsf{D}}
\newcommand{\sS}{\mathsf{S}}
\newcommand{\sP}{\mathsf{P}}
\newcommand{\sQ}{\mathsf{Q}}
\newcommand{\sI}{\mathcal{I}}
\newcommand{\cc}{\mathrm{cc}}
\newcommand{\VC}{\mathsf{VC}}
\newcommand{\cH}{\mathcal{H}}
\newcommand{\dt}{\mathrm{dt}}
\newcommand{\rH}{\mathbb{H}}
\newcommand{\bin}{\mathsf{bin}}
\newcommand{\agr}{\mathsf{agr}}
\DeclarePairedDelimiter\abs{\lvert}{\rvert}%
\let\oldabs\abs
\def\abs{\@ifstar{\oldabs}{\oldabs*}}
\let\oldnorm\norm
\def\norm{\@ifstar{\oldnorm}{\oldnorm*}}
\title{One-way communication complexity and non-adaptive decision trees\footnote{This paper includes and improves upon results from an earlier unpublished manuscript of one of the authors~\cite{San17}.}}
\author{
Nikhil S.~Mande\footnote{N.S.M.~is supported by the Dutch Research Council (NWO) through QuantERA project QuantAlgo 680-91-034.}\\
CWI, Amsterdam\\
\textsf{nsm@cwi.nl}
\and
Swagato Sanyal\footnote{S.S.~is supported by an ISIRD Grant from Sponsored Research and Industrial Consultancy, IIT Kharagpur. Part of this work was done while S.S.~was at NTU and CQT, and supported by the Singapore National Research Foundation under NRF RF Award No. NRF-NRFF2013-13.}\\
IIT Kharagpur\\
\textsf{swagato@cse.iitkgp.ac.in}
\and
Suhail Sherif\\
Vector Institute, Toronto\\
\textsf{suhail.sherif@gmail.com}
}
\begin{document}

\maketitle

\begin{abstract}
We study the relationship between various one-way communication complexity measures of a composed function with the analogous decision tree complexity of the outer function. We consider two gadgets: the AND function on 2 inputs, and the Inner Product on a constant number of inputs. More generally, we show the following when the gadget is Inner Product on $2b$ input bits for all $b \geq 2$, denoted $\IP$.
\begin{itemize}
    \item If $f$ is a total Boolean function that depends on all of its $n$ input bits, then the bounded-error one-way quantum communication complexity of $f \circ \IP$ equals $\Omega(n(b-1))$.
    \item If $f$ is a \emph{partial} Boolean function, then the deterministic one-way communication complexity of $f \circ \IP$ is at least $\Omega(b \cdot \Ddt(f))$, where $\Ddt(f)$ denotes non-adaptive decision tree complexity of $f$.
\end{itemize}
To prove our quantum lower bound, we first show a lower bound on the VC-dimension of $f \circ \IP$. We then appeal to a result of Klauck [STOC'00], which immediately yields our quantum lower bound.
Our deterministic lower bound relies on a combinatorial result independently proven by Ahlswede and Khachatrian [Adv.~Appl.~Math.'98], and Frankl and Tokushige [Comb.'99].

It is known due to a result of Montanaro and Osborne [arXiv'09] that the deterministic one-way communication complexity of $f \circ \XOR$ \emph{equals} the non-adaptive parity decision tree complexity of $f$.
In contrast, we show the following when the inner gadget is the AND function on 2 input bits.
\begin{itemize}
    \item There exists a function for which even the \emph{quantum} \textnaandt~complexity of $f$ is exponentially large in the deterministic one-way communication complexity of $f \circ \AND$.
    \item However, for symmetric functions $f$, the \textnaandt~complexity of $f$ is at most quadratic in the (even two-way) communication complexity of $f \circ \AND$.
\end{itemize}
In view of the first bullet, a lower bound on non-adaptive AND decision tree complexity of $f$ \emph{does not} lift to a lower bound on one-way communication complexity of $f \circ \AND$.
The proof of the first bullet above uses the well-studied \emph{Odd-Max-Bit} function. 
For the second bullet, we first observe a connection between the one-way communication complexity of $f$ and the \emph{M\"obius sparsity} of $f$, and then give a lower bound on the M\"obius sparsity of symmetric functions. An upper bound on the \textnaandt~complexity of symmetric functions follows implicitly from prior work on combinatorial group testing; for the sake of completeness, we include a proof of this result.

It is well known that the rank of the communication matrix of a function $F$ is an upper bound on its deterministic one-way communication complexity. This bound is known to be tight for some $F$. However, in our final result we show that this is not the case when $F = f \circ \AND$. More precisely we show that for all $f$, the deterministic one-way communication complexity of $F = f \circ \AND$ is at most $(\rank(M_{F}))(1 - \Omega(1))$, where $M_{F}$ denotes the communication matrix of $F$.
\end{abstract}

\tableofcontents
\newpage

\section{Introduction}\label{sec: intro}

Composed functions are important objects of study in analysis of Boolean functions and computational complexity. For Boolean functions $f:\zone^n \to \zone$ and $g:\zone^m \to \{0,1\}$, their composition $f \circ g:\ \left(\zone^m\right)^n \to \zone$ is defined as follows: $f \circ g(x_1,\ldots, x_n):=f(g(x_1),\ldots, g(x_n))$. In other words, $f \circ g$ is the function obtained by first computing $g$ on $n$ disjoint inputs of $m$ bits each, and then computing $f$ on the $n$ resultant bits. Composed functions have been extensively looked at in the complexity theory literature, with respect to various complexity measures~\cite{BW01, HLS07, Rei11, She12, She13, BT15, Tal13, Mon14, BK16, GJ16, AGJ+17, GLSS19, BB20}.


Of particular interest to us is the case when $g$ is a communication problem (also referred to as ``gadget''). More precisely, let $g : \zone^b \times \zone^b \to \zone$ and $f : \zone^n \to \zone$ be Boolean functions. Consider the following communication problem: Alice has input $x = (x_1, \dots, x_n)$ and Bob has input $y = (y_1 \dots, y_n)$ where $x_i, y_i \in \zone^b$ for all $i \in [n]$. Their goal is to compute $f \circ g((x_1, y_1), \dots, (x_n, y_n))$ using as little communication as possible.
A natural protocol is the following: Alice and Bob jointly simulate an efficient query algorithm for $f$, using an optimal communication protocol for $g$ to answer each query. Lifting theorems are statements that say this naive protocol is essentially optimal. Such theorems enable us to prove lower bounds on the rich model of communication complexity by proving feasibly easier-to-prove lower bounds in the query complexity (decision tree) model.
Various lifting theorems have been proved in the literature~\cite{GLMWZ16, dRNV16, RM99, GPW18, CKLM19, WYY17, GGKS20, GPW20, HHL18, KLMY20, LM19, CFKMP21}.

In this work we are interested in the \emph{one-way} communication complexity of composed functions. Here, a natural protocol is for Alice and Bob to simulate a \emph{non-adaptive} decision tree for the outer function, using an optimal \emph{one-way} communication protocol for the inner function.
Thus, the one-way communication complexity of $f \circ g$ is at most the non-adaptive decision tree complexity of $f$ times the one-way communication complexity of $g$.

Lifting theorems in the one-way model are less studied than in the two-way model. Montanaro and Osborne~\cite{MO09} observed that the deterministic one-way communication complexity of $f \circ \XOR$ \emph{equals} the non-adaptive parity decision tree complexity of $f$. Thus, non-adaptive parity decision tree complexity lifts ``perfectly'' to deterministic communication complexity with the XOR gadget. Kannan et al.~\cite{KMSY18} showed that under uniformly distributed inputs, bounded-error non-adaptive parity decision tree complexity lifts to one-way bounded-error distributional communication complexity with the XOR gadget. Hosseini, Lovett and Yaroslavtsev~\cite{HLY19} showed that randomized non-adaptive parity decision tree complexity lifts to randomized communication complexity with the XOR gadget in the one-way broadcasting model with $\Theta(n)$ players.

We explore the tightness of the naive communication upper bound for two different choices of the gadget $g$: the Inner Product function, and the two-input AND function. For each choice of $g$, we compare the one-way communication complexity of $f \circ g$ with an appropriate type of non-adaptive decision tree complexity of $f$. 
Below, we motivate and state our results for each choice of the gadget. Formal definitions of the measures considered in this section can be found in Section~\ref{sec: prelims}.
\subsection{Inner Product Gadget}
Let $\Qonewaywitherror{\epsilon}(\cdot)$, $\Ronewaywitherror{\epsilon}(\cdot)$ and $\Doneway(\cdot)$ denote quantum $\epsilon$-error, randomized $\epsilon$-error and deterministic one-way communication complexity, respectively. When we allow the parties to share an arbitrary input-independent entangled state in the beginning of the protocol, denote the one-way quantum $\eps$-error communication complexity by $\Qentonewaywitherror{\eps}(\cdot)$. Let $\Qdt(\cdot)$ and $\Ddt(\cdot)$ denote bounded-error quantum non-adaptive decision tree complexity and deterministic non-adaptive decision tree complexity, respectively. For an integer $b > 0$, let $\IP : \zone^b \times \zone^b \to \zone$ denote the \emph{Inner Product Modulo 2} function, that outputs the parity of the bitwise AND of two $b$-bit input strings. Our first result shows that if $f$ is a total function that depends on all of its input bits, the quantum (and hence, randomized) bounded-error one-way communication complexity of $f \circ \IP$ is $\Omega(n(b-1))$. Let $\rH_\bin(\cdot)$ denote the binary entropy function. If $\eps = 1/2 - \Omega(1)$, then $1 - \rH_{\bin}(\eps) = \Omega(1)$. 

\begin{theorem}\label{thm:main}
Let $f:\{0,1\}^n \rightarrow \{0,1\}$ be a total Boolean function that depends on all its inputs (i.e., it is not a junta on a strict subset of its inputs), and let $\epsilon \in (0, 1/2)$. Let $\IP : \zone^b \times \zone^b \to \zone$ denote the Inner Product function on $2b$ input bits for $b \geq 1$. Then $\Qonewaywitherror{\epsilon}(f \circ \IP) \geq (1-\rH_\bin(\epsilon))n(b-1)$ and $\Qentonewaywitherror{\epsilon}(f \circ \IP) \geq (1-\rH_\bin(\epsilon))n(b-1)/2$.
\end{theorem}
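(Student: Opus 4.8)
The plan is to reduce the quantum lower bound to a lower bound on the VC-dimension of the communication matrix of $f \circ \IP$, since the paper already signals this route: by Klauck's result, $\Qonewaywitherror{\epsilon}(F) \geq (1 - \rH_\bin(\epsilon))\cdot \VC(F)$, and the entangled version loses only a factor of $2$. So the whole task becomes: show that the communication matrix $M_{f \circ \IP}$ shatters a set of $n(b-1)$ columns (equivalently rows). First I would exploit the hypothesis that $f$ depends on all $n$ of its variables. For each coordinate $i \in [n]$, depending on all variables means there exist strings $a^i, a^i \oplus e_i \in \zone^n$ differing only in the $i$-th bit with $f(a^i) \neq f(a^i \oplus e_i)$. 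Fixing all coordinates $j \neq i$ of the $\IP$ gadget to values consistent with $a^i_j$ reduces $f \circ \IP$ restricted to the $i$-th block to essentially $\IP$ on $2b$ bits (up to negation). Thus it suffices to understand the VC-dimension contributed by a single $\IP$ gadget and argue these contributions add up across the $n$ independent blocks.

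The core combinatorial fact I would establish is that the communication matrix of $\IP: \zone^b \times \zone^b \to \zone$ has VC-dimension exactly $b-1$ on, say, the Bob side (columns indexed by $y \in \zone^b$). Concretely, I would pick the $b-1$ columns $y = e_1, \dots, e_{b-1}$ (standard basis vectors, omitting $e_b$), and show that for every pattern $\sigma \in \zone^{b-1}$ there is a row $x \in \zone^b$ with $\langle x, e_k\rangle = x_k = \sigma_k$ for all $k \in [b-1]$ — which is trivially true by setting $x_k = \sigma_k$ and $x_b$ arbitrary. So $\VC(M_{\IP}) \geq b-1$. (One should double-check this is the right set to shatter and that a single extra column cannot be added, but only the lower bound $b-1$ is needed here.) The matching argument for the total function $f$: by choosing witnesses $a^i$ for each coordinate, and noting that the gadgets in distinct blocks use disjoint input bits for both Alice and Bob, one can take the ``product'' of these shattered sets — Bob shatters $\{(i, e_k) : i \in [n], k \in [b-1]\}$, a set of size $n(b-1)$, because for each block the other blocks' gadget values can be frozen to the witness pattern, flipping the output of $f$ exactly when that block's $\IP$ value flips. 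This is where the ``depends on all inputs'' hypothesis is essential and where the bound $n(b-1)$ rather than $n \cdot b$ comes from (one dimension per gadget is ``wasted'').

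Putting it together: $\VC(M_{f \circ \IP}) \geq n(b-1)$, hence by Klauck's theorem $\Qonewaywitherror{\epsilon}(f \circ \IP) \geq (1 - \rH_\bin(\epsilon))\, n(b-1)$, and $\Qentonewaywitherror{\epsilon}(f \circ \IP) \geq (1 - \rH_\bin(\epsilon))\, n(b-1)/2$ by the factor-$2$ relationship between entangled and unentangled one-way quantum communication complexity (which itself follows from superdense coding / the fact that entanglement-assisted one-way quantum communication is within a factor of $2$ of the unassisted version, combined with Klauck's VC bound applying to the unassisted measure). The main obstacle I anticipate is the bookkeeping in the product/tensoring step: one must verify carefully that freezing the non-$i$ blocks to a witness configuration of $f$ genuinely realizes, within a single row/column of $M_{f\circ\IP}$, every desired $0/1$ pattern on the chosen $n(b-1)$ columns simultaneously — i.e., that the shattering is witnessed by rows of the \emph{composed} matrix and not just block-by-block. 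This amounts to checking that $f(a^i \oplus (\text{bit } i \text{ set to } v)) $ as $v$ ranges over $\zone$ hits both output values, uniformly over the choice of which coordinate we are ``probing'', and that the Alice-side rows can be chosen to cooperate across all $n$ blocks at once. This is true but needs to be spelled out; everything else is routine.
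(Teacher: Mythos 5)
Your proposal follows essentially the same route as the paper: invoke Klauck's VC-dimension lower bound, then exhibit a shattered set of $n(b-1)$ columns of $M_{f\circ\IP}$ built from witnesses to $f$'s dependence on each coordinate. The construction you sketch (reserve one coordinate per $\IP$-block as a ``carrier'' bit — you use $e_b$, the paper uses the first coordinate — let each column $(i,k)$ freeze the non-$i$ blocks to the witness $a^i$ via that carrier, and use the remaining $b-1$ coordinates of block $i$ to realize the target pattern) is exactly what the paper does; the bookkeeping you flag as the main obstacle is precisely the verification the paper carries out, and it goes through as you anticipate.
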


\begin{remark}
In an earlier manuscript~\cite{San17}, the second author proved a lower bound of $(1-\rH_\bin(\epsilon))n(b-1)$ on a weaker complexity measure, namely $\Ronewaywitherror{\eps}(F)$, via information-theoretic tools. Kundu~\cite{Kun17} subsequently observed that a quantum lower bound can also be obtained by additionally using Holevo's theorem. They also suggested to the second author via private communication that one might be able to recover these bounds using a result of Klauck~\cite{Kla00}. This is indeed the approach we take, and we thank them for suggesting this and pointing out the reference.
\end{remark}

In order to prove Theorem~\ref{thm:main}, we appeal to a result of Klauck~\cite[Theorem 3]{Kla00}, who showed that the one-way $\epsilon$-error quantum communication complexity of a function $F$ is at least $(1 - \rH_{\bin}(\eps)) \cdot \VC(F)$, where $\VC(F)$ denotes the VC-dimension of $F$ (see Definition~\ref{defn: VC}). In the case when the parties can share an arbitrary entangled state in the beginning of a protocol, Klauck showed a lower bound of $(1 - \rH_{\bin}(\eps)) \cdot \VC(F)/2$. We exhibit a set of inputs that witnesses the fact that $\VC(f \circ \IP) \geq n(b-1)$.
Note that Theorem~\ref{thm:main} is useful only when $b>1$. Indeed, no non-trivial lifting statement is true for $b=1$ when $f$ is the AND function on $n$ bits, since in this case, $f \circ \IP = \AND_{2n}$, whose one-way communication complexity is 1.

Our second result with the Inner Product gadget relates the deterministic one-way communication complexity of $f \circ \IP$ to the deterministic non-adaptive decision tree complexity of $f$, where $f$ is an arbitrary \emph{partial} Boolean function.
\begin{theorem}\label{thm:detlift}
Let $\sS \subseteq \{0,1\}^n$ be arbitrary, and $f: \sS \rightarrow \{0,1\}$ be a \emph{partial} Boolean function. Let $b \geq 2$ and $\IP : \zone^b \times \zone^b \to \zone$. Then $\Doneway(f \circ \IP)=\Omega(b \cdot \Ddt(f))$.
\end{theorem}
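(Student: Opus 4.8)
The plan is to show that any cheap deterministic one-way protocol for $f \circ \IP$ forces $f$ to have a small \emph{determining set}, i.e.\ a set $S \subseteq [n]$ such that $z|_S = z'|_S$ implies $f(z) = f(z')$ for all $z, z'$ in the domain $\sS$ of $f$. The minimum size of a determining set is exactly $\Ddt(f)$ (a non-adaptive decision tree must read a determining set, and reading any determining set is a valid non-adaptive decision tree), so this gives the theorem. Write $d := \Ddt(f)$; if $d = 0$ then $f$ is constant and there is nothing to prove, so assume $d \geq 1$. It is convenient to record that $S$ determines $f$ iff, with $U := [n]\setminus S$, we have $f(z) = f(z')$ for all $z, z' \in \sS$ with $\supp(z \oplus z') \subseteq U$.

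First I would set up the reduction. Fix a deterministic one-way protocol of cost $c$. Alice's message partitions $(\zone^b)^n$ into at most $2^c$ classes, and two inputs $x, x'$ in the same class are non-conflicting: for every Bob input $y$ for which $z := (\IP(x_i,y_i))_i$ and $z' := (\IP(x'_i,y_i))_i$ both lie in $\sS$, Bob outputs the same bit on $(x,y)$ and $(x',y)$, so $f(z) = f(z')$. Restrict attention to the $(2^b-1)^n$ Alice inputs with all $x_i \neq 0$; here $\IP(x_i, y_i) = \langle x_i, y_i\rangle$ is the $\mathbb{F}_2$-inner product, which for fixed $x_i \neq 0$ realizes any target bit as $y_i$ varies. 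The key structural claim is: \emph{two full-support inputs $x, x'$ in one class agree on at least $d$ coordinates}. Indeed, put $U = \{i : x_i \neq x'_i\}$; for $i \in U$ the vectors $x_i, x'_i \in \mathbb{F}_2^b$ are distinct and nonzero, hence linearly independent (this is the sole use of $b \geq 2$), so every pair $(z_i, z'_i) \in \zone^2$ is realized as $(\langle x_i,y_i\rangle, \langle x'_i, y_i\rangle)$ for some $y_i$, while for $i \notin U$ we have $x_i = x'_i \neq 0$ and can realize any common bit. Hence for every $z, z' \in \sS$ with $\supp(z\oplus z') \subseteq U$ there is a $y$ with $(\langle x_i,y_i\rangle)_i = z$ and $(\langle x'_i,y_i\rangle)_i = z'$; since both are in $\sS$, non-conflict gives $f(z) = f(z')$. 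So $[n]\setminus U$ is a determining set and $|U| \leq n-d$.

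Thus, on full-support inputs, every message class is an \emph{anticode of diameter at most $n-d$} in the Hamming space $[2^b-1]^n$ (identifying $\mathbb{F}_2^b \setminus\{0\}$ with $[2^b-1]$): any two of its elements differ in at most $n-d$ coordinates. I then invoke the diametric theorem of Ahlswede and Khachatrian, equivalently the Erd\H{o}s--Ko--Rado theorem for integer sequences (Frankl and Tokushige), which bounds the size of such an anticode by $B := \max_{i\geq 0}|\mathcal{F}_i|$, where (in suitable coordinates) $\mathcal{F}_i = \{a \in [2^b-1]^n : |\{j \leq d+2i : a_j = 1\}| \geq d+i\}$ is the $i$-th Frankl family. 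Since at most $2^c$ message classes cover all $(2^b-1)^n$ full-support inputs and each contains at most $B$ of them, $2^c \geq (2^b-1)^n/B$, so it remains to show $B \leq (2^b-1)^n \cdot 2^{-\Omega(bd)}$.

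This last estimate is where the work lies, and I expect it to be the main obstacle. Writing $q := 2^b-1 \geq 2^{b-1}$ and $N := d+2i$, one has $|\mathcal{F}_i| = q^{n-N}\sum_{\ell=0}^{i}\binom{N}{\ell}(q-1)^\ell$; for $q \geq 3$ the summands increase up to $\ell = i$ with ratio at least $q-1 \geq 2$, so the sum is at most $2\binom{N}{i}(q-1)^i \leq 2^{N+1}q^i$, giving $q^n/|\mathcal{F}_i| \geq \tfrac12\, q^{d+i}\, 2^{-(d+2i)} \geq \tfrac12\, 2^{(b-2)d+(b-3)i}$, which is $2^{\Omega(bd)}$ uniformly in $i$ once $b \geq 3$. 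For $b = 2$ (so $q = 3$) the analogous count over the ternary alphabet yields $q^n/|\mathcal{F}_i| \geq \tfrac12 (3/2)^d (9/8)^i = 2^{\Omega(d)} = 2^{\Omega(bd)}$, again uniformly in $i$. Hence $B \leq (2^b-1)^n 2^{-\Omega(bd)}$ and $\Doneway(f \circ \IP) \geq \Omega(bd)$ (the $O(1)$ additive loss is immaterial, and $\Doneway(f\circ\IP) \geq 1$ covers the case of bounded $bd$). The delicate point is using a tail bound sharp enough to retain the factor $b$: a crude Chernoff/Bernstein estimate for the lower tail of $\mathrm{Bin}(N, (q-1)/q)$ ignores the relevant combinatorial structure and only yields $\Omega(d)$, so one genuinely needs the explicit form of the families $\mathcal{F}_i$.
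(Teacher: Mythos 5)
Your proof is correct and follows essentially the same route as the paper: both rest on the same structural claim (two full-support Alice inputs in the same message class yield a determining set from their agreement coordinates, using that two distinct nonzero vectors in $\mathbb{F}_2^b$ are linearly independent), and both invoke the Ahlswede--Khachatrian / Frankl--Tokushige diametric theorem for the quantitative step. The paper factors the Frankl-family estimate into a standalone packing lemma (Theorem~\ref{thm: packing}) and applies pigeonhole to extract two close elements in a large class, whereas you argue the contrapositive --- each class is $\Ddt(f)$-intersecting, hence small --- and carry out the Frankl-family estimate inline, but these are the same argument reorganized.
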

Given a protocol $\Pi$, our proof extracts a set of variables of cardinality linear in the complexity of $\Pi$, whose values always determine the value of $f$. The following claim which follows directly from a result due to Ahlswede and Khachatrian~\cite{AK98} and independently Frankl and Tokushige \cite{FT99}, is a crucial ingredient in our proof.

\begin{theorem}\label{thm: packing}
Let $q \geq 3$ and $1 \leq d \leq n/3$. Let $\A \subseteq [q]^n$ be such that for all $x^{(1)}=(x^{(1)}_1, \ldots, x^{(1)}_n)$, $x^{(2)}=(x^{(2)}_1, \ldots, x^{(2)}_n) \in \A$, $|\{i \in [n] \mid x^{(1)}_i = x^{(2)}_i\}| \geq d$. Then, $|\A| < q^{n-\frac{d}{10}}$.
\end{theorem}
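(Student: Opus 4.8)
The plan is to deduce this from the exact solution of the corresponding extremal problem. The hypothesis says precisely that $\A$ is a \emph{$d$-agreeing} family of sequences in $[q]^n$: any two of its members agree on at least $d$ coordinates; equivalently, $\A$ is an anticode of diameter at most $n-d$ in the Hamming space $[q]^n$. (We may assume $d$ is a positive integer, since replacing $d$ by $\lceil d\rceil$ only strengthens the hypothesis and weakens the conclusion.) The first step is to invoke the diametric theorem of Ahlswede and Khachatrian~\cite{AK98} -- equivalently, the Erd\H{o}s--Ko--Rado theorem for integer sequences of Frankl and Tokushige~\cite{FT99} -- which, under the parameter constraints $q \ge 3$ and $d \le n/3$, shows that the maximum possible value of $|\A|$ equals $\max_k |\mathcal{F}_k|$, where for an integer $k \ge 0$ with $d+2k \le n$ we set
$$\mathcal{F}_k \;=\; \Bigl\{x \in [q]^n \;:\; \bigl|\{\,i \in [d+2k] : x_i = a_i\,\}\bigr| \ge d+k\Bigr\},$$
and $(a_1,\dots,a_{d+2k}) \in [q]^{d+2k}$ is any fixed pattern (the value $|\mathcal{F}_k|$ does not depend on this choice, nor on which $d+2k$ coordinates are used). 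Thus it suffices to show $|\mathcal{F}_k| < q^{n-d/10}$ for every admissible $k$.

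For this I would estimate $|\mathcal{F}_k|$ directly. Counting -- the $n-d-2k$ coordinates outside $[d+2k]$ are free, and among the others one chooses the set of $l \le k$ coordinates on which $x$ differs from the pattern -- gives
$$|\mathcal{F}_k| \;=\; q^{\,n-d-2k}\sum_{l=0}^{k}\binom{d+2k}{l}(q-1)^l .$$
For $k \ge 1$ and $1 \le l \le k$, the ratio of the $l$-th summand to the $(l-1)$-th equals $(q-1)\cdot\tfrac{d+2k-l+1}{l} \ge 2\cdot\tfrac{d+k+1}{k} > 2$, so the whole sum is at most twice its last term, i.e.\ at most $2\binom{d+2k}{k}(q-1)^k$ (this is also trivially true when $k=0$). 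Using the unimodality bound $\binom{d+2k}{k} \le \binom{d+2k}{\lfloor(d+2k)/2\rfloor} \le 2^{\,d+2k-1}$ together with $4(q-1) \le q^2$ (which is just $(q-2)^2 \ge 0$), we obtain
$$|\mathcal{F}_k| \;\le\; q^{\,n-d-2k}\cdot 2^{\,d+2k}(q-1)^k \;=\; 2^{d}\,q^{\,n-d}\Bigl(\frac{4(q-1)}{q^2}\Bigr)^{\!k} \;\le\; 2^{d}\,q^{\,n-d}.$$
Finally, since $q \ge 3 > 2^{10/9}$, we have $2^d < q^{9d/10}$ for every $d \ge 1$, hence $|\mathcal{F}_k| \le 2^{d}q^{\,n-d} < q^{\,n-d/10}$; taking the maximum over $k$ yields $|\A| < q^{n-d/10}$.

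The real content of the argument is entirely in the first step -- the diametric/EKR theorem of \cite{AK98, FT99} -- which I would cite rather than reprove; the one place needing care on our side is confirming that its hypotheses apply throughout the range $q \ge 3$, $d \le n/3$. The subsequent computation is routine but slightly tight in the worst case of the smallest alphabet $q = 3$ together with small $d$: there the crude bound $\binom{m}{\lfloor m/2\rfloor} \le 2^m$ is just barely too weak, so I would use the sharper $\binom{m}{\lfloor m/2\rfloor} \le 2^{m-1}$, which makes the estimate go through uniformly over all $q \ge 3$ and all admissible $k$ at once.
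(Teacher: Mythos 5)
Your proof is correct and takes essentially the same approach as the paper: both reduce to the Ahlswede--Khachatrian/Frankl--Tokushige diametric theorem and then estimate the size of the Frankl-type candidate families (your $\mathcal{F}_k$, which coincide with the paper's $\B_r$). The only real difference is the combinatorial estimate: the paper overcounts via $|\B_r|\le\binom{d+2r}{d+r}q^{n-d-r}\le 2^{d+2r}q^{n-d-r}$ and then splits into the cases $q\ge 4$ and $q=3$ (where $r=d-1$ needs a separate calculation), whereas you use the exact formula for $|\mathcal{F}_k|$, observe that the summands grow geometrically with ratio $>2$ so the sum is at most twice its top term, and then use $4(q-1)\le q^2$ to get the single uniform bound $|\mathcal{F}_k|\le 2^d q^{n-d}$ for all $q\ge 3$ and all $k$ at once, avoiding the case split; this is a mild streamlining of the same argument and lands on the same exponent $n-d/10$.
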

We prove Theorem~\ref{thm: packing} in Appendix~\ref{erdos}.
Theorem~\ref{thm: packing} admits simple proofs when $q$ is large compared to $n$. See \cite{GMWW17} for a proof when $q$ is a prime power, and $q \geq n$. Their proof is based on polynomials over finite fields. We give a different proof for all $q > (\frac{en}{d})^2$ in Appendix~\ref{erdos} (for every $\delta >0$, the proof can be extended to work for all $q = \Omega(n/d)^{1+\delta}$). However, such a statement that only holds for large $q$ will only enable us to prove a lifting theorem for a gadget of size $b=\Omega(\log n)$. To prove Theorem~\ref{thm:detlift} for constant-sized gadgets we need to set $q$ to $O(1)$.

\begin{remark}
An analogous lifting theorem for deterministic one-way protocols for \emph{total} outer functions follows as a special case of both Theorem~\ref{thm:main} and Theorem~\ref{thm:detlift}. However, the statement admits a simple and direct proof based on a fooling set argument.
\end{remark}

Theorem~\ref{thm:main} and Theorem~\ref{thm:detlift} give lower bounds even when the gadget is the Inner Product function on 4 input bits (and lower bounds do not hold for the Inner Product gadget with fewer inputs).
It is worth mentioning here that prior works that consider lifting theorems with the Inner Product gadget~\cite{CKLM19, WYY17, CFKMP21}, albeit in the two-way model of communication complexity, require a super-constant gadget size.

\subsection{AND Gadget}
Interactive communication complexity of functions of the form $f \circ \AND$ have gained a recent interest~\cite{KLMY20, Wu21}.
In order to state and motivate our results regarding when the inner gadget is the 2-bit AND function, we first discuss some known results in the case when the inner gadget is the 2-bit XOR function.

Consider non-adaptive decision trees, where the trees are allowed to query arbitrary parities of the input variables. Denote the minimum cost (number of parity queries) of such a tree computing a Boolean function $f$, by $\NAPDT(f)$. An efficient non-adaptive parity decision tree for $f$ can easily be simulated to obtain an efficient deterministic one-way communication protocol for $f \circ \XOR$. Thus, $\Doneway(f \circ \XOR) \leq \NAPDT(f)$. Montanaro and Osborne~\cite{MO09} observed that this inequality is, in fact, tight for all Boolean functions.
More precisely,
\begin{claim}[\cite{MO09}]\label{claim: foxor oneway equals napdt}
For all Boolean functions $f : \zone^n \to \zone$, $\Doneway(f \circ \XOR) = \NAPDT(f)$.
\end{claim}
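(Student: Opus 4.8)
The plan is to pin down both sides of the claimed equality by a single linear-algebraic invariant of $f$. For $f : \zone^n \to \zone$, let $W_f := \{w \in \zone^n : f(z \oplus w) = f(z) \text{ for all } z \in \zone^n\}$ be the set of shifts under which $f$ is invariant; a one-line check shows $W_f$ is an $\mathbb{F}_2$-linear subspace of $\zone^n$. I will show $\Doneway(f \circ \XOR) \ge n - \dim W_f$ and $\NAPDT(f) \le n - \dim W_f$. Combined with the inequality $\Doneway(f \circ \XOR) \le \NAPDT(f)$ noted above (which follows by simulating a parity decision tree in the obvious way), this forces $\Doneway(f \circ \XOR) = \NAPDT(f) = n - \dim W_f$.

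For the lower bound on $\Doneway(f \circ \XOR)$, fix an optimal deterministic one-way protocol and let $m : \zone^n \to \zone^*$ be Alice's message map. Suppose $m(x) = m(x')$. Then Bob's output is the same on input pair $(m(x), y)$ as on $(m(x'), y)$ for every $y \in \zone^n$, so correctness gives $f(x \oplus y) = f(x' \oplus y)$ for all $y$; substituting $y = x \oplus z$ yields $f(z) = f(z \oplus (x \oplus x'))$ for all $z$, i.e.\ $x \oplus x' \in W_f$. Hence $m$ is injective on any system of coset representatives of $W_f$ in $\zone^n$, so Alice uses at least $2^{n - \dim W_f}$ distinct messages; since a deterministic one-way protocol of cost $c$ produces at most $2^c$ messages, its cost is at least $n - \dim W_f$.

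For the upper bound on $\NAPDT(f)$, note that by definition of $W_f$ the value $f(z)$ depends only on the coset $z + W_f$. Choosing a basis $v_1, \ldots, v_{n - \dim W_f}$ of the orthogonal complement $W_f^\perp$, the coset $z + W_f$ is uniquely determined by the parities $\langle v_1, z\rangle, \ldots, \langle v_{n - \dim W_f}, z\rangle$. Thus the non-adaptive parity decision tree that queries exactly these parities and then outputs the (well-defined) value of $f$ on the identified coset computes $f$ correctly, and it has cost $n - \dim W_f$.

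The whole argument is short, and the step needing the most care is the first step of the communication lower bound: deducing from $m(x) = m(x')$ that $x \oplus x' \in W_f$. This is precisely where totality of $f$ enters — the substitution $y = x \oplus z$ must be legitimate for all $z \in \zone^n$ — and it is essentially the reason the exact equality of the claim is special to the XOR gadget and to total outer functions. For completeness one can also verify the matching bound $\NAPDT(f) \ge n - \dim W_f$ (in any parity decision tree for $f$ the queried parities must span a subspace containing $W_f^\perp$, since otherwise two inputs lying in distinct cosets of $W_f$ would receive identical answers to all queries), which makes the chain of equalities fully transparent, though only $\NAPDT(f) \le n - \dim W_f$ is needed for the claim.
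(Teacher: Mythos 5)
The paper does not give a proof of this claim; it is stated with a citation to~\cite{MO09}. Your argument is correct, and it is the standard one: introduce the invariance subspace $W_f$ (equivalently, the orthogonal complement of the $\mathbb{F}_2$-span of the Fourier support of $f$), show $\NAPDT(f)\le n-\dim W_f$ by querying parities along a basis of $W_f^\perp$, show $\Doneway(f\circ\XOR)\ge n-\dim W_f$ by observing that any two of Alice's inputs receiving the same message must differ by a vector in $W_f$, and close the loop with the trivial simulation $\Doneway(f\circ\XOR)\le\NAPDT(f)$. Your remark about where totality is needed (the substitution $y=x\oplus z$ must range over all $z$) is also the correct place to flag it.
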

If the inner gadget were AND instead of XOR, then the natural analogous decision tree model to consider would be non-adaptive decision trees that have query access to arbitrary ANDs of subsets of inputs. Denote the minimum cost (number of AND queries) of such a tree computing a Boolean function $f$ by $\NAANDT(f)$.
Clearly, $\Doneway(f \circ \AND)$ is bounded from above by $\NAANDT(f)$, since a \textnaandt~can be easily simulated to give a one-way communication protocol for $f \circ \AND$ of the same complexity. Thus, $\Doneway(f \circ \AND) \leq \NAANDT(f)$. On the other hand, one can show that $\Doneway(f \circ \AND) \geq \log(\NAANDT(f))$ (see Claim~\ref{claim: doneway vs naandt}). Thus
\begin{equation}\label{eqn: naandt and doneway}
\log(\NAANDT(f)) \leq \Doneway(f \circ \AND) \leq \NAANDT(f).
\end{equation}

We explore if an analogous statement to Claim~\ref{claim: foxor oneway equals napdt} holds true if the inner function were AND instead of XOR. That is, is the second inequality in Equation~\eqref{eqn: naandt and doneway} always tight?

We give a negative answer in a very strong sense and exhibit a function for which the first inequality is tight (up to an additive constant). We show that there is an exponential separation between these measures even if one allows the decision trees to have \emph{quantum} query access to ANDs of subsets of input variables. It is worth noting that, in contrast, if one is given quantum query access to \emph{parities} (in place of ANDs) of subsets of input variables, then one can completely recover an $n$-bit string using just 1 query~\cite{BV97}, rendering this model trivial. Let $\QNAANDT(f)$ denote the bounded-error \textqnaandt~complexity of $f$.
\begin{theorem}\label{thm: intro omb separation}
There exists a function $f : \zone^n \to \zone$ such that $\QNAANDT(f) = \Omega(2^{\Doneway(f \circ \AND)})$.
\end{theorem}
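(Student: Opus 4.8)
The goal is to exhibit a function $f$ with $\QNAANDT(f)$ exponentially large while $\Doneway(f \circ \AND)$ is small. The hint in the excerpt is to use the Odd-Max-Bit function $\OMB$. So the plan is to take $f = \OMB$ (or a suitable variant) on $n$ bits, and prove two bounds: an upper bound $\Doneway(\OMB \circ \AND) = O(\log n)$, and a matching lower bound $\QNAANDT(\OMB) = \Omega(n)$ (or $\Omega(\poly(n))$, which already gives the exponential separation against $O(\log n)$). Recall $\OMB(x) = 1$ iff the largest index $i$ with $x_i = 1$ is odd (with some convention when $x = 0^n$).

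**The communication upper bound.** First I would show $\Doneway(\OMB \circ \AND)$ is small. In $\OMB \circ \AND$, Alice holds $x \in \zone^n$, Bob holds $y \in \zone^n$, and the relevant bits are $x_i \wedge y_i$; the answer depends only on the parity of the largest $i$ with $x_i = y_i = 1$. The natural protocol: Alice sends Bob the largest index $i^*$ such that $x_{i^*} = 1$ together with, say, the whole prefix pattern of her $1$-positions relative to $i^*$ — but that is too much. Better: observe that Bob only needs, among the indices where $y_i = 1$, the largest one that is also an index where $x_i = 1$. Alice cannot know $y$, so instead Alice sends a short ``fingerprint'': for each threshold $t$, whether she has a $1$ at some position $\ge t$. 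That is monotone in $t$, so it is determined by a single number, $i^* = \max\{i : x_i = 1\}$, costing $O(\log n)$ bits. But this does not suffice because Bob needs the max over the \emph{intersection}. The correct observation is the standard one for $\OMB$-type gadget problems: Alice sends $i^* = \max\{i: x_i=1\}$ and the bit $x$ restricted to... no — I would instead argue via the known fact (or a direct argument) that $\Doneway(\OMB \circ \AND) = O(\log n)$ by having Alice send $\lceil \log n\rceil + O(1)$ bits encoding enough to let Bob determine $\max\{i : x_i \wedge y_i = 1\}$; concretely Alice sends $i^*=\max\{i : x_i = 1\}$, and Bob checks $y_{i^*}$; if $1$ he is done, else Alice must have sent more. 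This requires care, so the honest route is: cite/prove that the \emph{deterministic} one-way cost of the ``find the largest common $1$'' problem composed appropriately is $O(\log n)$ — which follows since the communication matrix of $\OMB \circ \AND$ has at most $\poly(n)$ distinct rows (Alice's behavior depends only on the "staircase" of her input as seen by $\OMB$), giving $\Doneway \le \log(\text{number of distinct rows}) = O(\log n)$. I would carry out this row-counting argument explicitly: two inputs $x, x'$ of Alice are equivalent if for every $y$, $\OMB(x \wedge y) = \OMB(x' \wedge y)$, and one shows there are only $O(n)$ equivalence classes (determined by the suffix behavior of $x$), hence $\Doneway(\OMB\circ\AND) = O(\log n)$.

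**The quantum non-adaptive AND query lower bound.** The core is $\QNAANDT(\OMB) = \Omega(n)$ (or at least $n^{\Omega(1)}$). The idea: a non-adaptive AND decision tree makes a fixed set $Q$ of AND queries (each query a subset $S \subseteq [n]$, answered by $\bigwedge_{i \in S} x_i$), and then must output $\OMB(x)$. Consider the $n+1$ inputs $e_{\le k} := 1^k 0^{n-k}$... actually better: consider inputs of the form $z^{(j)}$ that is the indicator of a single coordinate $j$, i.e.\ $z^{(j)}_i = \mathbb{1}[i=j]$; then $\OMB(z^{(j)}) = j \bmod 2$. On such an input, an AND query $S$ returns $1$ iff $S = \{j\}$, and $0$ otherwise. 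So to distinguish $z^{(j)}$ from $z^{(j')}$ with $j \ne j'$ of opposite parity, the query set must essentially ``test'' singletons — a randomized/quantum algorithm querying a fixed family $Q$ gets information only from queries $S \in \{\{j\},\{j'\}\}$, and a counting/adversary argument (e.g.\ the standard hybrid argument, or the polynomial method: on these inputs the acceptance probability is a low-"sparsity" function of indicator variables) forces $|Q| = \Omega(n)$ even with bounded error and even quantumly. The cleanest tool is the quantum adversary bound or the hybrid argument: on the $n$ inputs $\{z^{(j)}\}_j$, any fixed AND-query acts as a function whose value changes on at most one of them, so by the same argument that shows unstructured search needs $\Omega(\sqrt{n})$ — in the \emph{non-adaptive} setting it is even stronger, $\Omega(n)$ — we get the bound. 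I would formalize this by noting a non-adaptive quantum algorithm making $T$ queries to the AND-oracle, on inputs $\{z^{(j)}\}$, is a quantum algorithm making $T$ \emph{parallel} queries to a function that is essentially an $\OR$-type / Fourier-sparse object, and bounded-error computation of the parity of the planted index requires $T = \Omega(n)$.

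**Main obstacle.** The communication upper bound is the delicate part: one must be careful about what exactly Bob needs and confirm it genuinely costs only $O(\log n)$; I expect the clean justification is the row-counting / VC-style argument (few distinct rows in the communication matrix of $\OMB \circ \AND$), and verifying the equivalence-class count is $\poly(n)$ is where the real work lies. The quantum query lower bound, by contrast, should follow from a by-now-standard non-adaptive hybrid or adversary argument on the singleton inputs $\{z^{(j)}\}$, since a non-adaptive AND query is ``blind'' on all but one such input; the only subtlety is pushing it through for bounded-error \emph{quantum} algorithms, for which invoking the polynomial method (the acceptance probability restricted to these inputs is a bounded polynomial of low degree in the indicator bits) or a known $\Omega(n)$ bound for non-adaptive quantum search suffices.
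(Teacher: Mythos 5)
There is a fatal gap in your proposal: you use the wrong variant of $\OMB$. You define $\OMB(x)=1$ iff the largest index $i$ with $x_i=1$ is odd, which is the standard Odd-Max-Bit function from the literature. The paper uses a \emph{different} function: $\OMB_n(x)=1$ iff $\max\{i : x_i=0\}$ is odd (Definition~\ref{defn: omb}). The paper's footnote at that definition explicitly warns that the two behave very differently, and that even the weakly unbounded-error communication complexity of $(\text{standard }\OMB)\circ\AND$ is polynomially large in $n$ (citing~\cite{BVW07}). So with your choice of $f$, the small communication upper bound that the whole theorem hinges on simply does not hold.

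Your proposed fallback (row-counting in the communication matrix) fails for the same reason. With the standard $\OMB$ and $y=e_i$ for odd $i$, we have $\OMB(x\wedge e_i)=x_i$: if $x_i=1$ the unique $1$-position of $x\wedge e_i$ is $i$, which is odd; otherwise $x\wedge e_i=0^n$. Hence the row of $x$ in $M_{\OMB\circ\AND}$ encodes $x_i$ for every odd $i$, giving at least $2^{\lceil n/2\rceil}$ distinct rows and $\Doneway(\OMB\circ\AND)\ge n/2$, not $O(\log n)$. There is no $O(\log n)$ protocol to find. With the paper's $\OMB_n$, by contrast, the protocol you first sketched works directly with no fallback needed: Alice sends $i^*=\max\{i:x_i=0\}$ (or indicates $x=1^n$). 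For all $j>i^*$ Alice has $x_j=1$, so $(x\wedge y)_j=y_j$ there, while $(x\wedge y)_{i^*}=0$; Bob outputs the parity of $\max\bigl(i^*,\max\{j>i^*:y_j=0\}\bigr)$, at cost $\lceil\log(n+1)\rceil$.

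The hard instance in your quantum lower bound must also change to match the corrected $\OMB_n$. Under the paper's definition, $\OMB_n(e_j)=n\bmod 2$ for every $j<n$, so the singleton inputs $\{e_j\}$ give no hybrid structure at all. The paper instead restricts to $\{0^i1^{n-i}:i\in[n]\}$ (its $\OMB'_n$), where consecutive inputs flip the function value and, crucially, every $\AND_S$ query collapses to the single-bit query $x_{\min S}$, reducing quantum non-adaptive AND queries to ordinary quantum non-adaptive bit queries; then Montanaro's $\Omega(n)$ bound (Theorem~\ref{thm: quantum dt lower bound}) applies. Your underlying intuition — a hybrid-type bound on a line of inputs where a non-adaptive query is ``blind'' on all but one — is the right one, but it has to be instantiated on a hard instance compatible with the correct variant of $\OMB$, and the reduction from AND-queries to bit-queries is the step that makes the argument go through cleanly.
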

The function $f$ we use to witness the bound in Theorem~\ref{thm: intro omb separation} is a modification of the well-studied \emph{Odd-Max-Bit} function, which we denote $\OMB_n$. This function outputs 1 if and only if the maximum index of the input string that contains a 0, is odd (see Definition~\ref{defn: omb}).
A $\lceil\log (n+1)\rceil$-cost one-way communication protocol is easy to show, since Alice can simply send Bob the maximum index where her input is 0 (if it exists), and Bob can use this along with his input to conclude the parity of the maximum index where the bitwise AND of their inputs is 0.
A crucial property that we use to show a lower bound of $\Omega(n)$ on $\QNAANDT(\OMB_n)$ is that $\OMB_n$ has large \emph{alternating number}, that is, there is a monotone path on the Boolean hypercube from $0^n$ to $1^n$ on which the value of $\OMB_n$ flips many times.

Theorem~\ref{thm: intro omb separation} implies that, in contrast to the lifting theorem with the XOR gadget (Claim~\ref{claim: foxor oneway equals napdt}), the measure of \textnaandt~complexity \emph{does not} lift to a one-way communication lower bound for $f \circ \AND$.
However we show that a statement analogous to Claim~\ref{claim: foxor oneway equals napdt} does hold true for symmetric functions $f$, albeit with a quadratic factor, even when the measure is two-way communication complexity, denoted $\Dcc(\cdot)$.

\begin{theorem}\label{thm: intro symmetric naandt communication}
Let $f : \zone^n \to \zone$ be a symmetric function. Then $\NAANDT(f) = O(\Dcc(f \circ \AND)^2)$.
\end{theorem}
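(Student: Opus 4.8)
The plan is to sandwich $\NAANDT(f)$ using the \emph{M\"obius sparsity} of $f$. Write $f$ as its unique real multilinear polynomial $f(z)=\sum_{S\subseteq[n]}a_S\prod_{i\in S}z_i$ over the $\zone$-basis, and let $s$ be the number of nonzero coefficients $a_S$. I would prove the two bounds
\[
\log_2 s \;\le\; \Dcc(f\circ\AND) \quad\text{(for every }f\text{)},\qquad\qquad \NAANDT(f) \;=\; O\big((\log_2 s)^2\big)\quad\text{(for symmetric }f\text{)},
\]
whose composition is the theorem.

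For the first bound --- the connection between communication and M\"obius sparsity --- I would compute the real rank of the communication matrix. Since $M_{f\circ\AND}(x,y)=f(x\wedge y)=\sum_S a_S\prod_{i\in S}x_iy_i=\sum_{S:a_S\neq 0}a_S\,v_S v_S^\top$, where $v_S\in\mathbb{R}^{2^n}$ has $(v_S)_x=\prod_{i\in S}x_i$, and the vectors $\{v_S\}$ are linearly independent (process the $S$'s in increasing order of size and evaluate a putative dependence at a minimal witness), we get $\rank_{\mathbb R}(M_{f\circ\AND})=s$; as $\log_2\rank_{\mathbb R}(M)\le\Dcc(M)$, this gives the bound.

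For the second bound, fix symmetric $f$ with Hamming-level values $f_0,\dots,f_n$; then $a_S$ depends only on $|S|$, so writing $W$ for the set of degrees $w$ with nonzero degree-$w$ coefficients we have $s=\sum_{w\in W}\binom nw$, and Newton's forward-difference formula gives $f_w=\sum_{k\le w}\binom wk a_k$. If $\log_2 s\ge\sqrt n$ we are done, since querying all $n$ singletons gives $\NAANDT(f)\le n\le(\log_2 s)^2$. Otherwise $\binom nw<2^{\sqrt n}$ for all $w\in W$, forcing $W\subseteq[0,\sqrt n]\cup[n-\sqrt n,n]$. Using the elementary fact that a degree-$\le d$ real polynomial that is $\zone$-valued at $2d+1$ distinct points must be constant (look at $p(p-1)$, of degree $\le 2d$), and noting that for $w<n-\sqrt n$ the value $f_w$ agrees with the fixed degree-$\le\sqrt n$ polynomial $\sum_{k\in W\cap[0,\sqrt n]}a_k\binom wk$ on more than $2\sqrt n+1$ points, I would deduce that this polynomial is constant, hence $a_k=0$ for every $k\in W\cap[1,\sqrt n]$, hence $W\subseteq\{0\}\cup[n-\sqrt n,n]$ and $f$ is constant on $[0,n-\sqrt n-1]$. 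Let $T$ be least with all jumps of $f$ lying in $[n-T,n-1]$ (so $T\le\sqrt n+1$ and $f_{n-T}\neq f_{n-T+1}$). A short computation with $g_i:=f_{n-i}-a_0$, $b_j:=a_{n-j}$ and the identity $g_i=\sum_{j\ge i}\binom{n-i}{j-i}b_j$ shows that $b_j=0$ for all $j\ge T/2$ would force $g_{T-1}=g_T=0$, contradicting the jump at $n-T$; hence some $n-j\in W$ with $j\ge T/2$, whence $s\ge\binom{n}{\lceil T/2\rceil}$ and so $T=O(\log_2 s/\log n)$. Finally, computing $f(x)$ reduces to identifying the at most $T$ zero-coordinates of $x$ --- or certifying there are more than $T$ of them, in which case $f(x)$ is the known constant --- which is exactly the non-adaptive combinatorial group-testing task solved by a $T$-disjunct test matrix with $O(T^2\log n)$ tests, i.e.\ by $O(T^2\log n)$ AND queries. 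Thus $\NAANDT(f)=O(T^2\log n)=O\big((\log_2 s)^2/\log n\big)=O((\log_2 s)^2)$.

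The main obstacle is this last case: controlling the support $W$ of the forward-difference transform of the $\zone$-sequence $(f_w)_w$ --- showing that M\"obius sparsity forces every jump of $f$ to the extreme top of the Hamming cube, and that a jump at depth $T$ there forces an active level $n-j$ with $j=\Omega(T)$ --- together with the bookkeeping that makes the $\log n$ loss in the group-testing step exactly cancel the $\sqrt{\log n}$ gained in the sparsity lower bound, so that the final bound is genuinely quadratic rather than quasi-quadratic. The three external ingredients (the polynomial fact, the binomial-transform identity, and the existence of $T$-disjunct matrices with $O(T^2\log n)$ rows) are all standard; the work is in assembling them with the right parameters.
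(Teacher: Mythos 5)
Your proof is correct (modulo small-case bookkeeping) and follows the same skeleton as the paper's: lower-bound $\Dcc(f\circ\AND)$ by $\log\spar(f)$ via the rank identity $\rank(M_{f\circ\AND})=\spar(f)$; upper-bound $\NAANDT(f)$ by a non-adaptive group-testing / disjunct-matrix construction of cost $\wh{O}(T^2)$ once you know $f$ is constant below Hamming weight $n-T$; and close the loop by showing that a low-sparsity symmetric $f$ must have all its ``jumps'' in a thin band $[n-T,n]$ with $\log\spar(f)=\Omega(T\log n)$. Where you differ is that the paper simply cites Buhrman--de Wolf~\cite{BW01} (their Lemma~5, the paper's Claim~\ref{claim: symmetric sparsity}) for the step $\log\spar(f)\gtrsim\log\binom{n}{\sw(f)}$, whereas you re-derive this from scratch via the identity $f_w=\sum_k a_k\binom{w}{k}$: the observation that $\binom{n}{w}<2^{\sqrt n}$ forces $W$ to the extremes, the $p(p-1)$ argument killing the low-degree levels, and the back-substitution $g_i=\sum_{j\ge i}\binom{n-i}{j-i}b_j$ extracting an active level $n-j$ with $j\ge T/2$. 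That self-contained derivation is a nice bonus, and the final bound you get ($\NAANDT(f)=O((\log\spar(f))^2/\log n)$) is actually a $\log n$ factor sharper than what the paper writes down. Two things to tidy up: the $g_{T-1}=g_T=0$ contradiction needs $T-1\ge T/2$, so $T\ge 2$; you should dispose of $T\le 1$ directly (a single jump at $n-1$ gives $\NAANDT(f)=1\le\Dcc(f\circ\AND)$). And, as in the paper's Claim~\ref{claim: symmetric sparsity}, the whole argument is asymptotic --- you need $n-\sqrt n>2\sqrt n+1$ etc.\ --- so you should state that $n$ is sufficiently large, absorbing small $n$ into the $O(\cdot)$.
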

In fact we prove a stronger bound in which $\Dcc(f \circ \AND)$ above is replaced by $\log\rank(M_{f \circ \AND})$, where $M_{f \circ \AND}$ denotes the communication matrix of $f \circ \AND$. That is, we show that for symmetric functions $f$,
\begin{equation}\label{eqn: naandt logrank symmetric}
 \NAANDT(f) = O(\log^2\rank(M_{f \circ \AND})).
\end{equation}
Since it is well known (Equation~\eqref{eqn: logrank bounds}) that the communication complexity of a function is at least as large as the logarithm of the rank of its communication matrix, this implies Theorem~\ref{thm: intro symmetric naandt communication}.
There have been multiple works (see, for example,~\cite{BW01, Wu21, KLMY20} and the references therein) studying the communication complexity of $\AND$ functions in connection with the log-rank conjecture~\cite{LS88} which states that the communication complexity is bounded from above by a polynomial in the logarithm of the rank of the communication matrix.
Among other things, Buhrman and de Wolf~\cite{BW01} observed that the log-rank conjecture holds for symmetric functions composed with AND. In particular, they showed that if $f$ is symmetric, then $\Dcc(f \circ \AND) = O(\log \rank(M_{f \circ \AND})))$. Most recently, Knop et al.~\cite{KLMY20} showed that $\Dcc(f \circ \AND) = O(\textnormal{poly}(\log \rank(M_{f \circ \AND}), \log n)$ for all Boolean functions $f : \zone^n \to \zone$, nearly resolving the log-rank conjecture for $\AND$ functions.

While we have a quadratically worse dependence in the RHS of Equation~\eqref{eqn: naandt logrank symmetric} as compared to the above-mentioned bound for symmetric functions due to Buhrman and de Wolf, our upper bound is on a complexity measure that can be exponentially larger than communication complexity in general (Theorem~\ref{thm: intro omb separation}).

Buhrman and de Wolf showed a lower bound on $\log\rank(M_{f \circ \AND})$ for symmetric functions $f$. An upper bound on $\NAANDT(f)$ implicitly follows from prior work on group testing~\cite{DR83}, but we provide a self-contained probabilistic proof for completeness. Combining these two results yields Equation~\eqref{eqn: naandt logrank symmetric}, and hence Theorem~\ref{thm: intro symmetric naandt communication}.

Suitable analogues of Theorem~\ref{thm: intro omb separation} and Theorem~\ref{thm: intro symmetric naandt communication} can be easily seen to hold when the inner gadget is OR instead of AND. In this case, the relevant decision tree model is non-adaptive OR decision trees. Interestingly, these decision trees are studied in the seemingly different context of \emph{non-adaptive group testing algorithms}.
Non-adaptive group testing is an active area of research (see, for, example,~\cite{CH08} and the references therein), and has additionally gained significant interest of late in view of the ongoing pandemic (see, for example,~\cite{ZLG21}).

Our final result regarding the AND gadget deals with the relationship between one-way communication complexity and rank of the underlying communication matrix.
It is easy to show that for functions $F : \zone^m \times \zone^n \to \zone$,
\begin{equation}\label{eqn: rank bound 1way}
\log\rank(M_F) \leq \Doneway(F) \leq \rank(M_F),
\end{equation}
where $M_F$ denotes the communication matrix of $F$ and is defined by $M_F(x, y) = F(x, y)$, and $\rank(\cdot)$ denotes real rank. The first bound can be seen to be tight for functions with maximal rank, for example the Equality function.
The second inequality is tight, for example, for the Addressing function on $(\log n + n)$ input bits (see Definition~\ref{defn: addressing}) where Alice receives $n$ target bits and Bob receives $\log n$ addressing bits.
Sanyal~\cite{San19} showed that the upper bound can be improved for functions of the form $F = f \circ \XOR$. More precisely they showed that for all Boolean functions $f : \zone^n \to \zone$,
\begin{equation}
\Doneway(f \circ \XOR) \leq O\left(\sqrt{\rank(M_{f \circ \XOR})} \log \rank(M_{f \circ \XOR})\right),
\end{equation}
and moreover this bound is tight up to the logarithmic factor on the RHS, when $f$ is the Addressing function.
We show that the same bound does not hold when the XOR gadget is replaced by AND. We show in Corollary~\ref{cor: doneway vs rank addressing} that when $f$ is the Addressing function, then 
\begin{equation}
\Doneway(f \circ \AND) \geq \rank(M_{f \circ \AND})^{\log_32} \approx \rank(M_{f \circ \AND})^{0.63},
\end{equation}
Thus it is plausible that the upper bound in terms of rank from Equation~\eqref{eqn: rank bound 1way} might be tight for some function of the form $f \circ \AND$. We show that this is not the case.
\begin{theorem}\label{thm: 1way sublinear rank}
Let $f: \zone^n : \zone$ be a Boolean function. Then,
\[
\Doneway(f \circ \AND) \leq (\rank(M_{f \circ \AND}))(1 - \Omega(1)).
\]
\end{theorem}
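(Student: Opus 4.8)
The plan is to reduce the bound to a purely combinatorial statement about M\"obius supports and then exploit that $f$ is Boolean. Write $f(z) = \sum_{S \subseteq [n]} \widehat{f}_M(S) \prod_{i \in S} z_i$ for the M\"obius (zeta) expansion of $f$, and let $\mathcal{F} := \{S : \widehat{f}_M(S) \neq 0\}$ be its M\"obius support; since $f$ is $\{0,1\}$-valued, each $\widehat{f}_M(S)$ is a nonzero integer. I would begin from the standard fact that $\Doneway(G) = \lceil \log_2 N \rceil$, where $N$ is the number of distinct rows of $M_G$. Expanding $f(x \wedge y) = \sum_{S \in \mathcal{F}} \widehat{f}_M(S)\big(\prod_{i \in S} x_i\big)\big(\prod_{i \in S} y_i\big)$ shows that the row of $M_{f \circ \AND}$ indexed by $x$ depends on $x$ only through the downward trace $\mathcal{F}_x := \{S \in \mathcal{F} : S \subseteq \supp(x)\}$, and, the monomials $\prod_{i \in S} y_i$ being linearly independent over the reals, distinct traces give distinct rows; the same observation yields $\rank(M_{f \circ \AND}) = |\mathcal{F}| =: r$. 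As $\mathcal{F}_x$ is determined by and determines $\bigcup_{S \in \mathcal{F}_x} S$, and the sets $\bigcup_{S \in \mathcal{G}} S$ over $\mathcal{G} \subseteq \mathcal{F}$ are exactly the members of the union-closure $\mathcal{U}(\mathcal{F})$, we get $N = |\mathcal{U}(\mathcal{F})|$. Thus the theorem reduces to showing $|\mathcal{U}(\mathcal{F})| \le 2^{(1-\Omega(1))r}$ whenever $\mathcal{F}$ is the M\"obius support of a Boolean function and $r$ is at least an absolute constant (for small $r$ the bound $\Doneway(f \circ \AND) \le r$ makes this a finite check, modulo degenerate cases such as $f = \AND_S$, where $r = 1$ and $\Doneway = 1$).

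To bound $|\mathcal{U}(\mathcal{F})|$, I would view the realizable traces $\mathcal{R} := \{\mathcal{F}_T : T \subseteq [n]\}$ as a set system on the $r$-element ground set $\mathcal{F}$, so $N = |\mathcal{R}|$. A subset $\mathcal{A} \subseteq \mathcal{F}$ is shattered by $\mathcal{R}$ precisely when $\mathcal{A}$ is \emph{union-independent}, meaning every $S \in \mathcal{A}$ has a ``private'' coordinate lying in no other member of $\mathcal{A}$ (then $T := \bigcup \mathcal{A}'$ realizes the trace $\mathcal{A}'$ for each $\mathcal{A}' \subseteq \mathcal{A}$; the converse is equally direct). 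Hence the VC dimension of $\mathcal{R}$ equals the maximum size $\nu$ of a union-independent subfamily of $\mathcal{F}$, and Sauer--Shelah gives $|\mathcal{U}(\mathcal{F})| = |\mathcal{R}| \le \binom{r}{\le \nu}$, which is $2^{(1-\Omega(1))r}$ as soon as $\nu \le (\tfrac12 - \Omega(1)) r$. (Separately, $|\mathcal{U}(\mathcal{F})| \le 2^{\min(n,r)}$, which already settles $r = \omega(n)$.) So the whole theorem follows from the structural claim that \emph{the M\"obius support of a Boolean function with $r$ large has no union-independent subfamily of size $\ge (\tfrac12 + \Omega(1))r$} --- and this is where the Boolean hypothesis must enter.

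The Boolean input enters through the observation that $f(\mathbf{1}_T) = \sum_{S \in \mathcal{F}_T} \widehat{f}_M(S) \in \{0,1\}$ for every $T$, where $\mathbf{1}_T$ is the indicator vector of $T$. Given a union-independent $\mathcal{A} \subseteq \mathcal{F}$ with $V_0 := \bigcup \mathcal{A}$, restricting $f$ to the coordinates of $V_0$ (fixing the rest to $0$) yields a Boolean function $g$ whose M\"obius support is $\mathcal{F}_0 := \{S \in \mathcal{F} : S \subseteq V_0\} \supseteq \mathcal{A}$, with the same integral coefficients and with $\mathcal{A}$ still union-independent. In the extreme case $\mathcal{F}_0 = \mathcal{A}$, union-independence forces $\mathcal{A}$ to be an antichain, so $\widehat{g}_M(S_0) = g(\mathbf{1}_{S_0}) \in \{0,1\}$ for every $S_0 \in \mathcal{A}$, hence every coefficient equals $1$, whence $|\mathcal{A}| = g(\mathbf{1}_{V_0}) \le 1$. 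The plan is to make this rigidity quantitative: to show that any large union-independent $\mathcal{A}$ must be accompanied by at least $(1+\Omega(1))|\mathcal{A}|$ further M\"obius monomials inside $V_0$, via an iterative argument that in each round uses the $\{0,1\}$-constraints at carefully chosen test points $\mathbf{1}_T$ together with integrality (and a mild anticoncentration/pigeonhole input on the coefficients) to certify a constant fraction of new monomials, and then recurses on what remains.

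The step I expect to be the main obstacle is precisely this quantitative one: the naive use of Boolean-ness only yields $\nu \le r - O(1)$ (indeed $\nu = r \Rightarrow r \le 1$, but the rigidity degrades slowly), whereas we need the multiplicative gap $\nu < r/2$, i.e. a constant fraction of $\mathcal{F}$ must be certified as ``surplus'' at every stage rather than a bounded number of sets. This is where the fine structure of M\"obius supports of $\AND$-compositions has to be used. It is also consistent with the lower-bound picture: the strongest separation known --- the Addressing function, Corollary~\ref{cor: doneway vs rank addressing} --- only reaches $\rank^{\log_3 2} \ll \rank$, so there is ample room for the claimed bound, but no argument that merely removes an additive term from $\rank$ can succeed.
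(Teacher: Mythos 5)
Your setup is correct and independently rediscovers the paper's key reformulation: what you call the number of distinct downward traces $N$ is exactly the paper's ``M\"obius pattern complexity'' $\mpat(f)$, the identity $\Doneway(f\circ\AND)=\lceil\log N\rceil$ is the paper's Claim~\ref{claim: doneway equals log pattern}, and $\rank(M_{f\circ\AND})=|\mathcal{F}|=\spar(f)$ is Equation~\eqref{eqn: mobius sparsity rank}. Your characterization of shattered subfamilies of $\mathcal{R}$ as precisely the union-independent subfamilies of $\mathcal{F}$ is also correct, so the Sauer--Shelah route would indeed reduce the theorem to the structural claim that the VC dimension $\nu$ of the trace system satisfies $\nu\le(\tfrac12-\Omega(1))r$. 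That route is genuinely different from what the paper does: the paper squares $f$ and compares M\"obius coefficients (Claim~\ref{clm: mobius support structure}), extracting for every pair $S\ne T\in\S_f$ a one- or two-element ``partner'' set covering $S\cup T$, and then runs a direct iterative count of pattern extensions (Claim~\ref{claim: patcomp is never full}, Algorithm~\ref{algo: iteration generation}) that yields $\mpat(f)\le 2^{0.86\spar(f)+1}$ without ever passing through VC dimension.

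The gap is that the structural claim you reduce to is not proven. You sketch an ``iterative argument that in each round uses the $\{0,1\}$-constraints\ldots together with integrality (and a mild anticoncentration/pigeonhole input) to certify a constant fraction of new monomials, and then recurses,'' and you explicitly flag this as the main obstacle --- correctly so, because it is where all the work lies and it is not supplied. Your exact-case analysis ($\mathcal{F}_0=\mathcal{A}\Rightarrow|\mathcal{A}|\le 1$) is a far cry from the needed quantitative statement that the closure of any union-independent $\mathcal{A}$ inside $V_0$ contains $(2+\Omega(1))|\mathcal{A}|$ support sets, and it is not evident that this is even true: the paper's $\mpat(f)\le 2^{0.86 r}$ does \emph{not} imply $\nu\le 0.28r$, since Sauer--Shelah bounds only one direction, so the claim is logically independent of the paper's result and requires its own proof. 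There is also a minor sign slip in your final restatement (``no union-independent subfamily of size $\ge(\tfrac12+\Omega(1))r$'' is weaker than the $\nu\le(\tfrac12-\Omega(1))r$ you correctly identify as what Sauer--Shelah needs). In short: the reduction to a clean combinatorial statement is nice and novel, but the crucial lemma that would discharge it is missing, so the proposal as written does not establish the theorem.
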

We show that $\Doneway(f \circ \AND)$ is equal to the logarithm of a measure that we define in this work: the \emph{M\"obius pattern complexity} of $f$, which is the total number of distinct evaluations of the monomials in the M\"obius expansion of $f$ (see Section~\ref{subsec: expansions} for a formal definition of M\"obius expansion).
\begin{definition}[M\"obius pattern complexity]\label{defn: patcomp}
Let $f : \zone^n \to \zone$ be a Boolean function, and let $f = \sum_{S \in \S_f}\wh{f}(S)\AND_S$ be its M\"obius expansion. For an input $x \in \zone^n$, define the \emph{pattern of $x$} to be $\rbra{\AND_S(x)}_{S \in \S_f}$. Define the \emph{M\"obius pattern complexity} of $f$, denoted $\mpat(f)$, by $\mpat(f) := \abs{\cbra{P \in \zone^{\S_f} : P = \rbra{\AND_S(x)}_{S \in \S_f}~\textnormal{for some}~x \in \zone^n}}$.
\end{definition}
When clear from context, we refer to the M\"obius pattern complexity of $f$ just as the pattern complexity of $f$.

All of our results involving bounds for $\Doneway(f \circ \AND)$ use the above-mentioned equivalence between it and $\log(\mpat(f))$ (see Claim~\ref{claim: doneway equals log pattern}). We unravel interesting mathematical structure in the M\"obius supports of Boolean functions, and use them to bound their pattern complexity. We hope that pattern complexity will prove useful in future research.


\subsection{Organization}
We introduce the necessary preliminaries in Section~\ref{sec: prelims}. In Section~\ref{sec: IP composition} we prove our results regarding the Inner Product gadget (Theorem~\ref{thm:main} and Theorem~\ref{thm:detlift}). In Section~\ref{sec: naandt is not communication} we prove our results regarding the AND gadget (Theorem~\ref{thm: intro omb separation} and Theorem~\ref{thm: intro symmetric naandt communication}). In Appendix~\ref{app: addressing} we show some results regarding the Addressing function. We provide a combinatorial proof missing from the main text in Appendix~\ref{erdos}.


\section{Preliminaries}\label{sec: prelims}

All logarithms in this paper are taken base 2. We use the notation $[n]$ to denote the set $\{1, \ldots, n\}$. We often identify subsets of $[n]$ with their corresponding characteristic vectors in $\zone^n$. The view we take will be clear from context. Let $\sS \subseteq \{0,1\}^n$ be an arbitrary subset of the Boolean hypercube, and let $f:\sS \rightarrow \{0,1\}$ be a partial Boolean function. If $\sS=\{0,1\}^n$, then $f$ is said to be a total Boolean function. When not explicitly mentioned otherwise, we assume Boolean functions to be total.

\begin{definition}[Binary entropy]
For $p \in (0,1)$, the binary entropy of $p$, $\rH_\bin(p)$, is defined to be the Shannon entropy of a random variable taking two distinct values with probabilities $p$ and $1-p$.
\[\rH_\bin(p):=p \log \frac{1}{p} + (1-p) \log \frac{1}{1-p}.\]
\end{definition}

We define the Inner Product Modulo 2 function on $2b$ input bits, denoted $\IP$ (we drop the dependence of $\IP$ on $b$ for convenience; the value of $b$ will be clear from context).

\begin{definition}[Inner Product Modulo 2]\label{defn: ip}
For an integer $b > 0$, define the \emph{Inner Product Modulo 2} function, denoted $\IP : \zone^b \times \zone^b \to \zone$ by $\IP(x_1, \dots, x_b, y_1, \dots, y_b) = \oplus_{i \in [b]}(\AND(x_i, y_i))$.
\end{definition}

If $f$ is a partial function, so is $f \circ \IP$. We also define the following functions.

\begin{definition}[Odd-Max-Bit]\label{defn: omb}
Define the \emph{Odd-Max-Bit} function,\footnote{In the literature, $\OMB_n$ is typically defined with a 1 in the max instead of 0. That function behaves very differently from our $\OMB_n$. For example, it is known that even the \emph{weakly unbounded-error} communication complexity of $\OMB_n \circ \AND$ (under the standard definition of $\OMB_n$) is polynomially large in $n$~\cite{BVW07}. In contrast, it is easy to show that even the deterministic one-way communication complexity of $\OMB_n \circ \AND$ equals $\lc\log(n+1)\rc$ with our definition (see Theorem~\ref{thm: omb and exponential separation}).} denoted $\OMB_n : \zone^n \to \zone$, by $\OMB_n(x) = 1$ if $\max\bra{i \in [n] : x_i=0}$ is odd, and $\OMB_n(x) = 0$ otherwise. Define $\OMB_n(1^n) = 0$.
\end{definition}

\begin{definition}[Addressing]\label{defn: addressing}
For an integer $n \geq 2$ that is a power of 2, define the \emph{Addressing} function, denoted $\ADDR_n : \zone^{\log n + n} \to \zone$, by
\[
\ADDR_n(x, y) = y_{\bin(x)},
\]
where $\bin(x)$ denotes the integer in $[n]$ whose binary representation is $x$. We refer to the $x$-variables as \emph{addressing variables} and the $y$-variables as \emph{target variables}.
\end{definition}

\subsection{Decision Trees and Their Variants}
For a partial Boolean function $f:\sS \rightarrow \zone$, the deterministic non-adaptive query complexity (alternatively the non-adaptive decision tree complexity) $\Ddt(f)$ is the minimum integer $k$ such that the following is true: there exist $k$ indices $i_1, \ldots, i_k \in [n]$, such that for every Boolean assignment $a_{i_1}, \ldots, a_{i_k}$ to the input variables $x_{i_1}, \ldots, x_{i_k}$, $f$ is constant on $\sS \cap \{x \in \{0,1\}^n \mid \forall j=1, \ldots, k, x_{i_j}=a_{i_j}\}$. Equivalently $\Ddt(f)$ is the minimum number of variables such that $f$ can be expressed as a function of these variables. It is easy to see that if $f$ is a total function that depends on all input variables, then $\Ddt(f)=n$.

\begin{definition}[Non-adaptive parity decision tree complexity]\label{defn: napdt}
Define the \emph{non-adaptive parity decision tree complexity} of $f : \zone^n \to \zone$, denoted by $\NAPDT(f)$, to be the minimum number of parities such that $f$ can be expressed as a function of these parities.
In other words, the non-adaptive parity decision tree complexity of $f$ equals the minimal number $k$ for which there exists $\S = \bra{\bra{S_1, \dots, S_k} : S_i \subseteq [n]~\text{for all}~i \in [k]}$ such that the function value $f(x)$ is determined by the values $\bra{\oplus_{j \in S_{i}}x_j : i \in [k]}$ for all $x \in \zone^n$.
\end{definition}

\begin{definition}[Non-adaptive AND decision tree complexity]\label{defn: naandt}
Define the \emph{\textnaandt~complexity} of $f : \zone^n \to \zone$, denoted by $\NAANDT(f)$, to be the minimum number of monomials such that $f$ can be expressed as a function of these monomials.
In other words, the \textnaandt~complexity of $f$ equals the minimal number $k$ for which there exists $\S = \bra{\bra{S_1, \dots, S_k} : S_i \subseteq [n]~\text{for all}~i \in [k]}$ such that the function value $f(x)$ is determined by the values $\bra{\AND_{S_{i}}(x) : i \in [k]}$ for all $x \in \zone^n$. We refer to such a set $\S$ as an \emph{NAADT basis} for $f$.
\end{definition}

\begin{definition}[Randomized non-adaptive AND decision tree complexity]\label{defn: rnaandts}
A \textrnaandt~$T$ computing $f$ is a distribution over \textnaandt s with the property that $\Pr[T(x) = f(x) \geq 2/3]$ for all $x \in \zone^n$. The cost of $T$ is the maximum cost of a \textnaandt~in its support.
Define the \emph{\textrnaandt~complexity} of $f : \zone^n \to \zone$, denoted by $\RNAANDT(f)$, to be the minimum cost of a \textrnaandt~that computes $f$.
\end{definition}
We refer the reader to~\cite{NC00} for the basics of quantum computing.
\begin{definition}[Quantum non-adaptive AND decision tree complexity]\label{def: qnaandt}
A \textqnaandt~of cost $c$ is a query algorithm that works with a state space $\ket{S_1, \dots, S_c}\ket{b}\ket{w}$, where each $S_j \subseteq [n]$, $b \in \zone^c$ and the last register captures a workspace of an arbitrary dimension. It is specified by a starting state $\ket{\psi}$ and a projective measurement $\cbra{\Pi,I-\Pi}$. For an input $x \in \zone^n$, the action of the non-adaptive query oracle $O_x^{\otimes c}$ is captured by its action on the basis states, described below.
\[
O_x^{\otimes c} \ket{S_1, \dots, S_c}\ket{b_1, \dots, b_c}\ket{w} \mapsto \ket{S_1, \dots, S_c}\ket{b_1 \oplus \AND_{S_1}(x), \dots, b_c \oplus \AND_{S_c}(x)}\ket{w}.
\]
We use $O_x$ to refer to this oracle since $c$ is already unambiguously determined by the state space.
The algorithm accepts $x$ with probability $\| \Pi O_x \ket{\psi} \|^2$.

Define the \emph{\textqnaandt~complexity} of $f : \zone^n \to \zone$, denoted by $\QNAANDT(f)$, to be the minimum cost of a \textqnaandt~that outputs the correct value of $f(x)$ with probability at least $2/3$ for all $x \in \zone^n$.
\end{definition}

The quantum non-adaptive query complexity, denoted $\Qdt$, is defined similarly, the only difference being that the sets $S_1, \dots, S_c$ are restricted to be singletons. Montanaro~\cite{Mon10} observed that $\Qdt(f) = \Omega(n)$ for all total Boolean functions $f : \zone^n \to \zone$ that depend on all input bits. Our proofs of Theorem~\ref{thm: quantum NAANDT OMB} and Claim~\ref{claim: addr qnaandt large} use ideas from their proof.

We require the following well-known fact about quantum algorithms. \begin{claim}\label{claim: final states far}
Let $\mathcal{A}$ be a two-outcome quantum algorithm computing a (possibly partial) Boolean function $f : \sS \to \zone$ to error $\eps$, and let $\cbra{\Pi, I - \Pi}$ be the final measurement it performs. Let $\ket{\psi_z}$ denote the final state of the algorithm on input $z \in \sS$ before measurement. Then for any $x \in f^{-1}(1)$ and $y \in f^{-1}(0)$,
\[
\|\ket{\psi_x} - \ket{\psi_y}\|^2 \geq 2 - 4\sqrt{\eps(1 - \eps)}.
\]
\end{claim}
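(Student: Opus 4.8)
The plan is to convert the $\eps$-correctness of $\mathcal{A}$ into bounds on how much the final states $\ket{\psi_x}$ and $\ket{\psi_y}$ overlap with the range of $\Pi$, and then to bound $\langle \psi_x \mid \psi_y \rangle$ from above. Since $\mathcal{A}$ outputs $1$ on input $z$ with probability $\|\Pi\ket{\psi_z}\|^2$, correctness to error $\eps$ gives $\|\Pi\ket{\psi_x}\|^2 \ge 1-\eps$ for every $x \in f^{-1}(1)$, and $\|\Pi\ket{\psi_y}\|^2 \le \eps$, equivalently $\|(I-\Pi)\ket{\psi_y}\|^2 \ge 1-\eps$, for every $y \in f^{-1}(0)$.

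Next I would expand $\ket{\psi_x} = \Pi\ket{\psi_x} + (I-\Pi)\ket{\psi_x}$ and likewise for $\ket{\psi_y}$; since $\Pi$ is a projector, $\Pi(I-\Pi)=0$, so the two cross terms vanish and $\langle \psi_x \mid \psi_y \rangle = \langle \Pi\psi_x \mid \Pi\psi_y\rangle + \langle (I-\Pi)\psi_x \mid (I-\Pi)\psi_y\rangle$. Applying Cauchy--Schwarz to each summand yields $|\langle \psi_x\mid\psi_y\rangle| \le \|\Pi\psi_x\|\,\|\Pi\psi_y\| + \|(I-\Pi)\psi_x\|\,\|(I-\Pi)\psi_y\|$. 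Writing $a=\|\Pi\psi_x\|$, $b=\|(I-\Pi)\psi_x\|$, $c=\|\Pi\psi_y\|$, $d=\|(I-\Pi)\psi_y\|$, we have $a^2+b^2 = c^2+d^2 = 1$ with $b^2 \le \eps$ and $c^2 \le \eps$, so it remains to show $ac+bd \le 2\sqrt{\eps(1-\eps)}$. Parametrising $(a,b) = (\cos\alpha,\sin\alpha)$ and $(c,d) = (\sin\beta,\cos\beta)$ with $\alpha,\beta \in [0,\arcsin\sqrt{\eps}]$ turns $ac+bd$ into $\sin(\alpha+\beta)$; since $\eps \le 1/2$ forces $\alpha+\beta \le \pi/2$, this is maximised at $\alpha=\beta=\arcsin\sqrt{\eps}$, where it equals $\sin(2\arcsin\sqrt{\eps}) = 2\sqrt{\eps(1-\eps)}$.

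Finally, $\|\ket{\psi_x} - \ket{\psi_y}\|^2 = 2 - 2\,\mathrm{Re}\langle\psi_x\mid\psi_y\rangle \ge 2 - 2|\langle\psi_x\mid\psi_y\rangle| \ge 2 - 4\sqrt{\eps(1-\eps)}$, which is the claimed bound. The only non-routine point is the optimisation $ac+bd \le 2\sqrt{\eps(1-\eps)}$: a crude application of Cauchy--Schwarz together with $\|\Pi\psi_x\| \le 1$ only gives the weaker bound $2\sqrt{\eps}$, so one genuinely has to use that $(a,b)$ and $(c,d)$ are \emph{unit} vectors, either via the trigonometric substitution above or by a short Lagrange-multiplier/algebraic argument; the remaining steps are bookkeeping.
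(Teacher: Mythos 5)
Your proof is correct, and it takes a genuinely different route from the paper's. The paper works directly with $\|\ket{\psi_x}-\ket{\psi_y}\|^2$: it decomposes this quantity as $\|\Pi(\ket{\psi_x}-\ket{\psi_y})\|^2+\|(I-\Pi)(\ket{\psi_x}-\ket{\psi_y})\|^2$ (using orthogonality of the two projectors) and then applies the reverse triangle inequality $\|\Pi\ket{\psi_x}-\Pi\ket{\psi_y}\|\ge\bigl|\|\Pi\ket{\psi_x}\|-\|\Pi\ket{\psi_y}\|\bigr|$ to each term; plugging in $\|\Pi\ket{\psi_x}\|\ge\sqrt{1-\eps}$, $\|\Pi\ket{\psi_y}\|\le\sqrt{\eps}$ (and the complementary bounds for $I-\Pi$) immediately yields $2(\sqrt{1-\eps}-\sqrt{\eps})^2=2-4\sqrt{\eps(1-\eps)}$, with no optimisation step needed. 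You instead pass to the overlap $\langle\psi_x\mid\psi_y\rangle$, bound it by Cauchy--Schwarz after the same orthogonal decomposition, and then solve the constrained maximisation of $ac+bd$ by the trigonometric substitution. Both proofs use identical hypotheses; the paper's is shorter because the reverse triangle inequality ``automatically'' picks out the extremal configuration, whereas your version makes the source of the value $2\sqrt{\eps(1-\eps)}$ more explicit as an optimum. Your observation that naive Cauchy--Schwarz only gives $2\sqrt{\eps}$ is a fair warning, and your trigonometric argument closing that gap is correct (the condition $\eps<1/2$ ensures $\alpha+\beta<\pi/2$, so $\sin(\alpha+\beta)$ is monotone on the relevant range). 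One small point worth stating explicitly: the identity $\|\ket{\psi_x}-\ket{\psi_y}\|^2=2-2\,\mathrm{Re}\langle\psi_x\mid\psi_y\rangle$ relies on $\ket{\psi_x},\ket{\psi_y}$ being unit vectors, which holds since they are final states of a quantum algorithm.
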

We provide a proof below for completeness.
\begin{proof}
The correctness of the algorithm guarantees that $\|\Pi\ket{\psi_x}\|^2 \geq 1 - \eps$ and $\|\Pi\ket{\psi_y}\|^2 \leq \eps$. Thus we have
\begin{align*}
    \|\ket{\psi_x} - \ket{\psi_y}\|^2 & = \|\Pi(\ket{\psi_x} - \ket{\psi_y})\|^2 + \|(I - \Pi)(\ket{\psi_x} - \ket{\psi_y})\|^2 \tag*{since $\Pi$ and $I - \Pi$ are orthogonal projectors}\\
    & \geq \rbra{\|\Pi\ket{\psi_x}\| - \|\Pi\ket{\psi_y}\|}^2 + \rbra{\|(I - \Pi)\ket{\psi_x}\| - \|(I - \Pi)\ket{\psi_y}\|}^2 \tag*{by the triangle inequality}\\
    & \geq 2(\sqrt{1 - \eps} - \sqrt{\eps})^2 \tag*{by assumption}\\
    & = 2 - 4\sqrt{\eps(1 - \eps)}.
\end{align*}
\end{proof}

\subsection{M\"obius and Fourier Expansions of Boolean Functions}\label{subsec: expansions}
Every Boolean function $f : \zone^n \to \zone$ has a unique expansion as $f = \sum_{S \subseteq [n]}\wh{f}(S) \AND_S$, where $\AND_S$ denotes the AND of the input variables in $S$ and each $\wh{f}(S)$ is a real number.
We refer to the functions $\AND_S$ as \emph{monomials}, the expansion as the \emph{M\"obius expansion} of $f$, and the real coefficients $\wh{f}(S)$ as the M\"obius coefficients of $f$.
It is known~\cite{Bei93} that the M\"obius coefficients can be expressed as
$\wh{f}(S) = \sum_{X \subseteq S}(-1)^{|S \setminus X|}f(X)$. Define the \emph{M\"obius support} of $f$, denoted $\S_f$, to be the set $\S_f := \bra{S \subseteq [n] : \wh{f}(S) \neq 0}$.
Define the \emph{M\"obius sparsity} of $f$, denoted $\spar(f)$, to be $\spar(f) := \left\lvert \S_f\right\rvert$.

Every Boolean function $f: \zone^n \rightarrow \mathbb{R}$ can also be uniquely written as $f= \sum\limits_{S \subseteq [n]}\widehat{f}(S)(-1)^{\oplus_{j \in S}x_j}$. This representation is called the \emph{Fourier expansion} of $f$ and the real values $\widehat{f}(S)$ are called the Fourier coefficients of $f$. The Fourier sparsity of $f$ is defined to be number of non-zero Fourier coefficients of $f$.
Sanyal~\cite{San19} showed the following relationship between non-adaptive parity decision complexity of a Boolean function and its Fourier sparsity.
\begin{theorem}[{\cite{San19}}]\label{thm: san19}
Let $f : \zone^n \to \pmone$ be a Boolean function with Fourier sparsity $r$. Then $\NAPDT(f) = O(\sqrt{r} \log r)$.
\end{theorem}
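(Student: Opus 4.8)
A non-adaptive parity decision tree of cost $k$ for $f$ amounts to a choice of parities $\ell_1,\dots,\ell_k\in\mathbb{F}_2^n$ together with a map sending $(\langle\ell_1,x\rangle,\dots,\langle\ell_k,x\rangle)$ to $f(x)$. Expanding that combining map in its own Fourier expansion and substituting $\langle\ell_i,x\rangle$ into its $i$-th argument, one sees that such a map exists if and only if every $S$ with $\widehat f(S)\neq 0$ lies in the $\mathbb{F}_2$-span of $\{\ell_1,\dots,\ell_k\}$. Thus $\NAPDT(f)=\dim_{\mathbb{F}_2}\mathrm{span}(\supp(\widehat f))=:D$, the \emph{Fourier dimension} of $f$, and the theorem becomes the claim that any Boolean function of Fourier sparsity $r$ has Fourier dimension $O(\sqrt r\log r)$.

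\textbf{Step 2: iterated sparsity reduction.}
After an invertible $\mathbb{F}_2$-linear change of variables (which merely permutes the Fourier support, preserving sparsity and dimension), I may assume $D=n$, that is, $\supp(\widehat f)$ spans $\mathbb{F}_2^n$. I would then grow the parity set greedily: having committed to $\ell_1,\dots,\ell_j$, consider $f$ restricted to $\{x:\langle\ell_i,x\rangle=0\text{ for all }i\leq j\}$; its Fourier support is the image of $\supp(\widehat f)$ under the quotient $\mathbb{F}_2^n\to\mathbb{F}_2^n/\mathrm{span}(\ell_1,\dots,\ell_j)$ (possibly shrunk further by cancellations), and the construction finishes precisely when this restricted function becomes constant. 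The engine is a structural lemma: \emph{if a Boolean function $g$ has Fourier sparsity $s\geq 2$, then some single parity restriction of $g$ decreases its Fourier sparsity by $\Omega(\sqrt s/\log s)$}. Granting the lemma, the sparsity falls from $r$ down to $1$ after committing $O(\sqrt r\log r)$ parities (the recursion $u_{t+1}\leq u_t-\Omega(1/\log r)$ for $u_t=\sqrt{s_t}$), which bounds $D$.

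\textbf{Step 3: the structural lemma, and the main obstacle.}
This step is the heart of the proof, and it is where being Boolean is essential: for an arbitrary sparse set the lemma is false, since if $\supp(\widehat g)$ is a Sidon set (all pairwise XORs distinct) then no parity restriction changes the sparsity at all, and an $\mathbb{F}_2$-basis is such a set. The facts that forbid this for Boolean $g$ are Parseval, $\sum_S\widehat g(S)^2=1$, together with a known granularity lemma: every Fourier coefficient of an $s$-sparse $\{-1,1\}$-valued function is an integer multiple of $2^{1-\lceil\log s\rceil}$, hence has magnitude $\geq 1/s$ when nonzero, and all the coefficients sit on only $O(\log s)$ dyadic scales. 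The plan is to combine these: over the candidate parities $\ell=a\oplus b$ with $a,b\in\supp(\widehat g)$, weight each pair by $|\widehat g(a)\widehat g(b)|$ (whose total over all pairs is controlled by $\|\widehat g\|_1^2\leq s$ via Cauchy--Schwarz and Parseval), and use the granularity of the coefficients to argue that this weight cannot be spread too thinly, so that some $\ell$ either collapses enough support elements into pairs or makes enough merged coefficients cancel, dropping the sparsity by the claimed amount. I expect the real difficulty to be exactly this counting --- extracting a $\sqrt s$-scale (rather than $O(1)$) decrease per parity out of the granularity and Parseval constraints; Steps 1 and 2 and the granularity lemma itself are comparatively routine.
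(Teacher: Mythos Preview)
The paper does not contain a proof of this theorem: it is quoted from \cite{San19} as a known result, with no argument supplied. So there is no ``paper's own proof'' to compare against here.

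That said, your outline is essentially the strategy of the original paper \cite{San19}. Step~1 (the equality $\NAPDT(f)=\dim_{\mathbb{F}_2}\mathrm{span}(\supp(\widehat f))$) is correct and standard. Step~2, the iterative sparsity-reduction framework, is also the right shape. Where your proposal is genuinely incomplete is Step~3: you have correctly isolated the structural lemma as the crux, named the two ingredients (granularity of Fourier coefficients of sparse Boolean functions, and the Parseval/Cauchy--Schwarz bound $\|\widehat g\|_1\le\sqrt{s}$), and then stopped. The original argument does hinge on exactly these ingredients, but the way they are combined is not quite the averaging-over-$\ell=a\oplus b$ picture you sketch; rather, one shows that a large Fourier coefficient (of magnitude $\geq 1/\sqrt{s}$, guaranteed by Parseval) together with granularity forces many other coefficients to collide under a suitable restriction, and the quantitative bookkeeping is what produces the $\sqrt{s}$ savings. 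Your plan would need a concrete argument here before it counts as a proof; as it stands it is a correct roadmap with the main lemma left as a conjecture.
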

This theorem is tight up to the logarithmic factor, witnessed by the Addressing function. We now note some simple observations relating the \textnaandt~complexity of Boolean functions and their M\"obius expansions.

\begin{claim}\label{claim: naandt basis monomial product}
Let $f : \zone^n \to \zone$ be a Boolean function and let $\S = \bra{S_1, \dots, S_k}$ be a NAADT basis for $f$. Then, every monomial in the M\"obius support of $f$ equals $\prod_{i \in T}\AND_{S_i}$, for some $T \subseteq [k]$. 
\end{claim}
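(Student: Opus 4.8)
The plan is to combine the defining property of a NAADT basis with the uniqueness of the M\"obius expansion, reducing everything to the elementary fact that a product of monomials is again a monomial. Since $\S = \{S_1, \dots, S_k\}$ is a NAADT basis for $f$, Definition~\ref{defn: naandt} gives a function $g : \zone^k \to \zone$ such that $f(x) = g\bigl(\AND_{S_1}(x), \dots, \AND_{S_k}(x)\bigr)$ for every $x \in \zone^n$.

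Next I would write $g$ in its own M\"obius expansion, $g(z_1, \dots, z_k) = \sum_{T \subseteq [k]} \wh{g}(T) \prod_{i \in T} z_i$, and substitute $z_i = \AND_{S_i}(x)$. The key observation is that each $\AND_{S_i}$ takes values in $\zone$ on $\zone^n$, so $\prod_{i \in T}\AND_{S_i}(x) = \AND_{\bigcup_{i \in T}S_i}(x)$ as functions on $\zone^n$. Substituting therefore yields
\[
f(x) = \sum_{T \subseteq [k]} \wh{g}(T)\,\AND_{\bigcup_{i \in T}S_i}(x),
\]
which expresses $f$ as a real linear combination of monomials, each of the form $\prod_{i \in T}\AND_{S_i}$ for some $T \subseteq [k]$.

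To finish, I would group the terms on the right-hand side according to the index set $\bigcup_{i \in T}S_i$ and invoke the uniqueness of the M\"obius expansion of $f$: every monomial with a nonzero M\"obius coefficient of $f$ must appear among the $\AND_{\bigcup_{i \in T}S_i}$, so $\S_f \subseteq \bigl\{\bigcup_{i \in T}S_i : T \subseteq [k]\bigr\}$, which is precisely the claim. The one subtlety worth flagging is that collecting terms can produce cancellations, so one only obtains a containment of supports rather than a relation between coefficients; but containment is exactly what is asserted, so this is harmless. I do not anticipate any genuine obstacle here — the statement is essentially bookkeeping once the observation $\AND_{S_i}\cdot\AND_{S_j} = \AND_{S_i \cup S_j}$ and the uniqueness of M\"obius expansions are in hand.
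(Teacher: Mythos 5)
Your proof is correct and follows essentially the same route as the paper: express $f$ as a polynomial in the $\AND_{S_i}$'s, observe (using idempotence) that every monomial that arises is of the form $\prod_{i\in T}\AND_{S_i} = \AND_{\bigcup_{i\in T}S_i}$, and conclude via uniqueness of the M\"obius expansion. The only cosmetic difference is that you expand the outer function $g$ directly in its M\"obius basis, whereas the paper first writes $f$ in the Lagrange/indicator form $\sum_T b_T\prod_{i\in T}\AND_{S_i}\prod_{j\notin T}(1-\AND_{S_j})$ and then expands; both land in the same place.
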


\begin{proof}
Since $\S$ is an NAADT basis for $f$, the values of $\bra{\AND_{S_i} : i \in [k]}$ determine the value of $f$. That is, we can express $f$ as
\[
f = \sum_{T \subseteq [k]}b_T \prod_{i \in T}\AND_{S_i}\prod_{j \notin T}(1 - \AND_{S_j}),
\]
for some values of $b_T \in \zone$. 
Expanding this expression only yields monomials that are products of $\AND_{S_i}$'s from $\S$. The claim now follows since the M\"obius expansion of a Boolean function is unique.
\end{proof}

\begin{claim}\label{claim: trivial bounds}
Let $f : \zone^n \to \zone$ be a Boolean function with M\"obius sparsity $r$. Then $\log r \leq \NAANDT(f) \leq r$.
\end{claim}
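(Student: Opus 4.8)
The plan is to handle the two inequalities separately, each in a couple of lines; neither requires real work, so there is no serious obstacle here — the claim is essentially a bookkeeping consequence of the definition of $\NAANDT$ together with Claim~\ref{claim: naandt basis monomial product}.

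For the upper bound $\NAANDT(f) \leq r$, I would simply take the M\"obius support $\S_f$ itself as the candidate NAADT basis. Writing $f = \sum_{S \in \S_f}\wh{f}(S)\AND_S$, the tuple of values $\rbra{\AND_S(x)}_{S \in \S_f}$ clearly determines $f(x)$, since $f(x)$ is an explicit (real) linear combination of these values. As $|\S_f| = r$, this exhibits an NAADT basis of size $r$, so $\NAANDT(f) \leq r$.

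For the lower bound $\log r \leq \NAANDT(f)$, let $k = \NAANDT(f)$ and let $\S = \bra{S_1, \dots, S_k}$ be an NAADT basis for $f$. By Claim~\ref{claim: naandt basis monomial product}, every monomial $\AND_S$ with $S \in \S_f$ satisfies $\AND_S = \prod_{i \in T}\AND_{S_i}$ for some $T \subseteq [k]$. The one observation to make explicit is that $\prod_{i \in T}\AND_{S_i} = \AND_{\bigcup_{i \in T}S_i}$, so each element of $\S_f$ is of the form $\bigcup_{i \in T}S_i$ for some $T \subseteq [k]$. Since there are at most $2^k$ subsets $T \subseteq [k]$, we get $r = |\S_f| \leq 2^k$, i.e.\ $\log r \leq k = \NAANDT(f)$. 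Combining the two bounds completes the proof.
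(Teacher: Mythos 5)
Your proof is correct and follows essentially the same route as the paper: the Möbius support $\S_f$ serves as an NAADT basis of size $r$ for the upper bound, and Claim~\ref{claim: naandt basis monomial product} together with the $2^k$ count of subsets $T \subseteq [k]$ gives the lower bound. The only cosmetic difference is that you spell out $\prod_{i\in T}\AND_{S_i} = \AND_{\bigcup_{i\in T}S_i}$, which the paper leaves implicit.
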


\begin{proof}
The upper bound $\NAANDT(f) \leq r$ follows from the fact that knowing the values of all ANDs in the M\"obius support of $f$ immediately yields the value of $f$ by plugging these values in the M\"obius expansion of $f$. That is, the M\"obius support of $f$ acts as an NAADT basis for $f$.

For the lower bound, let $\NAANDT(f) = k$, and let $\S = \bra{S_1, \dots, S_k}$ be an NAADT basis for $f$. Claim~\ref{claim: naandt basis monomial product} implies that every monomial in the M\"obius expansion of $f$ is a product of some of these $\AND_{S_i}$'s. Thus, the M\"obius sparsity of $f$ is at most $2^k$, yielding the required lower bound.
\end{proof}

\subsection{Communication Complexity}
The standard model of two-party communication complexity was introduced by Yao~\cite{Yao79}. In this model, there are two parties, say Alice and Bob, each with inputs $x, y \in \zone^n$. They wish to jointly compute a function $F(x, y)$ of their inputs for some function $F : \U \to \zone$ that is known to them, where $\U$ is a subset of $\zone^n \times \zone^n$. They use a communication protocol agreed upon in advance. The cost of the protocol is the number of bits exchanged in the worst case (over all inputs). Alice and Bob are required to output the correct answer for all inputs $(x, y) \in \U$.
The communication complexity of $F$ is the best cost of a protocol that computes $F$, and we denote it by $\Dcc(F)$. See, for example,~\cite{KN97}, for an introduction to communication complexity.

In a deterministic one-way communication protocol, Alice sends a message $m(x)$ to Bob. Then Bob outputs a bit depending on $m(x)$ and $y$.  The complexity of the protocol is the maximum number of bits a message contains for any input $x$ to Alice. In a randomized one-way protocol, the parties share some common random bits $\R$. Alice's message is a function of $x$ and $\R$. Bob's output is a function of $m(x), y$ and $\R$. The protocol $\Pi$ is said to compute $F$ with error $\epsilon \in (0, 1/2)$ if for every $(x,y) \in \U$, the probability over $\R$ of the event that Bob's output equals $F(x,y)$ is at least $1-\epsilon$. The cost of the protocol is the maximum number of bits contained in Alice's message for any $x$ and $\R$.
In the one-way quantum model, Alice sends Bob a quantum message, after which Bob performs a projective measurement and outputs the measurement outcome. Depending on the model of interest, Alice and Bob may or may not share an arbitrary input-independent entangled state for free. We refer the reader to~\cite{Wol02} for an introduction to quantum communication complexity. As in the randomized setting, a protocol $\Pi$ computes $F$ with error $\eps$ if $\Pr[\Pi(x, y) \neq f(x, y)] \leq \eps$ for all $(x, y) \in \U$.

The deterministic ($\eps$-error randomized, $\eps$-error quantum, $\eps$-error quantum with entanglement, respectively) one-way communication complexity of $F$, denoted by $\Doneway(\cdot)$ ($\Ronewaywitherror{\epsilon}(\cdot)$, $\Qonewaywitherror{\eps}(\cdot)$, $\Qentonewaywitherror{\eps}(\cdot)$, respectively), is the minimum cost of any deterministic ($\epsilon$-error randomized, $\eps$-error quantum, $\eps$-error quantum with entanglement, respectively) one-way communication protocol for $F$.

Total functions $F$ whose domain is $\zone^n \times \zone^n$ induce a communication matrix $M_F$ whose rows and columns are indexed by strings in $\zone^n$, and the $(x, y)$'th entry equals $F(x, y)$. It is known that
\begin{equation}\label{eqn: logrank bounds}
\log\rank(M_F) \leq \Dcc(F) \leq O(\sqrt{\rank(M_F)} \log\rank(M_F)),
\end{equation}
where $\rank(\cdot)$ denotes real rank. The first inequality is well known (see, for instance~\cite{KN97}), and the second inequality was shown by Lovett~\cite{Lov16}. One of the most famous conjectures in communication complexity is the log-rank conjecture, due to Lov{\'{a}}sz and Saks~\cite{LS88}, that proposes that the communication complexity of any Boolean function is polylogarithmic in its rank, i.e.~the first inequality in Equation~\eqref{eqn: logrank bounds} is always tight up to a polynomial dependence.

Buhrman and de Wolf~\cite{BW01} observed that the M\"obius sparsity of a Boolean function $f$ equals the rank of the communication matrix of $f \circ \AND$. That is, for all Boolean functions $f: \zone^n \to \zone$,
\begin{equation}\label{eqn: mobius sparsity rank}
    \spar(f) = \rank(M_{f \circ \AND}).
\end{equation}
In view of the first inequality in Equation~\eqref{eqn: logrank bounds}, this yields
\begin{equation}\label{eqn: log mobius sparsity communication lower bound}
    \Dcc(f \circ \AND) \geq \log(\spar(f)).
\end{equation}

We require the definition of the Vapnik-Chervonenkis (VC) dimension~\cite{VC}.
\begin{definition}[VC-dimension]\label{defn: VC}
Consider a function $F : \zone^n \times \zone^n \to \zone$. A subset of columns $C$ of $M_F$ is said to be \emph{shattered} if all of the $2^{|C|}$ patterns of 0's and 1's are attained by some row of $M_F$ when restricted to the columns $C$.
The \emph{VC-dimension} of a function $F : \zone^n \times \zone^n$, denoted $\VC(F)$, is the maximum size of a shattered subset of columns of $M_F$.
\end{definition}

Klauck~\cite{Kla00} showed that the one-way quantum communication complexity of a function $F$ is bounded below by the VC-dimension of $F$.
\begin{theorem}[{\cite[Theorem 3]{Kla00}}]\label{thm: klauck}
Let $F : \zone^n \times \zone^n \to \zone$ be a Boolean function. Then,
$\Qonewaywitherror{\epsilon}(F) \geq (1-\rH_\bin(\epsilon))\VC(F)$ and $\Qentonewaywitherror{\epsilon}(F) \geq (1-\rH_\bin(\epsilon))\VC(F)/2$.
\end{theorem}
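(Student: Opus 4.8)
The plan is to first convert the theorem into a purely combinatorial statement about the M\"obius support. By Claim~\ref{claim: doneway equals log pattern} we have $\Doneway(f \circ \AND) = \log \mpat(f)$, and by Equation~\eqref{eqn: mobius sparsity rank} we have $\rank(M_{f \circ \AND}) = \spar(f) = |\S_f| =: r$. So it suffices to show $\log \mpat(f) \le (1-\Omega(1))\, r$, i.e.\ $\mpat(f) \le 2^{(1-c)r}$ for some absolute constant $c>0$ (for $r$ above an absolute constant; the remaining cases are absorbed into $\Omega(\cdot)$). Now write the pattern of an input $x$ as $D_A$ with $A = \supp(x)$, where $D_A := \cbra{S \in \S_f : S \subseteq A}$. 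Each $D_A$ is a \emph{down-set} of the poset $(\S_f, \subseteq)$: if $S \in D_A$ and $S' \in \S_f$ with $S' \subseteq S$, then $S' \subseteq A$, so $S' \in D_A$. Hence $\mpat(f)$ is at most the number of order ideals (down-sets) of $(\S_f, \subseteq)$, and it is enough to bound this by $2^{(1-c)r}$.

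Next I would reduce this ideal count to a statement about antichains. If the comparability graph of $(\S_f, \subseteq)$ contains $m$ pairwise-disjoint comparable pairs, then an order ideal restricted to the $2m$ vertices in these pairs has at most $3$ possibilities per pair (in a down-set the pattern ``larger element in, smaller element out'' is forbidden) and at most $2^{r-2m}$ possibilities elsewhere, so the number of ideals is at most $3^m 2^{r-2m} = (3/4)^m 2^r$. Moreover, the vertices left uncovered by a \emph{maximal} matching in the comparability graph are pairwise incomparable (otherwise the matching could be extended), hence form an antichain; so if the largest antichain of $\S_f$ has size $\alpha$ then there is a matching of size $m \ge (r-\alpha)/2$. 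Combining, if $\alpha \le (1-\delta)r$ for an absolute constant $\delta>0$, then the number of ideals is at most $(3/4)^{\delta r/2} 2^r = 2^{(1-c)r}$, as needed. Thus everything reduces to the structural claim: \emph{the M\"obius support of any Boolean function is never ``almost an antichain'',} i.e.\ its largest antichain has size at most $(1-\delta)\,|\S_f|$ once $|\S_f|$ exceeds an absolute constant.

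This structural claim is the heart of the argument and the step I expect to be the main obstacle. The leverage is that $f$ is $\{0,1\}$-valued, which constrains the M\"obius coefficients via the identity $\wh{f}(U) = \sum_{S \cup T = U}\wh{f}(S)\wh{f}(T)$ coming from $f^2 = f$ (equivalently, via $f(A) = \sum_{S \subseteq A}\wh{f}(S) \in \zone$ for every $A \subseteq [n]$, and via the fact that minimal monomials of $\S_f$ have coefficient $1$). The key mechanism is a ``union blow-up'': if $A, A' \in \S_f$ are incomparable, then evaluating $f$ on the input with support $A \cup A'$ gives $f(A \cup A') = \sum_{S \in \S_f,\, S \subseteq A \cup A'}\wh{f}(S) \in \zone$, and when the relevant coefficients do not cancel this forces extra monomials of $\S_f$ to lie strictly inside $A \cup A'$. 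The plan is to amortize this over a hypothetical large antichain $\mathcal A \subseteq \S_f$ of size $k$: the forced extra monomials must number $\Omega(k)$ beyond $\mathcal A$ itself, contradicting $k > (1-\delta)r$. The delicate points are (i) controlling cancellations among coefficients of the same sign, and (ii) preventing a single extra monomial from being credited to too many antichain pairs at once; I would address these by localizing the blow-up, e.g.\ restricting attention to a maximal antichain together with the principal ideals of $\S_f$ below its elements, and/or by an induction on $\spar(f)$ that peels off a coordinate $i$ for which many pairs $S, S \cup \{i\}$ lie in $\S_f$.

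Finally I would note that the linear-in-rank saving cannot be improved to anything polylogarithmic, in contrast to the $\XOR$ gadget: by Corollary~\ref{cor: doneway vs rank addressing}, taking $f = \ADDR$ gives $\Doneway(f \circ \AND) \ge \rank(M_{f \circ \AND})^{\log_3 2}$, so no bound of the form $\Doneway(f \circ \AND) \le \poly(\log \rank(M_{f \circ \AND}))$ can hold.
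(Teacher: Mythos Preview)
Your proposal does not address the stated theorem at all. The statement labeled \texttt{thm: klauck} is Klauck's lower bound $\Qonewaywitherror{\epsilon}(F) \geq (1-\rH_\bin(\epsilon))\VC(F)$ (and the entangled variant), a result the paper \emph{cites} from~\cite{Kla00} and does not prove. Your write-up is instead an attempted proof of Theorem~\ref{thm: 1way sublinear rank}, namely $\Doneway(f \circ \AND) \leq (1-\Omega(1))\rank(M_{f \circ \AND})$. These are unrelated statements: one is a quantum lower bound in terms of VC-dimension for arbitrary $F$, the other a deterministic upper bound for the special class $f \circ \AND$ in terms of rank. Nothing in your argument---M\"obius supports, down-sets, antichains, the $f^2=f$ identity---has any bearing on VC-dimension or on quantum one-way protocols.

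Even judged as a proof of Theorem~\ref{thm: 1way sublinear rank}, your approach is incomplete at the key step. Your reduction to ``the largest antichain of $(\S_f,\subseteq)$ has size at most $(1-\delta)|\S_f|$'' is clean, but you explicitly flag this structural claim as the main obstacle and only sketch heuristics (``union blow-up'', controlling cancellations, an undeveloped induction on $\spar(f)$) without an actual argument. By contrast, the paper's proof of the corresponding Claim~\ref{claim: patcomp is never full} avoids the antichain route entirely: it uses the identity $\wh{f}(W)=\sum_{S\cup T=W}\wh{f}(S)\wh{f}(T)$ to assign each pair $\{S,T\}\subseteq\S_f$ a ``partner'' set $p(\{S,T\})\subseteq\S_f$ with the same union, then iteratively grows subsets $\mathcal T_0\subset\mathcal T_1\subset\cdots\subseteq\S_f$ and bounds, by a case analysis on how many of $S,T$ and the partners are already present, the number of extensions of a partial pattern at each step by $2^{\alpha\cdot|\mathcal T_i\setminus\mathcal T_{i-1}|}$ with $\alpha=(\log 6)/3$. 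This gives $\mpat(f)\le 2^{\alpha\,\spar(f)+1}$ directly, with no need to control global antichain structure or cancellation of coefficients.
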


\section{Composition with Inner Product}\label{sec: IP composition}
In this section we prove Theorem~\ref{thm:main} and Theorem~\ref{thm:detlift}, which are our results regarding the quantum and deterministic one-way communication complexities, respectively, of functions composed with a small Inner Product gadget.
\subsection{Quantum Complexity}
\begin{proof}[Proof of Theorem~\ref{thm:main}]
By Theorem~\ref{thm: klauck}, it suffices to show that $\VC(f \circ \IP) \geq n(b-1)$. Since $f$ is a function that depends on all its input variables, the following holds. For each index $i \in [n]$, there exist inputs $z^{(i, 0)} = z^{(i)}_1, \dots, z^{(i)}_{i-1}, 0, z^{(i)}_{i+1}, \dots, z^{(i)}_{n}$ and $z^{(i, 1)} = z^{(i)}_1, \dots, z^{(i)}_{i-1}, 1, z^{(i)}_{i+1}, \dots, z^{(i)}_{n}$ such that $f(z^{(i, 0)}) = v_i$ and $f(z^{(i, 1)}) = 1 - v_i$. That is, $z^{(i, 0)}$ and $z^{(i, 1)}$ have different function values, but differ only on the $i$'th bit.

For each $i \in [n]$ and $j \in \bra{2, 3, \dots, b}$, define a string $y^{(i, j)} \in \zone^{nb}$ as follows. For all $k \in [n]$ and $\ell \in [b]$,
\[
y^{(i, j)}_{k, \ell} = \begin{cases}
z^{(i)}_k & \text{if}~k \neq i~\text{and}~\ell = 1\\
1 & \text{if}~k = i~\text{and}~\ell = j\\
0 & \text{otherwise}.
\end{cases}
\]
That is, for $k \neq i$, the $k$'th block of $y^{(i, j)}$ is $(z^{(i)}_k, 0^{b-1})$, and the $i$'th block of $y^{(i, j)}$ is $(0^{j-1}, 1, 0^{b - j})$.
Consider the set of $n(b-1)$-many columns of $M_{f \circ \IP}$, one for each $y^{(i, j)}$. We now show that this set of columns is shattered.
Consider an arbitrary string $c = c_{1, 2}, \dots, c_{1, b}, \dots, c_{n, 2}, \dots, c_{n, b} \in \zone^{n(b-1)}$.
We now show the existence of a row that yields this string on restriction to the columns described above. Define a string $x \in \zone^{nb}$ as follows. For all $i \in [n]$ and $j \in [b]$, $x_{i, 1} = 1$ and
\begin{align*}
x_{i, j} & = \begin{cases}
c_{i, j} & \text{if}~v_i = 0\\
1 - c_{i, j} & \text{if}~v_i = 1.
\end{cases}
\end{align*}
That is, the first element of each block of $x$ is $1$, and the remaining part of any block, say the $i$'th block, equals either the string $c_{i, 2}, \dots, c_{i, b}$ or its bitwise negation, depending on the value of $v_i$.

To complete the proof, we claim that the row of $M_{f \circ \IP}$ corresponding to this string $x$ equals the string $c$ when restricted to the columns $\bra{y^{(i, j)}}_{i \in [n], j \in \bra{2, 3, \dots, b}}$.
To see this, fix $i \in [n]$ and $j \in \bra{2, 3, \dots, b}$ and consider $M_{f \circ \IP}(x, y^{(i, j)})$. Next, for each $k \in [n]$ with $k \neq i$, the inner product of the $k$'th block of $x$ with the $k$'th block of $y$ equals $z^{(i)}_k$, since $x_{k, 1} = 1$ and the first element of the $k$'th block of $y^{(i, j)}$ equals $z^{(i)}_{k}$, and all other elements of the block are 0 by definition. In the $i$'th block of $y^{(i, j)}$, only the $j$'th element is non-zero, and equals 1 by definition. Moreover, $x_{i, j} = c_{i, j}$ if $v_i = 0$, and equals $1 - c_{i, j}$ otherwise. Hence, the inner products of the $i$'th blocks of $x$ and $y^{(i, j)}$ equals $c_{i, j}$ if $v_i = 0$, and equals $1 - c_{i, j}$ otherwise. Thus, the string obtained on taking the block-wise inner product of $x$ and $y^{(i, j)}$ equals $z^{(i)}_1, \dots, z^{(i)}_{i - 1}, c_{i, j}, z^{(i)}_{i + 1}, \dots, z^{(i)}_{n}$ if $v_i = 0$ and $z^{(i)}_1, \dots, z^{(i)}_{i - 1}, 1 - c_{i, j}, z^{(i)}_{i + 1}, \dots, z^{(i)}_{n}$ if $v_i = 1$.
By our definitions of $z^{(i, 0)}, z^{(i, 1)}$ and $v_i$ for each $i \in [n]$, it follows that the value of $f$ when applied to either of these inputs equals $c_{i, j}$. This concludes the proof.
\end{proof}

\subsection{Deterministic Complexity}
We now prove Theorem~\ref{thm:detlift}, which gives a lower bound on the deterministic one-way communication complexity of $f \circ \IP$ for partial functions $f$.
A crucial ingredient of our proof is Theorem~\ref{thm: packing}, which we prove in Appendix~\ref{erdos}. Now we proceed to the proof of Theorem~\ref{thm:detlift}.
\begin{proof}[Proof of Theorem~\ref{thm:detlift}]
Let $q:=2^b-1$ and let $\Pi$ be an optimal one-way deterministic protocol for $f \circ \IP$ of complexity $\Doneway(f \circ \IP)=:c\log q$. The theorem is trivially true if $c \geq n/30$ since $\Ddt(f) \leq n$. In the remainder of the proof we assume that $c < n/30$. $\Pi$ induces a partition of $\{0,1\}^{nb}$ into at most $q^c$ parts; each part corresponds to a distinct message. There are $(2^b-1)^n = q^n$ inputs $(x_1, \ldots, x_n)$ to Alice such that for each $i$, $x_i \neq 0^b$. Let $\Z$ be the set of those inputs. Identify $\Z$ with $[q]^n$. 
By the pigeon-hole principle there exists one part $\sP$ in the partition induced by $\Pi$ that contains at least $q^{n-c}$ strings in $\Z$. We now invoke Theorem~\ref{thm: packing} with $d$ set to $10c$. This is applicable since $d \leq n/3$ and the assumption $b \geq 2$ implies that $q \geq 3$. Theorem~\ref{thm: packing} implies that there are two strings $x^{(1)} = (x^{(1)}_1, \dots, x^{(1)}_n), x^{(2)}=(x^{(2)}_1, \ldots, x^{(2)}_n) \in \sP \cap \Z$ such that $|\{i \in [n] \mid x^{(1)}_i = x^{(2)}_i\}| < 10c$. Let $\sI:=\{i \in [n] \mid x^{(1)}_i = x^{(2)}_i\}$. 
Let $z=(z_1, \dots, z_n)$ denote a generic input to $f$. We claim that for each Boolean assignment $(a_{i})_{i \in \sI}$ to the variables in $\sI$, $f$ is constant on $\sS \cap \{z: \forall i \in \sI, z_i=a_i\}$. 
This will prove the theorem, since querying the variables $\{z_i \mid i \in \sI\}$ determines $f$; thus $\Ddt(f) \leq |\sI| < 10c$. 
Towards a contradiction, assume that there exist $z^{(1)}, z^{(2)} \in \sS \cap \{z: \forall i \in \sI, z_i=a_i\}$ such that $f(z^{(1)}) \neq f(z^{(2)})$. We will construct a string $y=(y_1, \ldots, y_n) \in \{0,1\}^{nb}$ in the following way:
\begin{description}
\item[$i \in \sI$]: Choose $y_i$ such that $\mathsf{IP}(y_i, x^{(1)}_i)=\mathsf{IP}(y_i, x^{(2)}_i)=a_i$.
\item[$i \notin \sI$]: Choose $y_i$ such that $\mathsf{IP}(y_i, x^{(1)}_i)=z^{(1)}_i$ and $\mathsf{IP}(y_i, x^{(2)}_i)=z^{(2)}_i$.
\end{description}
Note that we can always choose a $y$ as above since for each $i \in [n]$, $x^{(1)}_i, x^{(2)}_i \neq 0^b$, and for each $i \notin \sI$, $x^{(1)}_i \neq x^{(2)}_i$. By the above construction, $f \circ \IP(x^{(1)}, y)=f(z^{(1)})$ and $f \circ \IP(x^{(2)}, y)=f(z^{(2)})$. Since by assumption $f(z^{(1)}) \neq f(z^{(2)})$, we have that $f \circ \IP(x^{(1)}, y) \neq f \circ \IP(x^{(2)}, y)$. But since Alice sends the same message on inputs $x^{(1)}$ and $x^{(2)}$, $\Pi$ produces the same output on $(x^{(1)}, y)$ and $(x^{(2)}, y)$. This contradicts the correctness of $\Pi$.
\end{proof}
\begin{remark}
It can be seen that the proof of Theorem~\ref{thm:detlift} also works when the inner gadget $g : \zone^{b_1} \times \zone^{b_2} \to \zone$ satisfies the following general property: There exists a subset $X$ of $\zone^{b_1}$ (Alice's input in the gadget) such that:
\begin{itemize}
    \item $|X| \geq 3$,
    \item for all $x_1 \neq x_2 \in X$ and all $b_1, b_2 \in \zone$, there exists $y \in \zone^{b_2}$ such that $g(x_1, y) = b_1$ and $g(x_2, y) = b_2$.
\end{itemize}
This is satisfied, for example, for the Addressing function on $\zone^{\log b + b}$ when $b \geq 4$ (see Definition~\ref{defn: addressing}). For $g = \IP_b$, the set $X$ equals $\zone^b \setminus \cbra{0^b}$.
\end{remark}

\section{Composition with AND}\label{sec: naandt is not communication}

We first investigate the relationship between \textnaandt~complexity and M\"obius sparsity of Boolean functions.
Recall that Claim~\ref{claim: trivial bounds} shows that for all Boolean functions $f : \zone^n \to \zone$, $\log \spar(f) \leq \NAANDT(f) \leq \spar(f)$.
A natural question to ask is whether both of the bounds are tight, i.e.~are there Boolean functions witnessing tightness of each bound? The first bound is trivially tight for any Boolean function with full M\"obius sparsity, for example, the NOR function: querying all the input bits (which is querying $n$ many ANDs) immediately yields the value of the function, and its M\"obius sparsity can be shown to be $2^n$.
One might expect that the upper bound is not tight in view of Theorem~\ref{thm: san19}. The Addressing function witnesses tightness of the quadratic gap in Theorem~\ref{thm: san19}. This gives rise to the natural question of whether an analogous bound holds true in the M\"obius-world: is it true for all Boolean functions $f$ that $\NAANDT(f) = \widetilde{O}(\sqrt{\spar(f)})$?
Interestingly we show in Appendix~\ref{app: addressing} that the Addressing function already gives a negative answer to this question. 
In this section we first observe that a stronger separation holds, and there exists a function ($\OMB_n$) for which the second inequality in Claim~\ref{claim: trivial bounds} is in fact an equality. We then use this same function to prove Theorem~\ref{thm: intro omb separation}, which gives a maximal separation between $\QNAANDT(f)$ and $\Doneway(f \circ \AND_2)$. Finally, we prove Theorem~\ref{thm: intro symmetric naandt communication}, which says that $\NAANDT(f)$ is at most quadratically large in $\Dcc(f \circ \AND)$ for symmetric $f$.

\subsection{Pattern Complexity and One-Way Communication Complexity}\label{sec: pattern complexity}

In this section we observe that the logarithm of the pattern complexity, $\mpat(f)$, of a Boolean function $f$ equals the deterministic one-way communication complexity of $f \circ \AND$. We also give bounds on $\NAANDT(f)$ in terms of $\mpat(f)$.
As a consequence we also show that $\Doneway(f \circ \AND) \geq \log(\NAANDT(f))$.
\begin{claim}\label{claim: doneway equals log pattern}
Let $f : \zone^n \to \zone$ be a Boolean function. Then $
\Doneway(f \circ \AND) = \lc\log(\mpat(f))\rc$.
\end{claim}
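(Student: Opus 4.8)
The plan is to prove the two inequalities $\Doneway(f \circ \AND) \leq \lc\log(\mpat(f))\rc$ and $\Doneway(f \circ \AND) \geq \lc\log(\mpat(f))\rc$ separately, both of which are essentially observations once the right correspondence between Alice's inputs and monomial-evaluation patterns is made.

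For the upper bound, I would fix the M\"obius expansion $f = \sum_{S \in \S_f}\wh{f}(S)\AND_S$ and analyze the composed function on input $(x, y)$ with $x, y \in \zone^n$. The key algebraic identity is that $\AND_S(x \wedge y) = \AND_S(x) \cdot \AND_S(y) = \prod_{i \in S}(x_i \wedge y_i)$, so $f \circ \AND(x, y) = \sum_{S \in \S_f}\wh{f}(S)\AND_S(x)\AND_S(y)$. This shows that $f \circ \AND(x, y)$ depends on $x$ only through the pattern $P(x) := (\AND_S(x))_{S \in \S_f} \in \zone^{\S_f}$. So the one-way protocol is for Alice to send Bob the name of her pattern $P(x)$, which takes $\lc \log(\mpat(f))\rc$ bits since there are exactly $\mpat(f)$ distinct patterns; Bob can then evaluate $f \circ \AND(x, y)$ from $P(x)$ and $y$ alone. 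This gives $\Doneway(f \circ \AND) \leq \lc\log(\mpat(f))\rc$.

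For the lower bound, I would argue that any two inputs $x, x'$ to Alice with distinct patterns $P(x) \neq P(x')$ must be separated by the protocol, i.e.\ Alice must send different messages on them. Suppose $P(x) \neq P(x')$, witnessed by some $S \in \S_f$ with $\AND_S(x) \neq \AND_S(x')$; WLOG $\AND_S(x) = 1$ and $\AND_S(x') = 0$. The natural choice is to let Bob's input be the characteristic vector $y = \mathbb{1}_S$ of $S$. Then $\AND_T(x \wedge y) = \AND_T(x)$ if $T \subseteq S$ and $0$ otherwise, so $f \circ \AND(x, \mathbb{1}_S)$ is determined by the restricted pattern $(\AND_T(x))_{T \in \S_f, T \subseteq S}$; the same holds for $x'$. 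The subtlety is that I need $f \circ \AND(x, \mathbb{1}_S) \neq f \circ \AND(x', \mathbb{1}_S)$, which does not follow merely from the $S$-coordinates differing — I have to pick the witnessing pair more carefully. The cleanest route: among all pairs $x, x'$ with $P(x) \neq P(x')$, choose a witnessing set $S$ of \emph{maximum cardinality} among coordinates on which the patterns differ. Then for every strict subset $T \subsetneq S$ with $T \in \S_f$, either $T$ is not a difference-coordinate (so $\AND_T(x) = \AND_T(x')$), or — by maximality and the fact that $\AND_T(x) = 1$ forces... Actually the robust argument is different: since patterns are fixed by $x$, if $P(x) \neq P(x')$ as vectors then the set $\{x : \text{pattern} = P\}$ partitions $\zone^n$ into exactly $\mpat(f)$ classes, and for the lower bound I invoke that in a one-way protocol Alice must send distinct messages on inputs that are ``distinguishable'' — so I should directly show that for patterns $P \neq P'$ there is a Bob-input $y$ separating them. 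I would do this by induction on $|S|$ where $S$ is a minimal difference-coordinate: taking $y = \mathbb{1}_S$ with $S$ a \emph{minimal} (by inclusion) set in $\S_f$ on which $x, x'$ differ, every proper subset $T \subsetneq S$ in $\S_f$ has $\AND_T(x) = \AND_T(x')$, and the only remaining monomials contributing are $\AND_S$ itself (whose values differ) plus monomials $T \subseteq S$ already shown equal; a quick computation of $f \circ \AND(x, \mathbb{1}_S) - f \circ \AND(x', \mathbb{1}_S) = \wh f(S)(\AND_S(x) - \AND_S(x')) \neq 0$ finishes it.

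Assembling: distinct patterns need distinct messages, so at least $\mpat(f)$ messages are required, giving $\Doneway(f \circ \AND) \geq \lc\log(\mpat(f))\rc$; combined with the upper bound this yields equality. I expect the main obstacle to be the lower-bound separation step — specifically verifying that choosing $S$ minimal in $\S_f$ among difference-coordinates makes all proper sub-monomials agree on $x$ and $x'$ and hence isolates a single nonzero M\"obius coefficient, so that $f \circ \AND$ genuinely differs on the two Bob-completed inputs. Everything else (the pattern-counting, the ceiling in the message length) is routine.
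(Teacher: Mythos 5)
Your upper bound is identical to the paper's: Alice names her pattern $\rbra{\AND_S(x)}_{S\in\S_f}$ using $\lceil\log\mpat(f)\rceil$ bits and Bob evaluates $\sum_{S\in\S_f}\wh f(S)\AND_S(x)\AND_S(y)$. For the lower bound, both you and the paper reduce to showing that two Alice-inputs with distinct patterns cannot receive the same message; what differs is how that separation is certified. The paper argues abstractly: if the messages agree then the induced Boolean functions $g_x(y)=\sum_{S\in\S_f}\wh f(S)\AND_S(x)\AND_S(y)$ and $g_{x'}$ coincide as functions of $y$, and uniqueness of the M\"obius expansion then forces $\wh f(S)\AND_S(x)=\wh f(S)\AND_S(x')$ for every $S$, hence equal patterns. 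You instead exhibit an explicit fooling $y$: pick $S\in\S_f$ minimal under inclusion among sets where $\AND_S(x)\neq\AND_S(x')$, set $y=\mathbb{1}_S$, and the telescoping $f\circ\AND(x,\mathbb{1}_S)-f\circ\AND(x',\mathbb{1}_S)=\sum_{T\in\S_f,\,T\subseteq S}\wh f(T)\rbra{\AND_T(x)-\AND_T(x')}=\wh f(S)\rbra{\AND_S(x)-\AND_S(x')}\neq 0$ does the job (all proper $T\subsetneq S$ in $\S_f$ agree by minimality). This is a correct and somewhat more elementary route --- it unpacks the linear-independence fact that the paper invokes as a black box --- and it buys you a concrete witness, which could be useful in other settings. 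One cleanup note: your intermediate attempts (maximum-cardinality witness, the mention of ``induction on $|S|$'') are neither needed nor correct as stated; the minimal-by-inclusion choice already closes the argument with no induction, and that is the version you should keep.
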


\begin{proof}
Write the M\"obius expansion of $f$ as
\begin{align}\label{eqn: mobius expansion of f}
    f = \sum_{S \in \S_f}\wh{f}(S)\AND_S.
\end{align}
Say $\mpat(f) = k$. We first show that $\Doneway(f \circ \AND) \leq \lc\log k\rc$ by exhibiting a one-way protocol of cost $\lc\log k\rc$. Alice computes the pattern of $x$ and sends Bob the pattern using $\lc\log k\rc$ bits of communication. Bob now knows the values of $\cbra{\AND_S(x) : S \in \S_f}$. Since Bob can compute $\cbra{\AND_S(y) : S \in \S_f}$ without any communication, he can now compute the value of $f \circ \AND(x,y)$ using the formula
\[
(f \circ \AND)(x,y) = \sum_{S \in \S_f}\wh{f}(S)\AND_S(x)\AND_S(y).
\]

It remains to show that $\Doneway(f \circ \AND) \geq \lc\log k\rc$. Let $\Doneway(f \circ \AND) = d$. Thus there are at most $2^d$ messages that Alice can send Bob. We show that any two inputs $x, x' \in \zone^n$ for which Alice sends the same message have the same pattern, which would prove $2^d \geq k$, and prove the claim since $d$ must be an integer.

Let $x, x'$ be 2 inputs to Alice for which her message to Bob is $m$. We have
\begin{align*}
    (f \circ \AND)(x,y) & = \sum_{S \in \S_f}\wh{f}(S)\AND_S(x)\AND_S(y)\\
    (f \circ \AND)(x',y) & = \sum_{S \in \S_f}\wh{f}(S)\AND_S(x')\AND_S(y)
\end{align*}
Since $m$  and $y$ completely determine the value of the function, we must have \begin{align*}
\sum_{S \in \S_f}\wh{f}(S)\AND_S(x)\AND_S(y) = \sum_{S \in \S_f}\wh{f}(S)\AND_S(x')\AND_S(y) \qquad\text{for all $y \in \zone^n$.}
\end{align*}
Define the functions $g_x, g_{x'} : \zone^n \to \zone$ by
\begin{align*}
    g_x(y) & = \sum_{S \in \S_f}\wh{f}(S)\AND_S(x)\AND_S(y)\\
    g_{x'}(y) & = \sum_{S \in \S_f}\wh{f}(S)\AND_S(x')\AND_S(y).
\end{align*}
Thus by uniqueness of the M\"obius expansion of Boolean functions, $g_x = g_{x'}$ as functions of $y$. This implies $\wh{g_x}(S) = \wh{g_{x'}}(S)$ for all $S \in \S_f$. Since $\wh{g_x}(S) = \wh{f}(S)\AND_S(x)$ and $\wh{g_{x'}}(S) = \wh{f}(S)\AND_S(x')$ for all $S \in \S_f$,
\begin{align*}
\AND_S(x) = \AND_S(x') \qquad \textnormal{for all}~S \in \S_f,
\end{align*}
This shows that the pattern induced by $x$ and the pattern induced by $x'$ are the same, concluding the proof.
\end{proof}

Next we show that the pattern complexity of $f$ is bounded below by the M\"obius sparsity of $f$. 


\begin{claim}\label{claim: patcomp at least spar}
Let $f : \zone^n \to \zone$ be a Boolean function. Then $\mpat(f) \geq \spar(f)$.
\end{claim}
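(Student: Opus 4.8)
The plan is to exhibit an injection from the M\"obius support $\S_f$ into the set of distinct patterns of $f$. Recall a pattern of an input $x$ is the tuple $\rbra{\AND_S(x)}_{S \in \S_f} \in \zone^{\S_f}$, and $\mpat(f)$ counts how many such tuples arise. So it suffices to find, for each $T \in \S_f$, an input $x_T \in \zone^n$ whose pattern is ``distinctive'' for $T$, in such a way that $T \mapsto (\text{pattern of } x_T)$ is injective.

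The natural choice is to take $x_T$ to be the characteristic vector of $T$ itself, i.e.\ the input that is $1$ exactly on the coordinates in $T$. With this choice, $\AND_S(x_T) = 1$ if and only if $S \subseteq T$. So the pattern of $x_T$ is precisely the indicator of the down-set $\cbra{S \in \S_f : S \subseteq T}$ inside $\S_f$. First I would observe that $T$ itself always lies in $\S_f$ (by definition of $\S_f$, since we only range $S$ over $\S_f$ and $T \in \S_f$), so the pattern of $x_T$ has a $1$ in the coordinate indexed by $T$. Then I would argue injectivity: if $T \neq T'$ are both in $\S_f$, their patterns differ. Indeed, consider the coordinate indexed by $T$. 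The pattern of $x_T$ is $1$ there, while the pattern of $x_{T'}$ is $1$ there iff $T \subseteq T'$. By symmetry we may as well ask whether $T \subseteq T'$ and $T' \subseteq T$ can both fail to distinguish — but if $T \subseteq T'$ and $T' \subseteq T$ then $T = T'$, contradiction; so at least one of the two containments fails, say $T \not\subseteq T'$, and then the $T$-coordinate of the pattern of $x_T$ is $1$ while that of $x_{T'}$ is $0$. Hence the patterns are distinct. This gives an injection $\S_f \hookrightarrow \cbra{\text{patterns of } f}$, so $\mpat(f) \geq \spar(f)$.

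There is essentially no hard step here; the only thing to be slightly careful about is the bookkeeping of indices — making sure that when I write ``the coordinate indexed by $T$'' of a pattern, I'm genuinely comparing the same coordinate across the two patterns, which is fine since all patterns live in the common space $\zone^{\S_f}$. One could alternatively phrase the argument without explicitly choosing $x_T = \mathbb{1}_T$: just note that the map $x \mapsto \rbra{\AND_S(x)}_{S \in \S_f}$ has image of size $\mpat(f)$, and that this map already ``sees'' all of $\S_f$ because distinct monomials $\AND_S$ for $S \in \S_f$ are linearly independent as functions, hence take distinct value-tuples over $\zone^n$; restricting attention to inputs of the form $\mathbb{1}_T$ suffices. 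I expect the write-up to be four or five lines, and the minor obstacle — if any — is merely presenting the containment case analysis cleanly.

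Here is the proof I would write.

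\begin{proof}
For a set $T \subseteq [n]$, let $\mathbb{1}_T \in \zone^n$ denote its characteristic vector. We claim the map $T \mapsto \rbra{\AND_S(\mathbb{1}_T)}_{S \in \S_f}$ is an injection from $\S_f$ to the set of patterns of $f$; this immediately gives $\mpat(f) \geq \abs{\S_f} = \spar(f)$.

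First, note that for any $S, T \subseteq [n]$ we have $\AND_S(\mathbb{1}_T) = 1$ if and only if $S \subseteq T$. Now let $T, T' \in \S_f$ with $T \neq T'$. Then it is not the case that both $T \subseteq T'$ and $T' \subseteq T$; without loss of generality assume $T \not\subseteq T'$. Consider the coordinate indexed by $T$ (which lies in $\S_f$, since $T \in \S_f$). In the pattern of $\mathbb{1}_T$ this coordinate equals $\AND_T(\mathbb{1}_T) = 1$, whereas in the pattern of $\mathbb{1}_{T'}$ this coordinate equals $\AND_T(\mathbb{1}_{T'}) = 0$ since $T \not\subseteq T'$. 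Hence the two patterns differ, establishing injectivity and the claim.
\end{proof}
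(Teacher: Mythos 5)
Your proof is correct and takes essentially the same approach as the paper: both map each $T \in \S_f$ to the characteristic vector $\mathbb{1}_T$ and show the induced patterns are pairwise distinct by exhibiting a coordinate where they differ. The only cosmetic difference is that the paper orders the two sets by size to guarantee a non-containment, while you note directly that at least one of the two containments must fail; these are equivalent.
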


\begin{proof}
Recall that $\S_f$ denotes the M\"obius support of $f$. For each $S \in \S_f$, define the input $x^S$ to be the $n$-bit characteristic vector of the set $S$. We now show that each of these inputs induces a different pattern for $f$. Let $S_1 \neq S_2 \in \S_f$, with $|S_1| \geq |S_2|$. Since they are different sets, there must be an index $j \in S_1$ such that $j \notin S_2$. Note that $\AND_{S_1}(x^{S_1}) = 1$. On the other hand $x^{S_2}_j = 0$ implies $\AND_{S_1}(x^{S_2}) = 0$. Hence $x^{S_1}$ and $x^{S_2}$ induce different patterns. Since $\spar(f) = \abs{\S_f}$, this completes the proof.
\end{proof}

From Claim~\ref{claim: trivial bounds} we know that $\spar(f) \geq \NAANDT(f)$ and from Claim~\ref{claim: doneway equals log pattern} we know that $\Doneway(f \circ \AND) = \lc\log(\mpat(f))\rc$. Along with Claim~\ref{claim: patcomp at least spar}, these imply the following claim.

\begin{claim}\label{claim: doneway vs naandt}
Let $f : \zone^n \to \zone$ be a Boolean function. Then $\lc\log(\NAANDT(f))\rc \leq \Doneway(f \circ \AND) \leq \NAANDT(f)$.
\end{claim}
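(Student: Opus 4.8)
The plan is to derive both inequalities by chaining together three facts that have already been established: the exact identity $\Doneway(f \circ \AND) = \lc \log(\mpat(f)) \rc$ (Claim~\ref{claim: doneway equals log pattern}), the bound $\mpat(f) \geq \spar(f)$ (Claim~\ref{claim: patcomp at least spar}), and the two-sided bound $\log \spar(f) \leq \NAANDT(f) \leq \spar(f)$ (Claim~\ref{claim: trivial bounds}). Everything else is bookkeeping with the ceiling function.

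For the lower bound, I would start from $\Doneway(f \circ \AND) = \lc \log(\mpat(f)) \rc$, use monotonicity of $\log$ and of $\lc \cdot \rc$ together with $\mpat(f) \geq \spar(f)$ to get $\Doneway(f \circ \AND) \geq \lc \log \spar(f) \rc$, and then apply the upper-bound half $\spar(f) \geq \NAANDT(f)$ of Claim~\ref{claim: trivial bounds} (again using monotonicity) to conclude $\Doneway(f \circ \AND) \geq \lc \log \NAANDT(f) \rc$.

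For the upper bound $\Doneway(f \circ \AND) \leq \NAANDT(f)$, it suffices to show $\mpat(f) \leq 2^{\NAANDT(f)}$, since then Claim~\ref{claim: doneway equals log pattern} gives $\Doneway(f \circ \AND) = \lc \log(\mpat(f)) \rc \leq \lc \log 2^{\NAANDT(f)} \rc = \NAANDT(f)$, the last equality holding because $\NAANDT(f)$ is an integer. To see $\mpat(f) \leq 2^{\NAANDT(f)}$, let $\S = \{S_1, \dots, S_k\}$ be an NAADT basis for $f$ with $k = \NAANDT(f)$. By Claim~\ref{claim: naandt basis monomial product}, every $S \in \S_f$ satisfies $\AND_S = \prod_{i \in T}\AND_{S_i}$ for some $T \subseteq [k]$, so the pattern $\rbra{\AND_S(x)}_{S \in \S_f}$ of any $x \in \zone^n$ is a function of the vector $\rbra{\AND_{S_i}(x)}_{i \in [k]} \in \zone^k$; hence at most $2^k$ distinct patterns occur. (Alternatively, one can invoke the direct simulation of an optimal non-adaptive AND decision tree: Alice sends $\rbra{\AND_{S_i}(x)}_{i \in [k]}$ in $k$ bits, Bob recovers $\AND_{S_i}(x \wedge y) = \AND_{S_i}(x) \wedge \AND_{S_i}(y)$ for each $i$, and then evaluates $f$.)

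I do not expect a genuine obstacle here; the statement is a corollary of the preceding three claims. The only point demanding a moment of care is the interaction of the two ceiling functions, which is handled by monotonicity of $x \mapsto \lc \log x \rc$ on the positive integers and by the identity $\lc \log 2^m \rc = m$ for integers $m$.
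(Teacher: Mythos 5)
Your proof is correct and takes essentially the same route as the paper: the lower bound chains $\Doneway(f\circ\AND) = \lc\log\mpat(f)\rc \geq \lc\log\spar(f)\rc \geq \lc\log\NAANDT(f)\rc$ via Claims~\ref{claim: doneway equals log pattern}, \ref{claim: patcomp at least spar}, and \ref{claim: trivial bounds}, and the upper bound establishes $\mpat(f) \leq 2^{\NAANDT(f)}$ using Claim~\ref{claim: naandt basis monomial product} before applying Claim~\ref{claim: doneway equals log pattern} again. Your remark on the ceiling-function bookkeeping is slightly more explicit than what the paper writes, but the argument is the same.
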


\begin{proof}
For the upper bound on $\Doneway(f \circ \AND)$, let $\S = \bra{S_1, \dots, S_k}$ be an NAADT basis for $f$. By Claim~\ref{claim: naandt basis monomial product}, every monomial in the M\"obius support of $f$ is a product of some of these $\AND_{S_i}$'s. Since there are at most $2^k$ possible values for $\bra{\AND_{S_i}(x) : i \in [k]}$ and since these completely determine the pattern of $x$ for any given $x \in \zone^n$, we have
\[
\mpat(f) \leq 2^{\NAANDT(f)},
\]
which proves the required upper bound in view of Claim~\ref{claim: doneway equals log pattern}.

For the lower bound, we have
\[
\Doneway(f \circ \AND) = \lc\log(\mpat(f))\rc \geq \lc\log(\spar(f))\rc \geq \lc\log(\NAANDT(f))\rc,
\]
where the equality follows from Claim~\ref{claim: doneway equals log pattern}, the first inequality follows from Claim~\ref{claim: patcomp at least spar} and the last inequality follows from Claim~\ref{claim: trivial bounds}.
\end{proof}

The pattern complexity of $f$ is trivially at most $2^{\spar(f)}$ since each pattern is a $\spar(f)$-bit string. Interestingly we show that there is no function for which this bound is tight. 
\begin{claim}\label{claim: patcomp is never full}
    Let $f: \zone^n \rightarrow \zone$ be a Boolean function. Then $\mpat(f) \leq 2^{(1-\Omega(1))\spar(f)}$.
\end{claim}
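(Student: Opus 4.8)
The plan is to first reduce the claim to a purely combinatorial statement about the M\"obius support $\S_f$. Write $U := \bigcup_{S \in \S_f} S$ and let $\langle \S_f\rangle_\cup := \{\bigcup_{S \in \D} S : \D \subseteq \S_f\}$ denote the family of all unions of subfamilies of $\S_f$ (with the empty union being $\emptyset$). I claim that $\mpat(f) = \abs{\langle \S_f \rangle_\cup}$. Indeed, the pattern of an input $x$ is precisely the indicator vector (over coordinates $S \in \S_f$) of the set $\D_x := \{S \in \S_f : S \subseteq \supp(x)\}$, and the map $\D_x \mapsto \bigcup_{S \in \D_x} S$ is a bijection from the realizable patterns onto $\langle \S_f\rangle_\cup$: it is injective because one recovers $\D_x$ as $\{S \in \S_f : S \subseteq \bigcup_{S' \in \D_x} S'\}$ (using that $\bigcup_{S' \in \D_x}S' \subseteq \supp(x)$), and it is surjective because for any $\D \subseteq \S_f$, choosing $x$ with $\supp(x) = \bigcup_{S \in \D} S$ gives $\D_x \supseteq \D$ and hence $\bigcup_{S \in \D_x} S = \bigcup_{S \in \D} S$. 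So it suffices to prove $\abs{\langle \S_f \rangle_\cup} \le 2^{(1-\Omega(1))\spar(f)}$. (This also re-proves Claim~\ref{claim: patcomp at least spar}, since $\langle\S_f\rangle_\cup \supseteq \S_f$.)

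\textbf{Two bounds and a dichotomy.}
Let $r := \spar(f)$. I would record two easy upper bounds. First, $\abs{\langle \S_f\rangle_\cup} \le 2^{\abs{U}}$, since every member is a subset of $U$. Second, by Dilworth's theorem partition $\S_f$ into $w$ chains $C_1, \dots, C_w$, where $w$ is the width of the poset $(\S_f, \subseteq)$; then every union of a subfamily of $\S_f$ is determined by choosing, for each $i$, either nothing or the largest selected member of $C_i$ (which contains all other selected members of $C_i$), so $\abs{\langle \S_f\rangle_\cup} \le \prod_{i=1}^{w}(\abs{C_i}+1) \le (r/w + 1)^w$ by AM--GM. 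Since the function $t \mapsto t\log(1/t + 1)$ is strictly increasing on $(0,1)$ with value $1$ at $t = 1$, whenever $w \le (1-\epsilon) r$ for a fixed constant $\epsilon > 0$ this gives $\abs{\langle \S_f\rangle_\cup} \le 2^{(1-\Omega(\epsilon))r}$, which suffices. We may therefore assume $\S_f$ contains an antichain $\A$ with $\abs{\A} \ge (1-\epsilon) r$, and it remains to handle this ``almost-antichain'' regime, with $\epsilon$ a small constant to be fixed at the end.

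\textbf{The almost-antichain case (the main obstacle).}
This is the step that genuinely uses that $f$ is Boolean, not merely that $\S_f$ is a set system, and it is the crux of the argument. We may assume $f(0^n) = 0$ (otherwise replace $f$ by $1-f$, which preserves $\mpat$ up to the all-ones coordinate and only decreases $\spar$), so $\emptyset \notin \S_f$. Comparing the coefficient of $\AND_S$ on the two sides of $f = f^2$ for a \emph{minimal} element $S$ of $\S_f$ shows $\wh f(S)^2 = \wh f(S)$, hence $\wh f(S) = 1$ for every minimal $S$. Consequently, if $S_1 \ne S_2$ are minimal elements whose union $S_1 \cup S_2$ contains no other member of $\S_f$, then for $x$ the indicator of $S_1 \cup S_2$ we get $f(x) = \sum_{S \in \S_f,\, S \subseteq S_1 \cup S_2} \wh f(S) = \wh f(S_1) + \wh f(S_2) = 2$, contradicting that $f$ is Boolean. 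This rigidity forces the minimal elements of $\S_f$ — and, after pushing the argument, the whole antichain $\A$ — to be heavily ``clustered''. The plan is to parlay this into the bound $\abs{\langle \A\rangle_\cup} \le 2^{(1-\Omega(1))\abs{\A}}$ for any antichain $\A \subseteq \S_f$ of a Boolean function, via a dichotomy on the sizes of the sets in $\A$: if the sets are small then unions of pairs are small, so the contradiction above applies to many pairs and kills exponentially many potential unions; if the sets are large then any two of them overlap so heavily inside the bounded universe $U$ that their unions saturate, so $\langle\A\rangle_\cup$ is small for that reason. This size dichotomy, and making the two sub-arguments meet cleanly in the middle, is the part I expect to be genuinely delicate. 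Finally, since every union of a subfamily of $\S_f$ splits as a union of a subfamily of $\A$ with a subfamily of $\S_f \setminus \A$, we get $\abs{\langle \S_f\rangle_\cup} \le \abs{\langle \A\rangle_\cup}\cdot 2^{\abs{\S_f \setminus \A}} \le 2^{(1-\Omega(1))r}\cdot 2^{\epsilon r}$, and choosing $\epsilon$ small enough completes the proof. (As with such statements, the bound is to be read for $\spar(f)$ large; for $\spar(f) = 1$, where $f$ is a single monomial, one literally has $\mpat(f) = 2 = 2^{\spar(f)}$.)
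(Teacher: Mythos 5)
Your reformulation $\mpat(f) = \abs{\langle \S_f \rangle_\cup}$ (the number of distinct unions of subfamilies of the M\"obius support) is correct --- the bijection you give checks out --- and it is a clean restatement that the paper does not use. The Dilworth step is also correct and reduces matters to the case where $\S_f$ contains an antichain $\A$ of size at least $(1-\epsilon)\spar(f)$. Your observation from $f=f^2$ that minimal elements of $\S_f$ have M\"obius coefficient $1$, and hence that the union of two minimal elements must contain another member of $\S_f$, is correct and is the right rigidity to look for.

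The gap is the third step, and it is the heart of the claim. Bounding $\abs{\langle \A \rangle_\cup}$ for the almost-antichain is a plan you yourself flag as not worked out, and the specific obstacles are real. (i) The minimal elements of $\S_f$ need not lie in $\A$, so the ``$f(x)=2$'' argument does not directly constrain pairs from $\A$; ``pushing the argument to the whole antichain'' is exactly what is not done. (ii) The ``large sets'' branch implicitly needs $\abs{U}$ to be comparable to $\spar(f)$, but nothing in your argument bounds $\abs{U}$; an antichain of nearly disjoint sets makes $\abs{U}$ far exceed $\spar(f)$ and that branch collapses. (iii) Even granting the rigidity, you do not show how ``many pairs $S,T\in\A$ have some $W\in\S_f$ inside $S\cup T$'' cuts a constant fraction off $\abs{\A}$ in the exponent of $\abs{\langle \A\rangle_\cup}$. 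The paper supplies exactly what is missing here: it strengthens your observation to \emph{all} pairs $S\ne T\in\S_f$, showing there is a partner $p(\{S,T\})\subseteq\S_f$ with $\abs{p(\{S,T\})}\le 2$, distinct from $\{S,T\}$, and with $\bigcup p(\{S,T\}) = S\cup T$ \emph{exactly} (not merely a set contained in $S\cup T$), which yields the multiplicative constraint $P_S\cdot P_T = \prod_{W\in p(\{S,T\})}P_W$ on every realizable pattern $P$. It then runs a greedy iteration that covers $\S_f$ a few sets at a time, with a finite case analysis bounding the entropy gained per iteration by about $(\log 6)/3 \approx 0.86$ bits per newly covered set. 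That iterative accounting, rather than a width dichotomy, is the mechanism that makes the bound go through, and it has no analogue in your outline.
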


In fact, we obtain an explicit constant that yields an upper bound of $2^{((\log 6)/3)\spar(f) + 1} \approx 2^{0.86\spar(f) + 1}$.
Our proof of Claim~\ref{claim: patcomp is never full} relies on the following observation about the structure of the M\"obius support of any Boolean function.

\begin{claim}\label{clm: mobius support structure}
    Let $f: \zone^n \to \zone$ be a Boolean function with M\"obius support $\S_f$. For any two distinct sets $S,T \in \S_f$ there exists a set of `partners' $p(\cbra{S, T}) \subseteq \S_f$ such that
    \begin{itemize}
        \item $p(\cbra{S, T}) \neq \cbra{S,T}$,
        \item $\abs{p(\cbra{S, T})} = 2$ if $S \cup T \notin \S_f$ and $\abs{p(\cbra{S, T})} = 1$ if $S \cup T \in \S_f$, and
        \item $\bigcup_{U \in p(\cbra{S, T})} U= S \cup T$.
    \end{itemize}
\end{claim}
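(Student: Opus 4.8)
The plan is to exploit the fact that the M\"obius coefficients $\wh{f}(S) = \sum_{X \subseteq S}(-1)^{|S \setminus X|}f(X)$ of a \emph{Boolean-valued} function satisfy strong arithmetic constraints. The key observation I would start from is the identity relating the coefficient of $S \cup T$ to the values of $f$: for any two sets $S, T$, consider the ``box'' of inputs $X$ with $S \cap T \subseteq X \subseteq S \cup T$. Evaluating $f$ on this box and expanding via inclusion-exclusion, one gets a linear relation among the M\"obius coefficients $\wh{f}(U)$ for $U$ ranging over sets with $S \cap T \subseteq U \subseteq S \cup T$. I would specialize this cleverly: the natural statement is that $\AND_{S}\cdot \AND_{T} = \AND_{S \cup T}$, so the product $f^2$, which equals $f$ since $f$ is Boolean-valued, when expanded in the M\"obius basis, forces the monomial $\AND_{S \cup T}$ to appear with a coefficient that is a sum of products $\wh{f}(U)\wh{f}(V)$ over pairs with $U \cup V = S \cup T$. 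Concretely, from $f^2 = f$ we get, comparing coefficients of $\AND_{R}$ where $R = S \cup T$,
\[
\wh{f}(R) = \sum_{\substack{U, V \in \S_f \\ U \cup V = R}} \wh{f}(U)\wh{f}(V).
\]

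Now I would argue as follows. Fix distinct $S, T \in \S_f$ and let $R = S \cup T$. If $R \in \S_f$, then the left side $\wh{f}(R) \neq 0$, and the right side is a sum over pairs $(U,V)$ with $U \cup V = R$; the pair $(R, R)$ contributes $\wh{f}(R)^2$. I would set $p(\cbra{S,T}) = \cbra{R}$, which is a single set in $\S_f$ with $\bigcup = S \cup T$; and $p(\cbra{S,T}) = \cbra{R} \neq \cbra{S,T}$ since $S, T$ are both proper subsets of $R$ (as they are distinct), so $\cbra{S,T}$ cannot equal the singleton $\cbra{R}$. This handles the second bullet's $R \in \S_f$ case. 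If $R = S \cup T \notin \S_f$, then $\wh{f}(R) = 0$, so the right-hand sum vanishes. But that sum contains the term $2\wh{f}(S)\wh{f}(T)$ (from the pair $(S,T)$ and $(T,S)$), which is nonzero. Hence there must be \emph{another} pair $(U, V)$ of sets in $\S_f$ with $U \cup V = R$ and $\cbra{U,V} \neq \cbra{S,T}$ contributing a nonzero term — I would take $p(\cbra{S,T}) = \cbra{U, V}$ (or $\cbra{U}$ if $U = V$, but then $U = R \notin \S_f$, contradiction, so in fact $U \neq V$ and $|p| = 2$). This gives $\bigcup_{W \in p(\cbra{S,T})} W = U \cup V = R = S \cup T$, and $p(\cbra{S,T}) \neq \cbra{S,T}$, matching all three bullets.

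The main obstacle I anticipate is being careful about the edge cases in the second part: I need $U, V$ to be \emph{distinct} from each other and the \emph{pair} $\cbra{U,V}$ distinct from $\cbra{S,T}$, and I must rule out degenerate pairs where, say, $U = V$. The argument that $U \neq V$ is clean: if $U = V$ then $U \cup V = U = R$, forcing $R \in \S_f$, contradicting the case assumption. Ruling out $\cbra{U,V} = \cbra{S,T}$ is built into the choice (we explicitly select a term other than the $(S,T)$ term, which is possible precisely because the total sum is zero while the $(S,T)$-contribution is nonzero — so the remaining contributions cannot all vanish). One subtlety worth double-checking is that both $U$ and $V$ genuinely lie in $\S_f$: this is automatic since a nonzero contribution $\wh{f}(U)\wh{f}(V) \neq 0$ requires $\wh{f}(U), \wh{f}(V) \neq 0$. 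Finally, I should verify the coefficient-comparison identity $\wh{f}(R) = \sum_{U \cup V = R}\wh{f}(U)\wh{f}(V)$ rigorously from $f^2 = f$ together with $\AND_U \cdot \AND_V = \AND_{U \cup V}$ and uniqueness of the M\"obius expansion — this is a short computation but is the logical backbone, so I would state it as a preliminary line before the case analysis.
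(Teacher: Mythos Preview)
Your proposal is correct and follows essentially the same approach as the paper: use $f^2 = f$ together with $\AND_U \cdot \AND_V = \AND_{U \cup V}$ to derive the identity $\wh{f}(W) = \sum_{U \cup V = W}\wh{f}(U)\wh{f}(V)$, then set $p(\cbra{S,T}) = \cbra{S \cup T}$ when $S \cup T \in \S_f$, and otherwise use the vanishing of the sum (with the nonzero contribution from $(S,T)$) to extract a distinct pair $\cbra{U,V}$. Your extra observation that $U \neq V$ in the second case (since $U = V$ would force $U = S \cup T \in \S_f$) makes explicit a point the paper leaves implicit.
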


\begin{proof}
Let $\sum_{S \in \S_f} \wh{f}(S) \AND_{S}$ be the M\"obius expansion of $f$. Since $f$ has range $\zone$, we know that $f = f^2$. However, $$f^2 = \left( \sum_{S \in \S_f} \wh{f}(S) \AND_{S} \right) \left( \sum_{T \in \S_f} \wh{f}(T) \AND_{T} \right) =  \sum_{W \subseteq [n]} \left( \sum_{S,T \subseteq [n] : S \cup T = W} \wh{f}(S) \wh{f}(T) \right) \AND_W.$$
Since the M\"obius expansion of $f$ is unique, we can compare the two expansions to see that for all sets $W \subseteq [n]$, 
\begin{equation}\label{eqn: mobius titsworth}
    \wh{f}(W) = \sum_{S,T \subseteq [n] : S \cup T = W} \wh{f}(S) \wh{f}(T).    
\end{equation}
As a consequence we have the following structure. Let $S \neq T \in \S_f$ such that $S \cup T \notin \S_f$. Since $\wh{f}(S \cup T) = 0$, the summation corresponding to $W = S \cup T$ in Equation~\eqref{eqn: mobius titsworth} must have at least one non-zero summand apart from $\wh{f}(S) \wh{f}(T)$. Hence there must exist $U \neq V \in \S_f$ such that $\cbra{S,T} \neq \cbra{U,V}$ and $U \cup V = S \cup T$. We choose an arbitrary such pair $\cbra{U,V}$ and define $p(\cbra{S,T}) = \cbra{U,V}$. For $S,T \in \S_f$ such that $S \cup T \in \S_f$, let $p(\cbra{S,T})$ be defined as $\cbra{S \cup T}$. It clearly satisfies the necessary conditions.
\end{proof}

\begin{observation}\label{obs: mobius partners patterns}
Let $f: \zone^n \to \zone$ be a Boolean function with M\"obius support $\S_f$. For any two distinct sets $S,T \in \S_f$, let $p(\cbra{S, T}) \subseteq \S_f$ be as in Claim~\ref{clm: mobius support structure}. Then for any pattern $P \in \zone^{\mathcal{S}_f}$,
\[
P_S \cdot P_T = \prod_{W \in p(\cbra{S, T})} P_W.
\]
\end{observation}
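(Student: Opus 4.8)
The plan is to reduce everything to the idempotence of Boolean variables after unwinding the definition of a pattern. By Definition~\ref{defn: patcomp}, a pattern $P \in \zone^{\S_f}$ is of the form $P = \rbra{\AND_U(x)}_{U \in \S_f}$ for some fixed $x \in \zone^n$; fix such an $x$. Everything will follow from the elementary identity that for any subsets $A, B \subseteq [n]$ and any $x \in \zone^n$,
\[
\AND_A(x)\cdot\AND_B(x) = \AND_{A \cup B}(x),
\]
which holds because $\AND_A(x) = \prod_{i \in A} x_i$ and each $x_i \in \zone$ satisfies $x_i^2 = x_i$, so the product of the two monomials collapses to a single product over $A \cup B$. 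By a trivial induction, the same holds for any finite family: $\prod_j \AND_{A_j}(x) = \AND_{\bigcup_j A_j}(x)$.

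Now I would simply apply this on both sides of the claimed identity. On the left, $P_S \cdot P_T = \AND_S(x)\,\AND_T(x) = \AND_{S \cup T}(x)$. On the right, Claim~\ref{clm: mobius support structure} guarantees that the partner set satisfies $\bigcup_{W \in p(\cbra{S, T})} W = S \cup T$, so
\[
\prod_{W \in p(\cbra{S, T})} P_W = \prod_{W \in p(\cbra{S, T})} \AND_W(x) = \AND_{\,\bigcup_{W \in p(\cbra{S, T})} W}(x) = \AND_{S \cup T}(x).
\]
Comparing the two displays gives $P_S \cdot P_T = \prod_{W \in p(\cbra{S, T})} P_W$, as desired. (Note this works whether $\abs{p(\cbra{S,T})}$ is $1$ or $2$; in the former case the right-hand side is just $P_{S \cup T} = \AND_{S\cup T}(x)$ directly.)

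There is essentially no obstacle in this argument; it is a one-line computation once the idempotence identity is isolated. The only subtlety worth flagging is that the statement is asserted only for genuine patterns $P$ — i.e.\ those realized by some $x \in \zone^n$ — and it is precisely this realizability that makes the identity $\AND_A(x)\AND_B(x) = \AND_{A\cup B}(x)$ available; for an arbitrary vector in $\zone^{\S_f}$ the conclusion would be false in general.
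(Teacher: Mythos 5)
Your proof is correct and takes essentially the same approach as the paper: fix an $x$ realizing the pattern, use $\AND_A(x)\AND_B(x)=\AND_{A\cup B}(x)$, and invoke the property $\bigcup_{W\in p(\{S,T\})}W=S\cup T$ from Claim~\ref{clm: mobius support structure}. The paper's proof is exactly this computation, stated more tersely.
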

\begin{proof}
Let $P$ be a pattern in $\zone^{\S_f}$. There must exist an $x \in \zone^n$ such that for all sets $W \in \S_f$, $P_W = \AND_W(x)$. Since $S \cup T = \bigcup_{W \in p(\cbra{S, T})} W$, we have $P_S \cdot P_T = \AND_{S \cup T}(x) = \prod_{W \in p(\cbra{S, T})} P_W$.
\end{proof}

\begin{proof}[Proof of Claim~\ref{claim: patcomp is never full}]
We analyze the pattern complexity of $f$ in iterations. To define these iterations, we define a sequence of subsets of $\S_f$, described in Algorithm~\ref{algo: iteration generation}.

\begin{algorithm}[H]\label{algo: iteration generation}
\SetAlgoLined
\textbf{Initialize } $\mathcal{T}_0 \gets \emptyset, i \gets 0$. \\
\While{$\abs{\mathcal{T}_i} \leq \spar(f)-2$}{
    Choose $S, T$ with $S \neq T$ from $\S_f \setminus \mathcal{T}_i$. \\
    Set $\mathcal{T}_{i+1} \gets \mathcal{T}_i \cup \cbra{S,T} \cup p(\cbra{S,T})$. \\
    Set $i \gets i+1$.
}
Set $\mathsf{num\_iterations} \gets i$. \\
Set $\mathcal{T}_{\mathsf{num\_iterations}+1} \gets \S_f$.
\caption{Defining the Iterations}
\end{algorithm}

For $i \in \cbra{0,\dots,\mathsf{num\_iterations}+1}$, define the partial patterns $$\mathcal{P}_i := \cbra{P \in \zone^{\mathcal{T}_i} : P = \rbra{\AND_S(x)}_{S \in \mathcal{T}_i}~\textnormal{for some}~x \in \zone^n}.$$ We will now prove that
\begin{equation}\label{eqn: partial patterns bound}
    \forall j \in \cbra{0,\dots,\mathsf{num\_iterations}},~ \abs{\mathcal{P}_j} \leq 2^{\frac{\log 6}{3}\abs{\mathcal{T}_j}}.
\end{equation}
We prove this by induction. In the remainder of this proof, let $\alpha := \frac{\log 6}{3} \approx 0.86$.
Equation~\eqref{eqn: partial patterns bound} is true when $j=0$ since both sides are $1$. Now let $i>0$ and assume as our induction hypothesis that Equation~\eqref{eqn: partial patterns bound} is true when $j=i-1$. As our inductive step, we will prove that for every partial pattern $P \in \mathcal{P}_{i-1}$, the number of partial patterns $Q \in \mathcal{P}_i$ that extend $P$ (in the sense that $Q$ restricted to indices in $\mathcal{T}_{i-1}$ is equal to $P$) is at most $2^{\alpha(\abs{\mathcal{T}_i} - \abs{\mathcal{T}_{i-1}})}$. Since every partial pattern in $\mathcal{P}_i$ is an extension of a partial pattern in $\mathcal{P}_{i-1}$, this would imply that $\abs{\mathcal{P}_i} \leq 2^{\alpha(\abs{\mathcal{T}_i} - \abs{\mathcal{T}_{i-1}})}\abs{\mathcal{P}_{i-1}}$. Along with our induction hypothesis, this will prove Equation~\eqref{eqn: partial patterns bound} for $j=i$, and hence for all $j$.

Our proof of the inductive step will involve some case analysis. Fix a partial pattern $P \in \mathcal{T}_{i-1}$. Let $S,T$ be the sets chosen when constructing $\mathcal{T}_i$ from $\mathcal{T}_{i-1}$. We know from Observation~\ref{obs: mobius partners patterns} that any partial pattern $Q \in \mathcal{P}_i$ must satisfy $Q_{S} \cdot Q_{T} = \prod_{W \in p(\cbra{S,T})} Q_{W}$. We split the analysis into cases based on whether or not $S \cup T \in \S_f$ and also based on the number of sets in $\mathcal{T}_i \setminus \mathcal{T}_{i-1}$.

We now show the inductive step, that is, for all $i \in \cbra{1,\dots,\mathsf{num\_iterations}}$, the number of extensions of any partial pattern $P \in \mathcal{P}_{i - 1}$ to a partial pattern $Q$ in $\mathcal{P}_{i}$ is at most $2^{\alpha(\abs{\mathcal{T}_i} - \abs{\mathcal{T}_{i-1}})}$.

\begin{itemize}
    \item Suppose $S \cup T \notin \S_f$. In this case we know $\abs{p(\cbra{S,T})} = 2$. Let $p(\cbra{S,T}) = \cbra{U,V}$. Consider the set $\mathcal{W} := \mathcal{T}_{i} \setminus \mathcal{T}_{i-1} = (\cbra{U, V} \cup \cbra{S,T}) \setminus \mathcal{T}_{i-1}$.
    \begin{itemize}
        \item If $\abs{\mathcal{W}} = 2$: This can happen either because $\cbra{U, V} \subseteq \mathcal{T}_{i-1}$ or because one of $U$ and $V$ is in $\mathcal{T}_{i-1}$ and the other is in $\cbra{S,T}$. We analyze these subcases below.
        \begin{itemize}
            \item If $\cbra{U,V} \subseteq \mathcal{T}_{i-1}$, then $P$ already determines $Q_{U}$ and $Q_{V}$, and hence $Q_{U} \cdot Q_{V}$. One of the following two cases hold. The third implication in both of the cases below use Observation~\ref{obs: mobius partners patterns}.
            \begin{align*}
                (P_{U}, P_{V}) = (1, 1) & \implies (Q_{U}, Q_{V}) = (1, 1) \implies Q_{U} \cdot Q_{V} = 1 \\
                & \implies Q_{S} \cdot Q_{T} = 1 \implies (Q_{S}, Q_{T}) = (1, 1)\\
                & \textnormal{or}\\
                (P_{U}, P_{V}) \in \cbra{(0,0), (0,1), (1,0)} & \implies (Q_{U}, Q_{V}) \in \cbra{(0,0), (0,1), (1,0)} \\ & \implies Q_{U} \cdot Q_{V} = 0 \implies Q_{S} \cdot Q_{T} = 0 \\ 
                & \implies (Q_{S}, Q_{T}) \in \cbra{(0,0), (0,1), (1,0)}.
            \end{align*}
            \item One of $U$ or $V$ is in $\mathcal{T}_{i-1}$ and the other is in $\cbra{S,T}$. Without loss of generality assume $U \in \mathcal{T}_{i-1}$ (which implies $Q(U) = P(U)$) and $V=T$. Thus $S \cup T = U \cup T$, and Observation~\ref{obs: mobius partners patterns} implies $P_{U} \cdot Q_T = Q_S \cdot Q_T$. If $P_{U} = 0$ then
            \begin{align*}
                P_{U} \cdot Q_{T} = 0 \implies Q_{S} \cdot Q_{T} = 0 \implies (Q_{S}, Q_{T}) \in \cbra{(0,0), (0,1), (1,0)}.
            \end{align*}
            If $P_{U} = 1$ then
            \begin{align*}
                Q_T = Q_S \cdot Q_T \implies (Q_{S}, Q_{T}) \in \cbra{(0, 0), (1,0), (1,1)}.
            \end{align*}
        \end{itemize}
        
        In both of the above subcases, $P$ has at most $3$ extensions in $\mathcal{P}_{i}$. Note that $3 < 2^{2\alpha} \approx 3.3$, so our inductive step holds in this case.
        
        \item If $\abs{\mathcal{W}} = 3$: Without loss of generality, let $\mathcal{W} = \cbra{S,T,U}$. This case can arise either because $V \in \mathcal{T}_{i-1}$ or because $V \in \cbra{S,T}$. Again, we analyze both subcases.
        \begin{itemize}
            \item If $V \in \mathcal{T}_{i-1}$, then $Q_{V} = P_{V}$ and thus Observation~\ref{obs: mobius partners patterns} implies $Q_{U} \cdot P_{V} = Q_{S} \cdot Q_{T}$. If $P_{V}=0$ then
            \begin{align*}
                &Q_{U} \cdot P_{V} = 0 \implies Q_S \cdot Q_T = 0\\
                \implies &(Q_{S}, Q_{T}, Q_{U}) \in \cbra{(0,0,0), (0,0,1), (0,1,0), (0,1,1), (1,0,0), (1,0,1)}.
            \end{align*}
            If $P_{V}=1$ then
            \begin{align*}
                Q_{U} = Q_S \cdot Q_T \implies (Q_{S}, Q_{T}, Q_{U}) \in \cbra{(0,0,0), (0,1,0), (1,0,0), (1,1,1)}.
            \end{align*}
            \item If $V \in \cbra{S,T}$ (without loss of generality assume $V=T$), then Observation~\ref{obs: mobius partners patterns} implies that
            \begin{align*}
                &Q_{U} \cdot Q_T = Q_S \cdot Q_T \\ \implies &(Q_{S}, Q_{T}, Q_{U}) \in \cbra{(0,0,0), (0,0,1), (0,1,0), (1,0,0), (1,0,1), (1,1,1)}.
            \end{align*}
        \end{itemize}
        
        In both of these subcases, $P$ had at most $6$ extensions in $\mathcal{P}_i$. Since $2^{3\alpha} = 6$, the inductive step holds in this case.
        
        \item If $\abs{\mathcal{W}} = 4$, Observation~\ref{obs: mobius partners patterns} implies that
        \begin{align*}
            &Q_{U} \cdot Q_{V} = Q_{S} \cdot Q_{T} \\ \implies & (Q_{S},Q_{T},Q_{U},Q_{V}) \in \rbra{\cbra{(0,0),(0,1),(1,0)} \times \cbra{(0,0),(0,1),(1,0)}} \cup \cbra{(1,1,1,1)}.
        \end{align*}
        Thus there are 10 possibilities for $(Q_S,Q_T,Q_{U},Q_{V})$, so $P$ has at most $10$ extensions in $\mathcal{P}_i$. Since $10 < 2^{4\alpha} \approx 10.9$, the induction step holds in this case as well.
    \end{itemize}
    \item Suppose $S \cup T \in \S_f$. In this case $p(\cbra{S,T}) = \cbra{S \cup T}$. Again, consider the set $\mathcal{W} := \mathcal{T}_{i} \setminus \mathcal{T}_{i-1} = (\cbra{S \cup T} \cup \cbra{S,T}) \setminus \mathcal{T}_{i-1}$.
    \begin{itemize}
        \item If $\abs{\mathcal{W}} = 2$: This can arise either because $S \cup T \in \mathcal{T}_{i-1}$ or because $S \cup T \in \cbra{S,T}$. We analyze both of these subcases below.
        \begin{itemize}
            \item If $S \cup T \in \mathcal{T}_{i-1}$, Observation~\ref{obs: mobius partners patterns} implies that $P_{S \cup T} = Q_S \cdot Q_T$. If $P_{S \cup T}=1$, then
            \begin{align*}
                Q_S \cdot Q_T = 1 \implies (Q_S,Q_T) = (1,1).
            \end{align*}
            If $P_{S \cup T}=0$, then
            \begin{align*}
                Q_S \cdot Q_T = 0 \implies (Q_S,Q_T) \in \cbra{(0,0),(0,1),(1,0)}.
            \end{align*}
            \item If $S \cup T \in \cbra{S,T}$ (without loss of generality assume $S \cup T = S$), then Observation~\ref{obs: mobius partners patterns} implies that $Q_S = Q_S \cdot Q_T$. Therefore
            \begin{align*}
                (Q_S,Q_T) \in \cbra{(0,0),(0,1),(1,1)}.
            \end{align*}
        \end{itemize}
        In both of the above subcases, $P$ has at most $3$ extensions in $\mathcal{P}_i$. Since $3 < 2^{2\alpha} \approx 3.3$, the inductive step holds in this case.
        \item If $\abs{\mathcal{W}} = 3$, Observation~\ref{obs: mobius partners patterns} implies
        \begin{align*}
            Q_{S \cup T} = Q_S \cdot Q_T \implies (Q_S,Q_T,Q_{S \cup T}) \in \cbra{(0,0,0),(0,1,0),(1,0,0),(1,1,1)}.
        \end{align*}
        In this case $P$ has at most $4$ extensions in $\mathcal{P}_i$. Since $4 < 2^{3\alpha} = 6$, the inductive step holds here as well.
    \end{itemize}
\end{itemize}

Since the inductive step holds in all the cases, we conclude that Equation~\eqref{eqn: partial patterns bound} is true.
Finally, note that the while loop in Algorithm~\ref{algo: iteration generation} quits when $\abs{\mathcal{T}_i} \geq \spar(f)-1$. If it quits with $\abs{\mathcal{T}_i} = \spar(f)$, then Equation~\eqref{eqn: partial patterns bound} implies that $\mpat(f) \leq 2^{\alpha\cdot\spar(f)}$. If it quits with $\abs{\mathcal{T}_i} = \spar(f)-1$, then each of the partial patterns in $\mathcal{P}_i$ can have at most two extensions to actual patterns of $f$. Hence even in this case $\mpat(f) \leq 2^{\alpha\cdot\spar(f)+1}$.
\end{proof}

The analogous relation to Claim~\ref{claim: patcomp is never full} in the Fourier-world has been nearly determined by Sanyal~\cite{San19}; their main result (Theorem~\ref{thm: san19}) essentially shows that the Fourier analog of pattern complexity of a Boolean function is at most exponential in the square root of its Fourier sparsity. This is a stronger bound than that in Claim~\ref{claim: patcomp is never full}, but the same bound cannot hold in the M\"obius-world  since the Addressing function witnesses $\mpat(\ADDR_n) \geq 2^{\spar(\ADDR_n)^{\log_32}}$ (see Claims~\ref{claim: addr sparsity} and~\ref{claim: addr patcomp lower bound}). Nevertheless we conjecture that a stronger bound is possible.
\begin{conjecture}\label{conj: patcomp ub}
Let $f: \zone^n \rightarrow \zone$ be a Boolean function. Then $\mpat(f) \leq 2^{\rbra{\spar(f)^{1 - \Omega(1)}}}$.
\end{conjecture}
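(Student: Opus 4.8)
The plan is to first recast Conjecture~\ref{conj: patcomp ub} in purely combinatorial terms, and then attack the resulting statement using the idempotency identity of Equation~\eqref{eqn: mobius titsworth}. The reformulation step is clean: for $x \in \zone^n$ with support $A = \cbra{i : x_i = 1}$, the pattern of $x$ is exactly the indicator vector of the subfamily $\mathcal{F}_A := \cbra{S \in \S_f : S \subseteq A}$, and $\mathcal{F}_A$ is in turn determined by its union $c(A) := \bigcup \mathcal{F}_A$, since $\mathcal{F}_A = \cbra{S \in \S_f : S \subseteq c(A)}$. Writing $\overline{\S_f}$ for the closure of $\S_f$ under unions, one checks that $c(W) = W$ for every $W \in \overline{\S_f}$ and that the range of $A \mapsto c(A)$ is $\overline{\S_f}$ (together with $\emptyset$ exactly when $f(0^n) = 0$, in which case $\emptyset \notin \S_f$). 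Hence
\[
\abs{\overline{\S_f}} \leq \mpat(f) \leq \abs{\overline{\S_f}} + 1,
\]
so Conjecture~\ref{conj: patcomp ub} is equivalent to the assertion that the union-closure of the M\"obius support of any Boolean function has size at most $2^{\spar(f)^{1 - \Omega(1)}}$. Via Equation~\eqref{eqn: mobius sparsity rank} and Claim~\ref{claim: doneway equals log pattern}, this is also exactly the statement that $\Doneway(f \circ \AND)$ is a polynomially sublinear function of $\rank(M_{f \circ \AND})$ -- a one-way rank-versus-communication statement for AND-composed functions, in the spirit of the log-rank conjecture but necessarily weaker than polylogarithmic, since the Addressing function satisfies $\mpat(\ADDR_n) \geq 2^{\spar(\ADDR_n)^{\log_3 2}}$ (Claims~\ref{claim: addr sparsity} and~\ref{claim: addr patcomp lower bound}).

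With the problem reduced to bounding $\abs{\overline{\S_f}}$, the next step is to quantify how much ``extra'' structure idempotency forces on $\S_f$ beyond the trivial bound $\abs{\overline{\S_f}} \leq 2^{\spar(f)}$. Passing through an NAADT basis $\cbra{S_1,\dots,S_k}$ only yields $\abs{\overline{\S_f}} \leq 2^{\NAANDT(f)}$ (every element of $\S_f$ is a union of some $S_i$'s by Claim~\ref{claim: naandt basis monomial product}), which is useless when $\NAANDT(f) = \Theta(\spar(f))$, as happens for $\OMB_n$. Instead I would work with Equation~\eqref{eqn: mobius titsworth} directly: whenever $S, T \in \S_f$ with $S \cup T \notin \S_f$, the vanishing of $\wh f(S \cup T)$ forces a further unordered pair $\cbra{U,V} \subseteq \S_f$ with $U \cup V = S \cup T$ (this is Claim~\ref{clm: mobius support structure}), and in fact the \emph{entire} sum $\sum_{S' \cup T' = S \cup T}\wh f(S')\wh f(T')$ vanishes, giving a system of cancellation relations over all of $\overline{\S_f}$. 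The goal would be an amortized charging scheme: process the elements of $\overline{\S_f}$ in a suitable order and show that each newly generated union escaping $\S_f$ compels the appearance of a \emph{growing} pool of previously-unseen support sets, so that $\S_f$ is exhausted before the closure can reach $2^{\spar(f)^{1-o(1)}}$. A complementary line would be to recognize idempotent supports as a subclass of union-closed families and import a Kruskal--Katona / Frankl--Tokushige / sunflower-type bound adapted to that subclass; the fact that idempotency already excludes simple systems such as $\cbra{\cbra{1},\cbra{2}}$ (which is the M\"obius support of no Boolean function) is precisely the extra leverage such a bound would exploit.

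The step I expect to be the main obstacle is making the amortization gain a \emph{polynomial} saving rather than a constant-factor-per-step saving. The partner relation of Claim~\ref{clm: mobius support structure} is local -- one escaped union guarantees only one extra pair -- and this locality is exactly what pins the argument behind Claim~\ref{claim: patcomp is never full} at $2^{0.86\,\spar(f)}$; turning it into a super-constant gain seems to require a genuinely global use of the full family of relations in Equation~\eqref{eqn: mobius titsworth}, perhaps via a rank or entropy argument over the union semilattice $(2^{[n]}, \cup)$, on which the M\"obius transform acts as a ring isomorphism. Any successful argument must also be consistent with the ``recursive addressing'' construction, which is currently the densest known way for an idempotent support of size $r$ to generate about $2^{r^{\log_3 2}}$ unions; so the proof will effectively need to certify that no idempotent support does asymptotically better. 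A reasonable first milestone is to settle the conjecture for functions of bounded alternating number (or bounded ``block structure''), where the closure is already quasipolynomial in $\spar(f)$, and then attempt to bootstrap to the general case.
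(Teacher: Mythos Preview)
The statement you are attempting is labeled a \emph{Conjecture} in the paper, and the paper does not prove it. It is explicitly posed as an open problem, with the remark that a positive resolution would strengthen Theorem~\ref{thm: 1way sublinear rank} to $\Doneway(f\circ\AND)=\rank(M_{f\circ\AND})^{1-\Omega(1)}$, and that the Addressing function rules out exponents below $\log_3 2$. So there is no ``paper's own proof'' to compare against.

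Your submission is likewise not a proof: it is a research plan. The reformulation step is correct and worth recording --- the map $x\mapsto c(\mathrm{supp}(x))$ does biject patterns with elements of the union-closure $\overline{\S_f}$ (possibly plus one), so Conjecture~\ref{conj: patcomp ub} is equivalent to bounding $|\overline{\S_f}|$ by $2^{\spar(f)^{1-\Omega(1)}}$ for every M\"obius support $\S_f$ of a Boolean function. But past that point you only sketch strategies (a charging scheme driven by Claim~\ref{clm: mobius support structure}, or importing extremal-set-theory bounds), and you yourself identify the crux you cannot do: upgrading the local partner relation, which gives only a constant-factor-per-step saving (this is exactly what caps Claim~\ref{claim: patcomp is never full} at $2^{0.86\,\spar(f)}$), to a \emph{polynomial} saving in the exponent. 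That is the whole conjecture. Nothing in your outline indicates how the global system of relations \eqref{eqn: mobius titsworth} would yield more than what the single-pair argument already gives, and the ``first milestone'' of handling bounded alternating number is itself left as an assertion without an argument. In short: the reformulation is sound, the obstacles are accurately diagnosed, but no proof is present --- which matches the status of the statement in the paper.
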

Conjecture~\ref{conj: patcomp ub} would strengthen Theorem~\ref{thm: 1way sublinear rank}, showing that $\Doneway(f \circ \AND) = \rank(M_{f \circ \AND})^{1 - \Omega(1)}$.
\begin{proof}[Proof of Theorem~\ref{thm: 1way sublinear rank}]
We have
\[
    \Doneway(f \circ \AND) = \lc\log(\mpat(f))\rc \leq (1-\Omega(1))\spar(f) \leq (1-\Omega(1))\rank(M_{f \circ \AND}),
\]
where the equality holds by Claim~\ref{claim: doneway equals log pattern}, the first inequality follows from Claim~\ref{claim: patcomp is never full} and the last inequality holds by Equation~\eqref{eqn: mobius sparsity rank}.
\end{proof}
Our results regarding the one-way communication complexity of $f \circ \AND$ use the Booleanness of $f$ to bring out mathematical insights about the dependencies of monomials in the M\"obius support of $f$. These dependencies enable us to establish interesting bounds on the pattern complexity of $f$. We hope that pattern complexity will find more use in future research.



\subsection{Deterministic Complexity}
We prove in this section that the \textnaandt~complexity of $\OMB_n$ is maximal whereas the one-way communication complexity of $\OMB_n \circ \AND$ is small.

\begin{claim}\label{claim: max naandt OMB}
Let $n$ be a positive integer. Then $\NAANDT(\OMB_n) = n$. Moreover, $\spar(\OMB_n) = n$ if $n$ is even, and $\spar(\OMB_n) = n+1$ if $n$ is odd.
\end{claim}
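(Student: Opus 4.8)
The plan is to compute the M\"obius expansion of $\OMB_n$ explicitly: this simultaneously pins down $\spar(\OMB_n)$ and exhibits a size-$n$ NAADT basis, after which a short structural argument via Claim~\ref{claim: naandt basis monomial product} supplies the matching lower bound.

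\textbf{Step 1 (M\"obius expansion).} For $j \in [n+1]$ let $g_j := \AND_{\{j, j+1, \dots, n\}}$ denote the ``suffix AND'' monomial, with the convention $g_{n+1} = \AND_{\emptyset} \equiv 1$. I would first observe that for any $x \in \zone^n$ and $j \in [n]$, the event ``$\max\{i : x_i = 0\} = j$'' holds precisely when $g_{j+1}(x) = 1$ and $g_j(x) = 0$; since $g_j \leq g_{j+1}$ pointwise, this indicator equals $g_{j+1}(x) - g_j(x)$. As $\OMB_n$ is the sum of these indicators over odd $j \in [n]$ (and all of them vanish at $x = 1^n$, consistent with $\OMB_n(1^n)=0$), we get the polynomial identity $\OMB_n = \sum_{j \text{ odd},\, 1 \le j \le n}(g_{j+1} - g_j)$. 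Regrouping, the right-hand side equals $\sum_{k=1}^{n}(-1)^k g_k$ when $n$ is even, and $\sum_{k=1}^{n+1}(-1)^k g_k$ when $n$ is odd (where in the odd case $g_{n+1}\equiv 1$ contributes the constant $(-1)^{n+1}=1$).

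\textbf{Step 2 (sparsity and upper bound).} The sets $\{j, \dots, n\}$ for $j \in [n]$, together with $\emptyset$, are pairwise distinct, so by uniqueness of the M\"obius expansion the previous step shows $\S_{\OMB_n} = \{\{j,\dots,n\} : j \in [n]\}$ when $n$ is even, giving $\spar(\OMB_n)=n$, and $\S_{\OMB_n} = \{\emptyset\} \cup \{\{j,\dots,n\} : j \in [n]\}$ when $n$ is odd, giving $\spar(\OMB_n)=n+1$. This proves the ``moreover'' part. For the upper bound $\NAANDT(\OMB_n) \le n$, note that by the displayed expansions $\OMB_n$ is an affine-linear function of the $n$ values $g_1(x),\dots,g_n(x)$; hence $\{\{1,\dots,n\},\{2,\dots,n\},\dots,\{n\}\}$ is an NAADT basis of size $n$. (For even $n$ this is also immediate from Claim~\ref{claim: trivial bounds}.)

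\textbf{Step 3 (lower bound).} Let $\S = \{S_1, \dots, S_k\}$ be any NAADT basis for $\OMB_n$. By Claim~\ref{claim: naandt basis monomial product}, every set in $\S_{\OMB_n}$ is a union of sets from $\S$. In both parity cases, $\{j, j+1, \dots, n\} \in \S_{\OMB_n}$ for every $j \in [n]$, so fix $j$ and write $\{j,\dots,n\} = \bigcup_{i \in T}S_i$. Since $j$ lies in this union, some $S_i$ with $i\in T$ contains $j$; since $S_i \subseteq \{j,\dots,n\}$, every element of $S_i$ is $\geq j$, so $\min S_i = j$. The map sending $j$ to such an $S_i$ is injective because a set has a unique minimum, so $k \geq n$. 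Combined with Step 2, $\NAANDT(\OMB_n) = n$.

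The only place requiring genuine care is Step 1 — getting the parity bookkeeping in the regrouping right and checking the boundary input $1^n$; Steps 2 and 3 are then routine, and the structural lower bound argument presents no real obstacle.
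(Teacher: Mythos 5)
Your proof is correct and takes essentially the same approach as the paper: both determine the M\"obius support of $\OMB_n$ to be the suffix sets $\{j,\dots,n\}$ (plus $\emptyset$ for odd $n$), and both derive the lower bound $\NAANDT(\OMB_n)\geq n$ from Claim~\ref{claim: naandt basis monomial product} by arguing that for each $j\in[n]$ some basis set must have minimum exactly $j$. The only cosmetic difference is that you obtain the expansion by a telescoping sum of indicators $g_{j+1}-g_j$ rather than the paper's two-step recursion, and your Step 3 argues directly for each $j$ rather than iterating from $j=n$ down; both are clean and equivalent.
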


\begin{proof}[Proof of Claim~\ref{claim: max naandt OMB}]
Write the polynomial representation of $\OMB_n$ as $\OMB_n(x) = $
\begin{align}\label{eqn: OMB and expansion}
(1-x_n)\cdot 0 + x_n(1-x_{n-1}) \cdot 1 + x_nx_{n-1}\OMB_{n-2}(x_1, \dots, x_{n-2}) & \quad \text{if $n$ is even, or}\\
(1-x_n)\cdot 1 + x_n(1-x_{n-1}) \cdot 0 + x_nx_{n-1}\OMB_{n-2}(x_1, \dots, x_{n-2}) & \quad \text{if $n$ is odd}.
\end{align}
The M\"obius support of $\OMB_n$ equals $\bra{\bra{j, \dots, n} : j \leq n} \cup \bra{\emptyset}$ if $n$ is odd, and $\bra{\bra{j, \dots, n} : j \leq n}$ if $n$ is even. Thus $\spar(\OMB_n) = n+1$ if $n$ is odd, and equals $n$ if $n$ is even.

We now show that the $\NAANDT(\OMB_n) = n$. Let $\S$ denote a NAADT basis for $\OMB_n$.
By Claim~\ref{claim: naandt basis monomial product}, any monomial in the M\"obius expansion of $\OMB_n$ can be expressed as a product of some ANDs from $\S$. 
Thus, $\bra{n}$ must participate in $\S$ since it appears in its M\"obius support. Next, since $\bra{n-1, n}$ appears in the support as well, either $\bra{n-1, n}$ or $\bra{n-1}$ must appear in $\S$. Continuing iteratively, we conclude that for all $i \in [n]$, there must exist a set in $\S$ that contains $i$, but does not contain any $j$ for $j < i$. This implies that $|\S| \geq n$. Equality holds since $\NAANDT(f) \leq n$ for any Boolean function $f : \zone^n \to \zone$.
\end{proof}
Thus $\OMB_n$ witnesses that \textnaandt~complexity can be as large as sparsity. 
We remark here that $\OMB_n$ admits a simple (adaptive) AND-decision tree that makes $O(\log n)$ AND-queries in the worst case. This uses a binary search using AND-queries to determine the right-most index where a 0 is present.
One might expect that a result similar to Claim~\ref{claim: foxor oneway equals napdt} holds when the inner function is $\AND$ instead of $\XOR$. That is, it is plausible that the deterministic one-way communication complexity of $f \circ \AND$ equals the \textnaandt~complexity of $f$. We show that this is not true, and exhibit an exponential separation between $\Doneway(\OMB_n \circ \AND)$ and $\NAANDT(\OMB_n)$.

\begin{claim}\label{claim: doneway omb}
Let $n$ be a positive integer. Then $\Doneway(\OMB_n \circ \AND) = \lc\log(n + 1)\rc$.
\end{claim}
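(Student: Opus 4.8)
The plan is to compute $\mpat(\OMB_n)$ exactly and then invoke Claim~\ref{claim: doneway equals log pattern}, which tells us $\Doneway(\OMB_n \circ \AND) = \lceil \log(\mpat(\OMB_n)) \rceil$. From Claim~\ref{claim: max naandt OMB} we know the M\"obius support $\S_{\OMB_n}$ is the chain of ``suffix sets'' $\{j, j+1, \dots, n\}$ for $j \in [n]$, together with $\emptyset$ when $n$ is odd. Since these sets are totally ordered by inclusion, for any input $x \in \zone^n$ the pattern $(\AND_S(x))_{S \in \S_{\OMB_n}}$ is a \emph{monotone} $\zone$-string along this chain: $\AND_{\{j,\dots,n\}}(x) = 1$ iff $x_j = x_{j+1} = \dots = x_n = 1$, so as $j$ decreases the value can only switch from $1$ to $0$. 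Concretely, if $k$ is the largest index with $x_k = 0$ (or $k=0$ if $x = 1^n$), then $\AND_{\{j,\dots,n\}}(x) = 1$ precisely for $j > k$, and the $\emptyset$-coordinate (if present) is always $1$.

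First I would argue the upper bound $\mpat(\OMB_n) \leq n+1$: the pattern is completely determined by the value $k \in \{0, 1, \dots, n\}$ defined above (the position of the rightmost zero, with $0$ meaning ``no zero''), giving at most $n+1$ distinct patterns. Next I would argue the matching lower bound $\mpat(\OMB_n) \geq n+1$ by exhibiting $n+1$ inputs realizing distinct values of $k$: take $x = 1^n$ (giving $k=0$) and, for each $k \in [n]$, take $x$ with $x_k = 0$ and all other coordinates $1$ (giving rightmost-zero position $k$). These $n+1$ inputs induce $n+1$ distinct patterns since the truncation point of the monotone string along the chain differs in each case. Hence $\mpat(\OMB_n) = n+1$, and Claim~\ref{claim: doneway equals log pattern} gives $\Doneway(\OMB_n \circ \AND) = \lceil \log(n+1) \rceil$.

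Alternatively, and perhaps more transparently, one can prove the claim by a direct protocol-and-fooling-set argument without routing through pattern complexity: for the upper bound, Alice sends Bob the index of the rightmost $0$ in $x$ (or a flag if $x = 1^n$), using $\lceil \log(n+1)\rceil$ bits; Bob then knows that the rightmost position where $x_i \wedge y_i = 0$ is the larger of Alice's index and the rightmost $0$ of $y$ among coordinates to the right, so he can compute $\OMB_n \circ \AND(x,y)$. For the lower bound, I would use the $n+1$ inputs above as Alice-inputs and show that no two can share a message, by producing for each pair a Bob-input $y$ (e.g.\ $y = 1^n$) on which the composed function values differ—this is exactly the fooling-set/pattern-distinctness observation again. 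The main thing to get right is the bookkeeping in the upper-bound protocol when $x = 1^n$ or when all of Bob's relevant coordinates are $1$; I expect this edge-case handling, rather than any conceptual difficulty, to be the only mild obstacle.
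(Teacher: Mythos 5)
Your primary argument---compute $\mpat(\OMB_n)=n+1$ by observing that the M\"obius support is a chain of suffix sets so the pattern is determined by the position of the rightmost zero, then invoke Claim~\ref{claim: doneway equals log pattern}---is exactly the paper's proof (the paper realizes the $n+1$ patterns with the inputs $0^{n-i}1^i$ rather than your single-zero inputs, but this is cosmetic). Your alternative direct protocol-plus-fooling-set sketch is also sound and coincides with the informal discussion of the upper bound given in the paper's introduction.
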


\begin{proof}
By expanding Equation~\eqref{eqn: OMB and expansion} we have that the M\"obius support of $\OMB_n$ equals the set $\S = \cbra{\bra{n}, \bra{n-1, n}, \dots, \bra{n, n-1, \dots, 1}}$ if $n$ is an even integer, and equals
the set $\S = \cbra{\emptyset, \bra{n}, \bra{n-1, n}, \dots, \bra{n, n-1, \dots, 1}}$ if $n$ is an odd integer.
It is easy to verify that the only possible M\"obius patterns attainable (ignoring the empty set since it always evaluates to 1) are $1^i0^{n-i}$, for $i \in \bra{0, 1, \dots, n}$. Moreover, all of these patterns are attainable: the pattern $1^i0^{n-i}$ is attained by the input string $0^{n-i}1^i$. Thus $\mpat(\OMB_n) = n+1$.
Claim~\ref{claim: doneway equals log pattern} implies $\Doneway(\OMB_n \circ \AND) = \lc\log(n+1)\rc$.
\end{proof}
We obtain our main result of this section, which follows from Claim~\ref{claim: max naandt OMB} and Claim~\ref{claim: doneway omb}.
\begin{theorem}\label{thm: omb and exponential separation}
Let $n$ be a positive integer. Then $    \NAANDT(\OMB_n) = n$ and $\Doneway(\OMB_n \circ \AND) = \lc\log(n + 1)\rc$.
\end{theorem}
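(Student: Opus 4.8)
The plan is to prove Theorem~\ref{thm: omb and exponential separation} by simply assembling the two claims that precede it, since each half of the statement is exactly one of those claims. For the equality $\NAANDT(\OMB_n) = n$, I would invoke Claim~\ref{claim: max naandt OMB}; for the equality $\Doneway(\OMB_n \circ \AND) = \lc\log(n+1)\rc$, I would invoke Claim~\ref{claim: doneway omb}. So the ``proof'' is a one-liner. The substance lives in proving those two claims, so let me sketch how I would argue each from scratch in case they were not already available.

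For $\NAANDT(\OMB_n) = n$: First I would write down the M\"obius (AND-basis) expansion of $\OMB_n$ via the recursive decomposition on the last two coordinates $x_n, x_{n-1}$, as in Equation~\eqref{eqn: OMB and expansion}. Unwinding the recursion shows the M\"obius support is exactly the nested family $\cbra{\bra{n}, \bra{n-1,n}, \dots, \bra{1,\dots,n}}$ (plus $\emptyset$ when $n$ is odd). The upper bound $\NAANDT(\OMB_n) \le n$ is trivial since any $n$-bit function can be computed from its $n$ variables. For the lower bound, I would use Claim~\ref{claim: naandt basis monomial product}: any NAADT basis $\S$ must be such that every monomial in the M\"obius support is a \emph{product} of $\AND_{S_i}$'s from $\S$. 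Now argue by downward induction on $i$ that $\S$ must contain a set whose minimum element is exactly $i$: the singleton $\bra{n}$ forces such a set for $i=n$; given the claim holds for all $i' > i$, the set $\bra{i, i+1, \dots, n}$ in the support can only be written as a union (product of ANDs means union of index sets) of sets from $\S$, and to reach element $i$ some set in $\S$ must contain $i$, while the already-identified sets all have larger minima so cannot supply smaller elements — pushing through gives a fresh set with minimum exactly $i$. These $n$ sets are distinct, so $\abs{\S} \ge n$.

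For $\Doneway(\OMB_n \circ \AND) = \lc\log(n+1)\rc$: I would use Claim~\ref{claim: doneway equals log pattern}, which says $\Doneway(f \circ \AND) = \lc\log(\mpat(f))\rc$, and hence it suffices to compute $\mpat(\OMB_n)$. Given that $\S_{\OMB_n}$ is the nested chain described above, for any input $x$ the vector $(\AND_{S}(x))_{S \in \S_{\OMB_n}}$ is monotone along the chain: if a larger set (more restrictive AND) evaluates to $1$, so does every smaller set in the chain. So the pattern is determined by the ``cutoff'' index, giving exactly $n+1$ possible patterns $1^i 0^{n-i}$ for $i \in \cbra{0,\dots,n}$; each is realized (e.g.\ by $0^{n-i}1^i$), so $\mpat(\OMB_n) = n+1$. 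Combining, $\Doneway(\OMB_n \circ \AND) = \lc \log(n+1)\rc$.

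I do not expect a genuine obstacle here — both constituent claims are already proved in the excerpt, so the theorem is immediate. If anything, the only mildly delicate point is the lower-bound induction for $\NAANDT$: one must be careful that ``a set in $\S$ containing $i$ but no $j < i$'' is genuinely forced and that these sets are distinct across values of $i$ (which is clear since they have distinct minima). Everything else is bookkeeping with the explicit M\"obius support of $\OMB_n$.
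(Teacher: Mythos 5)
Your proposal is correct and matches the paper exactly: the paper's proof of Theorem~\ref{thm: omb and exponential separation} is literally a one-line citation of Claim~\ref{claim: max naandt OMB} and Claim~\ref{claim: doneway omb}, and your sketches of how those two claims are proved (recursive M\"obius decomposition, nested-chain support, Claim~\ref{claim: naandt basis monomial product} for the $\NAANDT$ lower bound, and $\mpat(\OMB_n) = n+1$ via the monotone cutoff patterns combined with Claim~\ref{claim: doneway equals log pattern}) mirror the paper's arguments for those claims as well.
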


\subsection{Quantum Complexity}

We prove that even the \textqnaandt~complexity of $\OMB_n$ is $\Omega(n)$. In view of the small one-way communication complexity of $\OMB_n \circ \AND$ from Claim~\ref{claim: doneway omb}, Theorem~\ref{thm: intro omb separation} then follows.
\begin{theorem}\label{thm: quantum NAANDT OMB}
Let $n$ be a positive integer. Then $\QNAANDT(\OMB_n) = \Omega(n)$.
\end{theorem}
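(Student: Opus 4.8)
The plan is to adapt Montanaro's argument~\cite{Mon10} for $\Qdt(f) = \Omega(n)$ to the non-adaptive AND-query model, exploiting the fact that $\OMB_n$ has maximal alternating number. Fix a \textqnaandt~of cost $c$ computing $\OMB_n$ to error $\eps = 1/3$, with starting state $\ket{\psi}$ (a superposition over basis states $\ket{S_1,\dots,S_c}\ket{b}\ket{w}$) and final two-outcome measurement $\cbra{\Pi, I - \Pi}$; write $\ket{\psi_z} := O_z\ket{\psi}$ for the pre-measurement state on input $z \in \zone^n$. First, I would fix the monotone path $x^{(0)} = 0^n, x^{(1)}, \dots, x^{(n)} = 1^n$ defined by $x^{(k)} := 0^{n-k}1^k$, so that $x^{(k)}$ is obtained from $x^{(k-1)}$ by flipping coordinate $n - k + 1$ from $0$ to $1$. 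From Definition~\ref{defn: omb} one checks directly that $\OMB_n(x^{(k-1)}) \neq \OMB_n(x^{(k)})$ for every $k \in [n]$: for $k < n$ the topmost zeros of $x^{(k-1)}$ and $x^{(k)}$ sit at positions $n - k + 1$ and $n - k$, which have opposite parity, and $\OMB_n(x^{(n-1)}) = 1 \neq 0 = \OMB_n(1^n)$. Claim~\ref{claim: final states far} then gives $\| \ket{\psi_{x^{(k)}}} - \ket{\psi_{x^{(k-1)}}} \|^2 \geq 2 - 4\sqrt{\eps(1-\eps)} =: \delta$ for all $k \in [n]$, with $\delta > 0$ as $\eps < 1/2$.

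Next, I would localize each of these $n$ state changes. Call $S \subseteq [n]$ \emph{$k$-sensitive} if $\AND_S(x^{(k-1)}) \neq \AND_S(x^{(k)})$. Since $x^{(k-1)}$ and $x^{(k)}$ differ only at coordinate $n - k + 1$, comparing the conditions $S \subseteq \{n-k+1,\dots,n\}$ and $S \subseteq \{n-k+2,\dots,n\}$ shows that $S$ is $k$-sensitive iff $\min(S) = n - k + 1$; hence every nonempty $S \subseteq [n]$ is $k$-sensitive for exactly one $k \in [n]$, and $\emptyset$ for none. Let $P_k$ be the projector onto basis states $\ket{S_1,\dots,S_c}\ket{b}\ket{w}$ in which at least one $S_j$ is $k$-sensitive. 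The oracles $O_{x^{(k-1)}}$ and $O_{x^{(k)}}$ act identically on all basis states outside the support of $P_k$ and each is unitary, so $\ket{\psi_{x^{(k)}}} - \ket{\psi_{x^{(k-1)}}} = (O_{x^{(k)}} - O_{x^{(k-1)}})P_k\ket{\psi}$, which gives $2\|P_k\ket{\psi}\| \geq \| \ket{\psi_{x^{(k)}}} - \ket{\psi_{x^{(k-1)}}} \| \geq \sqrt{\delta}$, i.e.\ $\|P_k\ket{\psi}\|^2 \geq \delta/4$. Summing over $k$ gives $\sum_{k=1}^n \|P_k\ket{\psi}\|^2 \geq n\delta/4$.

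Finally, I would bound the same sum from above by $c$. Writing $\Pi_S^{(j)}$ for the projector onto ``the $j$-th set register equals $S$'', the operator $\sum_{j=1}^c \sum_{S~k\text{-sensitive}} \Pi_S^{(j)}$ counts, on each basis state, the number of $k$-sensitive registers, which is at least $1$ exactly on the support of $P_k$; hence $P_k \preceq \sum_{j=1}^c \sum_{S~k\text{-sensitive}} \Pi_S^{(j)}$ as operators, so $\|P_k\ket{\psi}\|^2 \leq \sum_{j=1}^c \sum_{S~k\text{-sensitive}} \|\Pi_S^{(j)}\ket{\psi}\|^2$. Summing over $k$ and using that each nonempty $S$ is $k$-sensitive for exactly one $k$, we get $\sum_{k=1}^n \|P_k\ket{\psi}\|^2 \leq \sum_{j=1}^c \sum_{S \subseteq [n]} \|\Pi_S^{(j)}\ket{\psi}\|^2 = \sum_{j=1}^c 1 = c$. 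Combining with the lower bound yields $c \geq n\delta/4 = \Omega(n)$, proving the theorem; for $\eps = 1/3$, $\delta = 2 - 4\sqrt{2}/3 \approx 0.114$.

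The triangle-inequality and union-bound steps are routine. The part needing care is the combinatorial core: the identity $\ket{\psi_{x^{(k)}}} - \ket{\psi_{x^{(k-1)}}} = (O_{x^{(k)}} - O_{x^{(k-1)}})P_k\ket{\psi}$ together with the structural fact that $S$ is $k$-sensitive iff $\min(S) = n - k + 1$ — this is precisely what forces each set register to contribute to at most one step of the length-$n$ path, so that the total ``budget'' the $n$ unavoidable state changes draw from $\ket{\psi}$ is at most $c$, delivering the linear lower bound. One should also double-check that the definition of \textqnaandt~in Definition~\ref{def: qnaandt} makes $O_z\ket{\psi}$ literally the pre-measurement state (any final unitary is absorbed into $\Pi$), which it does.
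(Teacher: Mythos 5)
Your proof is correct, and the core idea is the same as the paper's: follow the monotone chain $x^{(k)} = 0^{n-k}1^k$ on which $\OMB_n$ alternates at every step, and exploit the structural fact that for a nonempty $S$, the value of $\AND_S$ on this chain is determined entirely by $\min(S)$, so each set register can ``pay'' for at most one of the $n$ unavoidable state changes. The packaging differs: the paper first restricts to the promise domain of the chain to get a partial function $\OMB'_n$, shows (Claim~\ref{claim: omb' reduction}) that on this domain each AND-query reduces to a single-bit query, and then invokes a generic lower bound (Theorem~\ref{thm: quantum dt lower bound}, extracted from Montanaro~\cite{Mon10}) for non-adaptive quantum decision trees on functions with many sensitive coordinates; you instead inline the whole argument, tracking the ``$k$-sensitive'' AND-queries directly and building the accounting inequality $\sum_{k}\|P_k\ket\psi\|^2 \leq c$ from scratch. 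Both routes yield the same constant and, as you note, both generalize immediately to the alternating-number bound for quantum non-adaptive monotone decision trees (cf.\ the paper's remark following Theorem~\ref{thm: quantum NAANDT OMB}). The paper's modularization is arguably cleaner for reuse, but your self-contained version makes the budget argument more transparent. All the steps you flagged as routine check out, including the characterization that $S$ is $k$-sensitive iff $\min(S) = n-k+1$ and the identity $\ket{\psi_{x^{(k)}}} - \ket{\psi_{x^{(k-1)}}} = (O_{x^{(k)}} - O_{x^{(k-1)}})P_k\ket\psi$.
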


\begin{remark}The fact that the \textnaandt~complexity (in fact even the non-adaptive \emph{monotone} decision tree complexity, where the tree is allowed to make arbitrary monotone queries rather than just ANDs) of $\OMB_n$ equals $\Omega(n)$ already follows by a recent result of Amireddy, Jayasurya and Sarma~\cite{AJS20}. They show that any function with alternating number $k$ (which is the largest number of switches of the function's value along a monotone path from $0^n$ to $1^n$) must have non-adaptive monotone decision tree complexity at least $k$. In Theorem~\ref{thm: quantum NAANDT OMB}, we show that even the \emph{quantum} \textnaandt~complexity of $\OMB_n$ is $\Omega(n)$. In fact our proof can easily be adapted to show that any function with alternating number $k$ must have quantum non-adaptive monotone decision tree complexity $\Omega(k)$, thus generalizing the above-mentioned result~\cite{AJS20}.
\end{remark}

Before we prove this theorem, we introduce an auxiliary function and state some properties of it that are of use to us.
\begin{definition}\label{defn: omb'}
Let $n$ be a positive integer. Let us define the set $\sS \subset \zone^n$ to be $\sS = \cbra{x \in \zone^n : x = 0^i1^{n - i}~\text{for some}~i \in [n]}$. Define the partial function $\OMB'_n : \sS \to \zone$ by $\OMB'_n(x) = \OMB_n(x)$.
\end{definition}

\begin{claim}\label{claim: omb' reduction}
Let $n$ be a positive integer. Then $\RNAANDT(\OMB'_n) = \Rdt(\OMB'_n)$ and $\QNAANDT(\OMB'_n) = \Qdt(\OMB'_n)$.
\end{claim}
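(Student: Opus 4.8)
The plan is to show that for the partial function $\OMB'_n$, restricting attention to AND queries (or singleton queries) on its domain $\sS$ costs nothing, because on $\sS$ every subset-AND collapses to reading a single relevant coordinate. Concretely, recall $\sS = \cbra{0^i 1^{n-i} : i \in [n]}$, so an input $x \in \sS$ is determined by the threshold index. For a nonempty set $S \subseteq [n]$, let $m(S) := \min S$. Then for every $x = 0^i 1^{n-i} \in \sS$ we have $\AND_S(x) = 1$ iff all coordinates in $S$ are $1$ iff $m(S) > i$ iff $x_{m(S)} = 1$. Hence $\AND_S(x) = x_{m(S)}$ for all $x \in \sS$; that is, on the domain $\sS$ the AND-query $S$ is \emph{identical} as a function to the single-variable query $m(S)$ (and the empty-set query is the constant $1$, which conveys nothing). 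I would state this as a preliminary observation.

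Given this observation, the two equalities follow by sandwiching. First, $\RNAANDT(\OMB'_n) \le \Rdt(\OMB'_n)$ and $\QNAANDT(\OMB'_n) \le \Qdt(\OMB'_n)$ trivially, since a singleton-query algorithm is a special case of an AND-query algorithm (every singleton $\cbra{j}$ is a legal AND query). For the reverse directions, take any (randomized or quantum) non-adaptive AND-decision tree $T$ for $\OMB'_n$ of cost $c$, with query sets $S_1,\dots,S_c$. Build $T'$ by replacing each query set $S_j$ with the singleton $\cbra{m(S_j)}$ if $S_j \neq \emptyset$, and by deleting the query (hardwiring its answer to $1$) if $S_j = \emptyset$. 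In the quantum case this means: for the basis state $\ket{S_1,\dots,S_c}\ket{b}\ket{w}$ used by $T$, run $T'$ on the corresponding state where each $S_j$ is replaced by $\cbra{m(S_j)}$ (or the query is absorbed into the workspace with answer $1$ when $S_j=\emptyset$), using the same starting amplitudes and the same final measurement $\cbra{\Pi, I-\Pi}$. By the preliminary observation, for every $x \in \sS$ the oracle $O_x$ acts identically on these corresponding states — it XORs in $\AND_{S_j}(x) = x_{m(S_j)} = \AND_{\cbra{m(S_j)}}(x)$ — so the post-oracle states are identical and hence $T'$ accepts $x$ with exactly the same probability as $T$. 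Therefore $T'$ computes $\OMB'_n$ with the same error and cost at most $c$, giving $\Rdt(\OMB'_n) \le \RNAANDT(\OMB'_n)$ and $\Qdt(\OMB'_n) \le \QNAANDT(\OMB'_n)$. Combining the two directions yields both claimed equalities.

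I do not expect a genuine obstacle here; the only point that needs a little care is the quantum direction, where one must check that the substitution $S_j \mapsto \cbra{m(S_j)}$ can be implemented as a genuine relabelling of the query registers (the sets are classical labels on basis states, so this is just a unitary permutation of part of the computational basis) and that handling the $S_j = \emptyset$ case — a query whose answer is the constant $1$ on all of $\sS$ — does not require an extra query, which it does not since its answer can be precomputed into $\ket{\psi}$. Everything else is the routine observation that non-adaptive query algorithms only see the \emph{function} computed by each query on the promise set, together with $O_x$ acting diagonally in the query basis.
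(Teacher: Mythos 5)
Your proposal is correct and matches the paper's proof in substance: both observe that on the domain $\sS$ every nonempty AND-query $\AND_S$ coincides with the single-variable query $x_{\min S}$, and both then perform the query-by-query replacement to get the reverse inequality (the forward inequality being trivial). You are slightly more explicit than the paper about the $S=\emptyset$ case and about verifying that the substitution is a legitimate relabelling of basis states in the quantum model, but these are the same ideas the paper leaves implicit.
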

\begin{proof}
The inequalities $\RNAANDT(\OMB'_n) \leq \Rdt(\OMB'_n)$ and $\QNAANDT(\OMB'_n) \leq \Qdt(\OMB'_n)$ are clear by definitions. In the other direction, consider a (randomized or quantum) \textnaandt~computing $\OMB'_n$. Let $S \subseteq [n]$ be a subset of variables whose AND is queried (possibly as part of a superposition) in this tree. Let $S = \cbra{i_1, \dots, i_k}$, where $i_1 < i_2 < \cdots < i_k$. Observe that $\AND_{S}(x) = x_{i_1}$ for all $x$ in the domain of $\OMB'_n$. Thus the query to $\AND_S$ can be replaced by a query to the $i_1$'th variable. Hence any (randomized or quantum) \textnaandt~can be converted to a (randomized or quantum) non-adaptive decision tree without changing its correctness or complexity. Thus $\RNAANDT(\OMB'_n) \geq \Rdt(\OMB'_n)$ and $\QNAANDT(\OMB'_n) \geq \Qdt(\OMB'_n)$, proving the claim.
\end{proof}

We require the following result, which follows implicitly from a result of Montanaro~\cite{Mon10}. 

\begin{theorem}\label{thm: quantum dt lower bound}
Let $\sS \subseteq \zone^n$, $I \subseteq [n]$ and $f : \sS \to \zone$ be such that for all $i \in I$ there exists $x \in \sS$ such that $f(x \oplus e_i) = 1 - f(x)$. Then $\Qdt(f) = \Omega(\abs{I})$.
\end{theorem}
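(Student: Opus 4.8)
The plan is to adapt Montanaro's lower bound argument for non-adaptive quantum query complexity, using Claim~\ref{claim: final states far} as the key analytic tool. Let $\mathcal{A}$ be a quantum non-adaptive decision tree computing $f$ to error $1/3$ with cost $c$, working on a state space $\ket{S_1,\dots,S_c}\ket{b}\ket{w}$, with starting state $\ket{\psi} = \sum_{S,b,w}\alpha_{S,b,w}\ket{S}\ket{b}\ket{w}$ (here $S = (S_1,\dots,S_c)$ ranges over $c$-tuples of subsets). On input $x$ the final pre-measurement state is $\ket{\psi_x} = O_x\ket{\psi} = \sum_{S,b,w}\alpha_{S,b,w}\ket{S}\ket{b \oplus (\AND_{S_1}(x),\dots,\AND_{S_c}(x))}\ket{w}$. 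The first step is to write down the inner product $\langle\psi_x | \psi_y\rangle$: since the oracle only permutes the $b$-register, $\langle\psi_x|\psi_y\rangle = \sum_{S,w}\left(\sum_b \overline{\alpha_{S,b,w}}\,\alpha_{S,b',w}\right)$ where $b' = b \oplus (\AND_{S_j}(x))_j \oplus (\AND_{S_j}(y))_j$; grouping by the ``query difference pattern'' this equals $\sum_{S,w}\sum_b \overline{\alpha_{S,b,w}}\,\alpha_{S, b \oplus \delta_S(x,y), w}$ where $\delta_S(x,y)_j = \AND_{S_j}(x) \oplus \AND_{S_j}(y)$. In particular, when $x$ and $y = x \oplus e_i$ differ in a single coordinate $i$, $\AND_{S_j}(x) \oplus \AND_{S_j}(x\oplus e_i)$ is nonzero only for those $j$ with $i \in S_j$ and $x$ agreeing with the all-ones pattern on $S_j\setminus\{i\}$; this localizes the ``damage'' of flipping coordinate $i$ to the registers recording queries of sets containing $i$.

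The core step is to show that for each $i \in I$, the weight that $\ket{\psi}$ places on basis states $\ket{S}\ket{b}\ket{w}$ with $i \in S_j$ for some $j$ must be $\Omega(1)$. Fix $i \in I$ and let $x$ be such that $f(x) \neq f(x\oplus e_i)$. By Claim~\ref{claim: final states far}, $\|\ket{\psi_x} - \ket{\psi_{x\oplus e_i}}\|^2 \geq 2 - 4\sqrt{\eps(1-\eps)} = \Omega(1)$. On the other hand, $\ket{\psi_x}$ and $\ket{\psi_{x\oplus e_i}}$ differ only in the amplitudes of basis states whose $S$-register contains at least one set $S_j \ni i$ (on the remaining basis states $O_x$ and $O_{x\oplus e_i}$ act identically), so $\|\ket{\psi_x} - \ket{\psi_{x\oplus e_i}}\|^2 \leq 4\sum_{S : \exists j,\, i \in S_j} \sum_{b,w}|\alpha_{S,b,w}|^2 =: 4 p_i$. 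Hence $p_i = \Omega(1)$ for every $i \in I$. The final step is a counting argument: since each basis state $\ket{S}\ket{b}\ket{w}$ has $S = (S_1,\dots,S_c)$ a tuple of $c$ sets, it contributes to $p_i$ for at most $\sum_j |S_j| \leq$ (at most $cn$ coordinates, but more carefully) only the coordinates $i$ lying in $S_1 \cup \cdots \cup S_c$. That alone is not enough; instead, I would use $\sum_{i \in I} p_i = \sum_{S,b,w}|\alpha_{S,b,w}|^2 \cdot |\{i \in I : i \in S_1\cup\cdots\cup S_c\}| \leq \sum_{S,b,w}|\alpha_{S,b,w}|^2 \cdot (\text{something}) $. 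Here I need the bound $|\{i : i \in \bigcup_j S_j\}| \leq$ — but a single set $S_j$ can be huge, so this naive count gives $cn$, which is too weak.

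The main obstacle is exactly this last point, and resolving it is where Montanaro's idea enters: a query to $\AND_{S_j}$ only ``distinguishes'' the flip of coordinate $i \in S_j$ at inputs $x$ lying on the ``face'' where all other coordinates of $S_j$ are $1$. So I should not bound by $i \in S_j$ but rather track, for each $i \in I$, a \emph{specific} input $x^{(i)}$ witnessing the sensitivity, and argue more carefully. Concretely: for coordinate $i$ with witness $x^{(i)}$, the relevant weight is $p_i = \sum_{S : \exists j,\, i \in S_j,\, \AND_{S_j\setminus\{i\}}(x^{(i)}) = 1}\sum_{b,w}|\alpha_{S,b,w}|^2$. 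Now fix a basis state $(S,b,w)$ with positive amplitude and a single set $S_j$ in the tuple; I claim $S_j$ contributes to $p_i$ (via this $j$) for at most one value of $i$ — no wait, that is false too in general. The correct statement, and the crux of the argument, is: among all $i \in S_j$, if both $i$ and $i'$ satisfy $\AND_{S_j\setminus\{i\}}(x^{(i)}) = 1$ and $\AND_{S_j\setminus\{i'\}}(x^{(i')}) = 1$, there is no contradiction in general, so instead one observes that \emph{for a fixed query set $S_j$, the inputs $x^{(i)}$ for $i\in S_j$ that activate it are exactly those that are all-ones on $S_j$ except possibly at $i$} — and one uses that $\OMB'_n$-type structure (the reduction in Claim~\ref{claim: omb' reduction}) restricts attention to chain inputs so that each query set behaves like a single variable. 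Since Theorem~\ref{thm: quantum dt lower bound} is stated for genuine (singleton-query) non-adaptive quantum decision trees rather than AND-decision trees, the honest plan is: (1) establish the $p_i = \Omega(1)$ lower bound exactly as above but now with singleton queries $S_j = \{i_j\}$, so $p_i = \sum_{j : i_j = i}\sum_{b,w}|\alpha_{S,b,w}|^2$ with no face condition; (2) then $\sum_{i\in I} p_i \leq \sum_{S,b,w}|\alpha_{S,b,w}|^2\cdot c = c$ since the tuple $(i_1,\dots,i_c)$ hits at most $c$ distinct indices; (3) combine $|I|\cdot\Omega(1) \leq \sum_{i\in I}p_i \leq c$ to get $c = \Omega(|I|)$. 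So the only genuinely delicate part is step (1) — carefully verifying that $\ket{\psi_x}$ and $\ket{\psi_{x\oplus e_i}}$ agree on all basis states whose query tuple avoids $i$, which is immediate from the oracle's action, and then invoking Claim~\ref{claim: final states far}. I expect step (1)'s bookkeeping — correctly identifying which amplitudes change under a single-coordinate flip — to be the one place to be careful, and everything else to be routine counting.
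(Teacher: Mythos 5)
Your final plan (steps (1)--(3) for singleton queries) is correct and is essentially the paper's proof: decompose the initial state into the component supported on basis states whose query tuple contains $i$ and the component that doesn't, note the two oracles agree on the latter, apply Claim~\ref{claim: final states far} and the triangle inequality to get $p_i = \Omega(1)$, and then observe each basis-state weight $\abs{\alpha_{a_1,\dots,a_T,b,w}}^2$ contributes to $p_i$ for at most $T$ values of $i$. The extended preamble about AND queries is a detour (since $\Qdt$ is defined with singleton queries), but you correctly discard it and the final argument is the same as in the paper.
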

We provide a proof below for completeness.
\begin{proof}
Consider a $T$-query non-adaptive quantum query algorithm that computes $f$ to error at most $\eps = 1/3$. The algorithm works with a state space $\ket{a_1, \dots, a_T}\ket{b}\ket{w}$, where each $a_j \in [n]$, $b \in \zone^T$ and the last register captures the workspace. The algorithm does the following on input $x \in S$:
\begin{itemize}
    \item It starts in an input-independent state, say $\sum_{a_1, \dots, a_T, b, w}\alpha_{a_1, \dots, a_T, b, w}\ket{a_1, \dots, a_T}\ket{b}\ket{w}$,
    \item applies $T$ non-adaptive queries, that act on the basis states as follows:
    \[
    O_x : \ket{a_1, \dots, a_T}\ket{b_1, \dots, b_T}\ket{w} \mapsto \ket{a_1, \dots, a_T}\ket{b_1 \oplus x_{a_1}, \dots, b_T \oplus x_{a_T}}\ket{w}.
    \]
    \item It then applies a two-outcome projective measurement $\cbra{\Pi, I - \Pi}$ on the resulting state $O_x \ket{\psi}$ and outputs a value depending on the measurement outcome.
\end{itemize}

Let $\ket{\psi_x}$ denote the state of the algorithm on input $x \in \zone^n$ just before measurement.

Fix $i \in I$. By assumption, there exists $x \in \zone^n$ such that $f(x) = 1 - f(x \oplus e_i)$. We define 
\[
J_i := \cbra{(a_1,\dots,a_T,b,w) \mid i \in \cbra{a_1,\dots,a_T}} 
\]
to be (the indices of) those basis states that query $x_i$. Write the initial state of our algorithm as $\ket{\psi} = \ket{\phi_{1i}} + \ket{\phi_{2i}}$, where $\ket{\phi_{1i}} = \sum_{s \in J_i} \alpha_s \ket{s}$. Thus,
\begin{align*}
    0.114 \approx 2 - 4\sqrt{2}/3 & \leq \|O_x \ket{\psi} - O_{x \oplus e_i} \ket{\psi}\|^2 \tag*{by Claim~\ref{claim: final states far}}\\
    & = \| O_x \ket{\phi_{1i}} + O_x \ket{\phi_{2i}} - O_{x \oplus e_i} \ket{\phi_{1i}} - O_{x \oplus e_i} \ket{\phi_{2i}} \|^2\\
    & = \| O_x \ket{\phi_{1i}} - O_{x \oplus e_i} \ket{\phi_{1i}} \|^2 \tag*{since $O_{x}$ and $O_{x \oplus e_i}$ have the same action on $\ket{\phi_{2i}}$}\\
    & \leq (\|O_x \ket{\phi_{1i}}\| + \|O_{x \oplus e_i} \ket{\phi_{1i}}\|)^2 \tag*{by the triangle inequality}\\
    & \leq \left(2\cdot\sqrt{\sum_{(a_1, \dots, a_T, b, w) \in J_i}\abs{\alpha_{a_1, \dots, a_T, b, w}}^2}\right)^2.
\end{align*}
Summing over all $i \in I$, this implies
\begin{align*}
    \Omega(\abs{I}) \leq \sum_{i \in [n]}\sum_{(a_1, \dots, a_T,b,w) \in J_i} \abs{\alpha_{a_1, \dots, a_T, b, w}}^2 \leq T \sum_{a_1, \dots, a_T,b,w} \abs{\alpha_{a_1, \dots, a_T, b, w}}^2 \leq T,
\end{align*}
where the second inequality follows because each $\abs{\alpha_{a_1, \dots, a_T, b, w}}^2$ can appear in the summation for at most $T$ values of $i$, and the last inequality follows since the sum of squares of the moduli of the amplitudes must equal 1. The theorem now follows.
\end{proof}

\begin{proof}[Proof of Theorem~\ref{thm: quantum NAANDT OMB}]
    Clearly $\QNAANDT(\OMB_n) \geq \QNAANDT(\OMB'_n)$. Claim~\ref{claim: omb' reduction} implies that $\QNAANDT(\OMB'_n) = \Qdt(\OMB'_n)$. Recall that the domain of $\OMB'_n$ consists only of the inputs $\cbra{x \in \zone^n : x = 0^i1^{n - i}~\text{for some}~i \in [n]}$. By definition, $\OMB'_n(0^i1^{n - i}) \neq \OMB'_n(0^{i-1}1^{n - i + 1})$ for all $i \in [n]$. Thus Theorem~\ref{thm: quantum dt lower bound} is applicable with $I = [n]$ and $f = \OMB'_n$. Combining the above, we have $\QNAANDT(\OMB_n) \geq \QNAANDT(\OMB'_n) = \Qdt(\OMB'_n) = \Omega(n)$.
\end{proof}
\begin{proof}[Proof of Theorem~\ref{thm: intro omb separation}]
It follows from Claim~\ref{claim: doneway omb} and Theorem~\ref{thm: quantum NAANDT OMB}.
\end{proof}

\subsection{Symmetric Functions}

In this section we show that symmetric functions $f$ admit efficient \textnaandt s in terms of the deterministic (even two-way) communication complexity of $f \circ \AND$.
We require the following bounds on the M\"obius sparsity of symmetric functions, due to Buhrman and de Wolf~\cite{BW01}. For a non-constant symmetric function $f : \zone^n \to \zone$, define the following measure which captures the smallest Hamming weight inputs before which $f$ is not a constant: $\sw(f) := \min\bra{k : f ~\textnormal{is a constant on all}~x~\textnormal{such that}~|x| < n-k}$.
\begin{claim}[{\cite[Lemma 5]{BW01}}]\label{claim: symmetric sparsity}
Let $n$ be sufficiently large,
let $f : \zone^n \to \zone$ be a symmetric Boolean function, and let $k := \sw(f)$.
Then $\log \spar(f) \geq \frac{1}{2}\log\left(\sum_{i = n - k}^n \binom{n}{i}\right)$.
\end{claim}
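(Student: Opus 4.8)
The plan is to work entirely with the M\"obius expansion of $f$ and exploit symmetry: for a symmetric function the M\"obius coefficient $\wh{f}(S)$ depends only on $|S|$. Write $f_w$ for the common value of $f$ on weight-$w$ inputs and $c_j$ for the common value of $\wh{f}(S)$ over $|S| = j$. Then the formula $\wh{f}(S) = \sum_{X \subseteq S}(-1)^{|S \setminus X|}f(X)$ specializes to $c_j = \sum_{i=0}^{j}(-1)^{j-i}\binom{j}{i}f_i$, i.e.\ $c_j$ is the $j$-th forward finite difference of the sequence $(f_0,\dots,f_n)$. Hence $\spar(f) = \sum_{j \,:\, c_j \neq 0}\binom{n}{j}$ (and this also equals $\rank(M_{f \circ \AND})$ by Equation~\eqref{eqn: mobius sparsity rank}), so the task is to lower bound $\sum_{j : c_j \neq 0}\binom{n}{j}$.

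First I would pin down the low-order coefficients. Since $f$ is constant, say equal to $b$, on all $x$ with $|x| < n-k$, the values $f_0,\dots,f_{n-k-1}$ are all $b$, so $c_j = 0$ for every $j \leq n-k-1$. Minimality of $k = \sw(f)$ forces $f_{n-k} \neq b$ (when $k \ge 1$; for $k=0$ the claimed bound is vacuous), and using $\sum_{i=0}^{m}(-1)^{m-i}\binom{m}{i} = 0$ for $m \geq 1$ the $(n-k)$-th difference computes to $c_{n-k} = f_{n-k} - b = \pm 1 \neq 0$. So $\mathcal{S}_f \subseteq \{S : |S| \geq n-k\}$ and $\mathcal S_f$ contains every set of size exactly $n-k$, giving $\spar(f) \geq \binom{n}{n-k}$. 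This already proves the claim whenever $\binom{n}{n-k}^2 \geq \sum_{i=n-k}^{n}\binom{n}{i}$; in particular it handles $k \leq n/2$, where $\sum_{i=n-k}^{n}\binom{n}{i} = \sum_{i=0}^{k}\binom{n}{i} \leq (k+1)\binom{n}{k} \leq \binom{n}{k}^2$ (using $\binom{n}{k} \geq k+1$ for $n$ sufficiently large).

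The remaining regime is $k$ close to $n$, where $\binom{n}{n-k}$ is small but $\sum_{i=0}^{k}\binom{n}{i}$ is close to $2^n$, so a single size level of the support is not enough and one must use Booleanness to force many more coefficients to be nonzero. The relevant tool is the self-multiplication identity $f = f^2$ in M\"obius form, Equation~\eqref{eqn: mobius titsworth}, equivalently the ``partner'' structure of Claim~\ref{clm: mobius support structure}: since $\mathcal S_f$ is $S_n$-invariant it equals $\{S : |S| \in L\}$ for some $L \subseteq \{n-k,\dots,n\}$ with $n-k \in L$, and that structure constrains which $L$ can occur. Quantitatively, if $L$ were small then $f_w = \sum_{j \in L}c_j\binom{w}{j}$ would be a low-degree polynomial that is $\{0,1\}$-valued on the wide range $\{n-k,\dots,n\}$ (and equal to $b$ below $n-k$), and the rapidly growing contribution of the top coefficient $c_{\max L}$ at $w = n$ cannot be cancelled by few other terms; turning this into $\sum_{j \in L}\binom{n}{j} \geq \sqrt{\sum_{i=n-k}^{n}\binom{n}{i}}$ is the crux. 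Once this is established, taking logarithms of $\spar(f) = \sum_{j \in L}\binom{n}{j} \geq \sqrt{\sum_{i=n-k}^{n}\binom{n}{i}}$ gives the stated bound. I expect the main obstacle to be exactly this last step: controlling how few coefficient levels a symmetric Boolean function with a prescribed $\sw$ can have, i.e.\ ruling out a sparse-but-spread-out M\"obius support when $k$ is large.
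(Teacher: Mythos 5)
The paper does not prove this claim itself; it is quoted verbatim from Buhrman--de Wolf~\cite{BW01}, so there is no in-paper argument to compare against. Your reduction is on the right track: for symmetric $f$ the M\"obius coefficient at level $j$ is a single number $c_j$ (the $j$-th forward difference of $(f_0,\dots,f_n)$), so $\spar(f)=\sum_{j\in L}\binom{n}{j}$ with $L=\{j:c_j\neq 0\}$; you correctly get $c_j=0$ for $1\le j\le n-k-1$ and $c_{n-k}\neq 0$ (the only omission is that $\emptyset$ may also lie in $\mathcal S_f$ when $f(0^n)=1$, which is harmless); and the single-level bound $\spar(f)\ge\binom{n}{n-k}$ together with $\sum_{i=0}^{k}\binom{n}{i}\le(k+1)\binom{n}{k}\le\binom{n}{k}^2$ does dispose of the range $k\le n/2$ (and in fact of all $k$ up to roughly $0.89n$, since $\binom{n}{n-k}^2\ge 2^n$ as long as $n-k\gtrsim 0.11n$).

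The remaining regime $n-k=o(n)$ is, however, a real gap, and it is not a corner case you can wave away: the paper invokes this claim precisely when $\sw(f)\ge n/2$ in the proof of Theorem~\ref{thm: intro symmetric naandt communication}, deriving $\spar(f)=2^{\Omega(n)}$ from $\sum_{i\le n/2}\binom{n}{i}\ge 2^{n-1}$. Moreover, the heuristic you float for closing it does not survive scrutiny. A small $L$ does not make $w\mapsto\sum_{j\in L}c_j\binom{w}{j}$ low-degree (its degree is $\max L$, which can be $n$), so what you really need to rule out is a support $L$ that is \emph{sparse but spread out}, avoiding all levels $j$ with $\binom{n}{j}\gtrsim 2^{n/2}$. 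And the suggestion that ``the rapidly growing contribution of the top coefficient at $w=n$ cannot be cancelled by few other terms'' is false: by Newton's formula $\sum_{j\in L}c_j\binom{n}{j}=f_n\in\{0,1\}$ for \emph{every} symmetric $f$, so this enormous cancellation always happens, even when $L$ is tiny compared to $\{0,\dots,n\}$ (threshold functions already exhibit it). What is actually needed is an argument that exploits Booleanness at \emph{all} intermediate weights $w\in\{n-k,\dots,n\}$ (e.g.\ via the self-multiplication identity $c_w=\sum_{a,b}c_ac_b\cdot\#\{S,T:|S|=a,|T|=b,S\cup T=W\}$, or an inductive analysis of the differences $c_{m},c_{m+1},\dots$ as you start computing, which quickly shows consecutive levels are forced to be nonzero) to show that some $j\in L$ has $\binom{n}{j}\ge\sqrt{\sum_{i\ge n-k}\binom{n}{i}}$. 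You correctly flag this as ``the crux,'' but as written the crux is unproved and the proposed route for it is not sound.
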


Upper bounds on the \textnaandt~complexity of symmetric functions follow from known results in the non-adaptive group testing literature. To the best of our knowledge, the following upper bounds were first shown (formulated differently) by Dyachkov and Rykov~\cite{DR83}. Also see~\cite{CH08} and the references therein. 

\begin{theorem}\label{thm: upper bound naandt in terms of switch symmetric}
Let $f : \zone^n \to \zone$ be a symmetric Boolean function with $\sw(f) = k < n/2$. Then $\NAANDT(f) = O\left(\log^2 \binom{n}{k}\right)$.
\end{theorem}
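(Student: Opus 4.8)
The plan is to build a nonadaptive AND decision tree for $f$ by a probabilistic argument, exploiting the fact that $\sw(f) = k < n/2$ means the value of $f(x)$ is determined once we know $\min\{|x|, n-k\}$, i.e.\ once we know the Hamming weight $|x|$ up to the threshold $n-k$. Equivalently, writing $\bar x$ for the bitwise complement, $f(x)$ is a function of $\min\{|\bar x|, k\}$ — the number of zeros of $x$, capped at $k$. So it suffices to design a family $\S = \{S_1, \dots, S_m\}$ of AND-queries such that the vector $(\AND_{S_1}(x), \dots, \AND_{S_m}(x))$ determines $|\bar x|$ whenever $|\bar x| \leq k$, and otherwise reveals that $|\bar x| > k$. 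Since $\AND_{S}(x) = 1$ iff $S$ avoids the zero-set $Z(x) := \{i : x_i = 0\}$, this is exactly the classical \emph{combinatorial group testing} / \emph{superimposed codes} problem: we need a family of subsets that is ``$k$-separating'' in the sense that the pattern of which $S_j$'s hit $Z(x)$ pins down $Z(x)$ (and hence $|Z(x)|$) for all $|Z(x)| \leq k$ — in fact we only need it to pin down the \emph{size} of $Z(x)$ up to $k$, which is weaker, but aiming for the stronger ``identify the set'' guarantee is cleaner and loses nothing in the target bound.

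The main step is the random construction. First I would pick each $S_j$ independently by including each coordinate $i \in [n]$ in $S_j$ independently with probability $1 - 1/k$ (or more precisely a well-chosen $p$ of this order). For a fixed zero-set $Z$ with $|Z| = w \leq k$, the probability that $\AND_{S_j}(x) = 1$, i.e.\ $S_j \cap Z = \emptyset$, is $(1-p)^{w}$; choosing $p \approx 1/k$ makes this probability bounded away from $0$ and $1$ when $w \approx k$. I would then fix a target: for any two distinct zero-sets $Z, Z'$ with $|Z|, |Z'| \leq k$, the query $S_j$ \emph{distinguishes} them (hits exactly one of $Z, Z'$) with probability at least some constant $c > 0$; a short calculation with $p = \Theta(1/k)$ gives $c = \Omega(1/\text{poly})$... more carefully, one wants $\Pr[S_j \text{ separates } Z, Z']$ bounded below by a constant, which one gets by conditioning on an element of the symmetric difference and on $S_j$ missing the common part — the common part has size $\leq k$ so $(1-p)^{\le k} = \Omega(1)$. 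Hence after $m$ independent queries, a fixed pair fails to be separated with probability at most $(1-c)^m$, and a union bound over all $\binom{n}{\le k}^2 \le \binom{n}{k}^2 \cdot (k+1)^2$ pairs succeeds as long as $m = O\big(\log \binom{n}{k}\big)$. This already gives $\NAANDT(f) = O(\log \binom{n}{k})$, which is \emph{stronger} than the claimed $O(\log^2 \binom{n}{k})$; so I should double-check whether the intended statement is weaker for a reason — likely because the straightforward (derandomized or explicit superimposed-code) construction of Dyachkov--Rykov gives the quadratic bound, and the cleaner linear-in-$\log$ bound needs the more careful probabilistic argument above. For safety I would present the argument that yields the stated $O(\log^2\binom{n}{k})$: use a two-level scheme — a standard $k$-cover-free family (superimposed code) of size $O(k^2 \log n)$ identifies $Z(x)$ whenever $|Z(x)| \le k$, and $k^2 \log n = O(\log^2 \binom{n}{k})$ since $\log\binom{n}{k} = \Theta(k \log(n/k))$ for $k < n/2$.

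Concretely, the key steps in order: (1) reduce computing $f(x)$ to determining $\min\{|Z(x)|, k\}$, using $\sw(f) = k$; (2) observe $\AND_S(x) = [\,S \cap Z(x) = \emptyset\,]$, so an NAADT basis is a set system whose ``hitting pattern'' separates all zero-sets of size $\le k$; (3) invoke (or reprove probabilistically) the existence of a $k$-cover-free family $\{S_1, \dots, S_m\} \subseteq 2^{[n]}$ with $m = O(k^2 \log n)$, so that for $|Z| \le k$ the pattern $(\AND_{S_j}(x))_j$ uniquely identifies $Z$, while any $x$ with $|Z(x)| > k$ produces a pattern distinct from all of these (because it hits a strict superset of hits); (4) bound $k^2 \log n = O\big((k\log(n/k))^2\big) = O(\log^2 \binom{n}{k})$ using $\log \binom{n}{k} \ge k \log(n/k)$ and $\log\binom{n}{k} \ge \frac12 \log n$ for $k$ in the relevant range — handling small $k$ (say $k = O(1)$) separately where $\log\binom{n}{k} = \Theta(\log n)$ and $k^2\log n = \Theta(\log n)$; (5) conclude $\NAANDT(f) \le m = O(\log^2\binom{n}{k})$.

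The main obstacle is step (3)–(4): making sure the cover-free / separating family genuinely certifies the \emph{cap} case $|Z(x)| > k$ (not just separates small sets from each other), and that the size bound translates to $O(\log^2\binom{n}{k})$ cleanly across the whole range $k < n/2$, including the regime $k = \Theta(n)$ where $\log\binom{n}{k} = \Theta(n)$ and one must check $k^2\log n$ does not overshoot — here one should instead just query all $n$ singletons ($\NAANDT(f) \le n = O(\log^2\binom{n}{k})$ trivially when $\log\binom{n}{k} = \Omega(\sqrt{n}\,)$), so the cover-free construction is only needed in the regime $k = o(n)$ where it is efficient. Splitting into these two regimes and citing the Dyachkov--Rykov bound (or reproving it via the union bound sketched above) finishes the proof.
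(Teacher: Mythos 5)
Your final plan --- a probabilistic cover-free family of roughly $k^2\log n$ AND-queries whose hitting pattern identifies the zero-set $Z(x) := \{i : x_i = 0\}$ whenever $|Z(x)| \le k$, together with the observation that $\AND_S(x) = 1$ iff $S \cap Z(x) = \emptyset$ --- is essentially the paper's proof. Claim~\ref{claim: intermediate probabilistic claim} constructs exactly such a family by including each index in a random $X$ with probability $1/(2k)$, taking $w = O\rbra{\log^2\binom{n}{k}}$ independent samples, and union-bounding over all $(k{+}1)$-tuples and distinguished coordinates; the conversion to an NAADT is then the same ``reconstruct $Z(x)$, else certify $|Z(x)| > k$ and output the constant value'' routine you describe.

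One genuine error in your middle paragraph is worth flagging: the claim that a single random AND-query with $p = \Theta(1/k)$ separates two zero-sets $Z \neq Z'$ with $|Z|,|Z'| \le k$ with \emph{constant} probability is false. Conditioning on $S_j$ missing $Z \cap Z'$ and containing a fixed element $i$ of the symmetric difference does give conditional success probability $(1-p)^{O(k)} = \Omega(1)$, but the conditioning event $i \in S_j$ itself has probability only $p = \Theta(1/k)$ and does not come for free; when $|Z \,\triangle\, Z'| = 1$ the unconditional separation probability is $\Theta(1/k)$, so the union bound needs $m = \Omega\rbra{k\log\binom{n}{k}}$ tests, not $O\rbra{\log\binom{n}{k}}$. (Indeed, $\Omega(k^2\log n/\log k)$ is a known lower bound for nonadaptive group testing, so $O\rbra{\log\binom{n}{k}} = O(k\log(n/k))$ would contradict it.) Your ``for safety'' fallback to the $O(k^2\log n)$ cover-free bound is the right call. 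A small simplification over your regime split: the paper obtains the raw bound $O\rbra{k\log\binom{n}{k+1}}$ and converts the leading factor $k$ directly via $k \le \log\binom{n}{k}$ (valid for all $1 \le k < n/2$ since $\log\binom{n}{k} \ge k\log(n/k) \ge k$), giving $O\rbra{\log^2\binom{n}{k}}$ uniformly with no case analysis on the size of $k$.
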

We give a self-contained proof of Theorem~\ref{thm: upper bound naandt in terms of switch symmetric} for clarity and completeness. The proof is via the probabilistic method. We construct a random family of $O\left(\log^2 \binom{n}{k}\right)$ many ANDs and argue that with non-zero probability, their evaluations on any input determine the function's value.

We require the following intermediate claim.
\begin{claim}\label{claim: intermediate probabilistic claim}
Let $n$ be a positive integer, and let $1 \leq k < n/2$ be an integer. Then, there exists a collection $\X$ of $O\left(\log^2 \binom{n}{k}\right)$ many subsets of $[n]$ satisfying the following.
\begin{equation}\label{eqn: properties of X}
\forall i_1, \dots, i_{k+1} \in [n], j \in [k+1], \exists X \in \X ~\text{such that}~ i_j \in X, i_{\ell} \notin X~\text{for all}~\ell \neq j.
\end{equation}
\end{claim}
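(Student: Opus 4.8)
The plan is to use the probabilistic method with a union bound. First I would fix parameters: let $m = c \log^2\binom{n}{k}$ for a sufficiently large constant $c$ to be chosen later, and sample $m$ independent random subsets $X_1,\dots,X_m \subseteq [n]$, where each $X_t$ is obtained by including each element of $[n]$ independently with some probability $p$. A natural choice is $p = 1/(k+1)$, since then for a fixed index $i_j$ among $k+1$ distinguished indices $i_1,\dots,i_{k+1}$, the probability that $i_j \in X_t$ while $i_\ell \notin X_t$ for all $\ell \neq j$ is $p(1-p)^k = \frac{1}{k+1}\left(1 - \frac{1}{k+1}\right)^k \geq \frac{1}{e(k+1)}$, a quantity I'll call $\rho$. (Here one should be slightly careful about whether the $i_\ell$ are distinct; the statement as phrased allows repeats, but the only case that matters is when they are distinct, since if $i_j = i_\ell$ for some $\ell \neq j$ the required condition is vacuously unsatisfiable and presumably the intended quantification is over distinct tuples — I would note this and restrict attention to distinct indices.)

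Next I would bound the failure probability. For a fixed choice of distinct $i_1,\dots,i_{k+1}$ and a fixed $j \in [k+1]$, the probability that \emph{no} $X_t$ works is $(1-\rho)^m \leq e^{-\rho m}$. The number of ways to choose the set $\{i_1,\dots,i_{k+1}\}$ together with the distinguished index $i_j$ is at most $\binom{n}{k+1}(k+1) \leq \binom{n}{k+1}^2 \leq \binom{n}{k}^{2}\cdot(\text{something polynomial})$; more simply, since $k < n/2$ this is at most $n^{k+1} = n \cdot n^k$, and in any case it is bounded by $\binom{n}{k}^{O(1)}$, so its logarithm is $O(\log \binom{n}{k})$. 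By the union bound, the probability that the collection fails to satisfy \eqref{eqn: properties of X} is at most
\[
\binom{n}{k+1}(k+1)\, e^{-\rho m}.
\]
Taking logarithms, this is less than $1$ provided $\rho m > \log\left(\binom{n}{k+1}(k+1)\right) = O(\log\binom{n}{k})$, i.e.\ provided $m = \Omega\!\left(\frac{k \log\binom{n}{k}}{1}\right)$... wait — here I should be careful: $1/\rho = \Theta(k)$, so this only gives $m = O(k \log\binom{n}{k})$, not $O(\log^2\binom{n}{k})$. To get the claimed bound I would instead observe that $\log\binom{n}{k} \geq \log\binom{n}{k} \geq k \log(n/k) \geq k$ when $k \leq n/2$ (crudely $\binom{n}{k}\ge (n/k)^k \ge 2^k$), so $k = O(\log\binom{n}{k})$, and hence $m = O(k\log\binom{n}{k}) = O(\log^2\binom{n}{k})$, as required. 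This choice of $m$ makes the failure probability strictly less than $1$, so a collection $\X$ with the desired property exists.

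The main obstacle is the parameter bookkeeping: one must verify that the sampling probability $p = 1/(k+1)$ makes the per-event success probability large enough (i.e.\ $\rho = \Omega(1/k)$), that the number of "bad events" in the union bound is only $2^{O(\log\binom{n}{k})}$, and — crucially — that the factor of $k$ coming from $1/\rho$ can be absorbed into a second logarithmic factor via the bound $k = O(\log\binom{n}{k})$, which holds precisely because $k < n/2$. All of these are routine once set up, so the proof is short; the only subtlety is getting the exponent on the logarithm to be exactly $2$ rather than $1$, which is why the statement has $\log^2$ rather than $\log$, and this is entirely due to the $1/k$-scale success probability of a single random set.
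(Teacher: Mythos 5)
Your proposal is correct and follows essentially the same approach as the paper: sample $O(\log^2\binom{n}{k})$ random subsets with per-element inclusion probability $\Theta(1/k)$ (you use $1/(k+1)$, the paper uses $1/(2k)$, an inconsequential choice), union-bound over the $\binom{n}{k+1}(k+1)$ bad events, and absorb the extra factor of $k$ coming from the $\Theta(1/k)$ success probability into a second logarithm via $k = O(\log\binom{n}{k})$. Your aside about restricting to distinct $i_1,\dots,i_{k+1}$ is the right reading of the statement and matches how the claim is used.
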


\begin{proof}
Consider a random set $X \subseteq [n]$ chosen as follows: For each index $i \in [n]$ independently, include $i$ in $X$ with probability $1/(2k)$. Pick $w$ many sets (where $w$ is a parameter that we fix later) independently using the above sampling process, giving the multiset of sets $\X = \cbra{X_1, \dots, X_w}$.

For fixed $i_1, \dots, i_{k+1} \in [n]$, $j \in [k+1]$ and $t \in [w]$,
\begin{equation}\label{eqn: single k+1tuple prob bound}
\Pr_{X_t}[i_{j} \in X_t ~\text{and}~i_{\ell} \notin X_t ~\text{for all}~{\ell} \neq j] = \frac{1}{2k} \cdot \rbra{1 - \frac{1}{2k}}^k \geq \frac{1}{2k\cdot e},
\end{equation}
where the last inequality uses the fact that $k \geq 1$ and the standard inequality that $1-x \geq e^{-2x}$ for all $0 \leq x \leq 1/2$.
Thus Equation~\eqref{eqn: single k+1tuple prob bound} implies that for fixed $i_1, \dots, i_{k+1} \in [n]$ and $j \in [k+1]$,
\begin{equation}
\Pr_{\X}[\nexists X \in \X : i_{j} \in X ~\text{and}~i_{\ell} \notin X ~\text{for all}~{\ell} \neq j] \leq \rbra{1 - \frac{1}{2k \cdot e}}^w \leq \exp(-w/(2ke)).
\end{equation}
By a union bound over these ``bad events'' for all $i_1, \dots, i_{k+1} \in [n]$ and $j \in [k+1]$, we conclude that
\begin{align}
& \Pr_{\X}[\forall i_1, \dots, i_{k+1} \in [n] ~\text{and}~ j \in [k+1],~\exists X \in \X : i_{j} \in X ~\text{and}~i_{\ell} \notin X ~\text{for all}~{\ell} \neq j]\nonumber\\ 
& \geq 1 - \binom{n}{k+1} \cdot (k+1) \cdot \exp(-w/(2ke)). \label{eqn: union bound prob RHS}
\end{align}
We want to choose $w$ such that this probability is greater than 0. Thus we require
\begin{align*}
1 & > \binom{n}{k+1} \cdot (k+1) \cdot \exp(-w/(2ke))\\
\iff \exp(w/(2ke)) & > (k+1) \cdot \binom{n}{k + 1}\\
\iff w & > 2ke\rbra{\log (k+1) + \log\binom{n}{k+1}}.
\end{align*}
Since $\binom{n}{j+1} \geq n > j+1$ for all $j \in \bra{1, 2, 3, \dots, n/2}$ and $n > 2$, and since $\log\binom{n}{j} \geq j\log(n/j) \geq j$ for all $j \in \bra{1, 2, \dots, n/2}$, it suffices to choose
\begin{equation}\label{eqn: w choice}
w \geq 2e\log\binom{n}{k}\rbra{2\log \binom{n}{k+1}}.
\end{equation}
By standard binomial inequalities we have $\log\binom{n}{k+1} \leq (k+1)\log(ne/(k+1))$, and $\log\binom{n}{k} > k\log(n/k)$. Next, since $k+1 \leq 2k$ for $k \geq 1$ and $ne/(k+1) < n^3/k^3$ for $k \in \bra{1, 2, \dots, n/2}$, Equation~\eqref{eqn: w choice} implies that it suffices to choose
\[
w \geq 2e\log\binom{n}{k}\rbra{12\log\binom{n}{k}}.
\]
For this choice of $w$, the RHS of Equation~\eqref{eqn: union bound prob RHS} is strictly positive. This proves the claim.
\end{proof}

\begin{proof}[Proof of Theorem~\ref{thm: upper bound naandt in terms of switch symmetric}]
Let $f$ be a symmetric function with $\sw(f) = k < n/2$, and let $\X$ be as in Claim~\ref{claim: intermediate probabilistic claim} with $|\X| = O\rbra{\log^2\binom{n}{k}}$.
We now show how $\X$ yields a NAADT for $f$. Without loss of generality assume that $f(x) = 0$ for all $|x| < n-k$ (if not, output 1 in place of 0 in the \textbf{Output} step of Algorithm~\ref{algo: thn-k naandt} below).

\begin{algorithm}[H]\label{algo: thn-k naandt}
\SetAlgoLined
\textbf{Input:} $x \in \zone^n$ \\
    \begin{enumerate}
        \item Let $\X$ be as obtained from Claim~\ref{claim: intermediate probabilistic claim}.
        \item Query $\cbra{\AND_X(x) : X \in \X}$ to obtain a string $P_x \in \zone^{|\X|}$.
    \end{enumerate}
\textbf{Output:} $f(y)$ if $P_x = P_y$ for some $y$ with $|y| \geq n-k$, and 0 otherwise.
\caption{$\NAANDT$ for $f$}
\end{algorithm}
We show below that the following holds: $P_x \neq P_y$ for all $x \neq y \in \zone^n$ such that $|y| \geq n-k$.
This would show correctness of the algorithm as follows:
\begin{itemize}
    \item If $P_x = P_y$ for some $|y| \geq n - k$, then $x$ must equal $y$ by the above. In this case we output the correct value since we have learned $x$.
    \item If $P_x \neq P_y$ for any $|y| \geq n - k$, then $|x| < n - k$. Since $f$ evaluates to $0$ on all such inputs, we output the correct value in this case. 
\end{itemize}
Let $x \neq y \in \zone^n$ be two strings such that $|y| \geq n- k$. Without loss of generality assume $|y| \geq |x|$ (else swap the roles of $x$ and $y$ above). Let $I_x, I_y \subseteq [n]$ denote the sets of indices where $x$ and $y$ take value 0, respectively. By assumption, $x \neq y$ and $|I_x| \geq |I_y|$. Thus there exists an index $i_x \in I_x \setminus I_y$.

Since $|I_y| \leq k$, by Claim~\ref{claim: intermediate probabilistic claim} there exists $X \in \X$ such that $i_x \in X$ and $X \cap I_y = \emptyset$. Thus, for this $X$ we have
\[
\AND_X(x) = 0, \qquad \AND_X(y) = 1.
\]
Hence $P_x \neq P_y$, which proves the correctness of the algorithm and yields the theorem.
\end{proof}

\begin{remark}
The proof above in fact yields a NAADT of cost $O\left(\log^2 \binom{n}{k}\right)$ for any function $f : \zone^n \to \zone$ for which $f$ is a constant on inputs of Hamming weight less than $n-k$ for some $k < n/2$ (in particular, $f$ need not be symmetric on inputs of larger Hamming weight).
\end{remark}

We are now ready to prove Theorem~\ref{thm: intro symmetric naandt communication}.

\begin{proof}[Proof of Theorem~\ref{thm: intro symmetric naandt communication}]
If $\sw(f) \geq n/2$, then Claim~\ref{claim: symmetric sparsity} implies that $\spar(f) = 2^{\Omega(n)}$. Equation~\eqref{eqn: log mobius sparsity communication lower bound} implies that $\Dcc(f \circ \AND) = \Omega(n)$. Thus, a trivial NAADT of cost $n$ witnesses $\NAANDT(f) = O(\Dcc(f \circ \AND))$ in this case.

Hence, we may assume $\sw(f) = k < n/2$.
We have
\[
 \NAANDT(f) = O\left(\log^2 \binom{n}{k}\right) = O(\log^2(\spar(f))) = O(\Dcc(f \circ \AND)^2),
\]
where the first equality follows from Theorem~\ref{thm: upper bound naandt in terms of switch symmetric}, the second from Claim~\ref{claim: symmetric sparsity}, and the third from Equation~\eqref{eqn: log mobius sparsity communication lower bound}.
\end{proof}

\section*{Acknowledgements}
Swagato Sanyal thanks Prahladh Harsha and Jaikumar Radhakrishnan for pointing out the reference~\cite{FT99}, and Srijita Kundu for pointing out the reference~\cite{Kla00}. N.S.M.~thanks Ronald de Wolf for useful discussions. We thank Justin Thaler for pointing out a bug from an earlier version of the paper, and Arnab Maiti for helpful comments. We thank anonymous reviewers for helpful comments and suggestions, including an improvement of the value $b$ from 4 to 2 in Theorem~\ref{thm:detlift}.



\bibliography{main}

\appendix

\section{Addressing Function}\label{app: addressing}

Recall that Theorem~\ref{thm: san19} shows that for all Boolean functions $f$, we have $\NAPDT(f) = O(\sqrt{r} \log r)$, where $r$ denotes the Fourier sparsity of $f$. Moreover a quadratic separation is witnessed by the Addressing function. We showed in Claim~\ref{claim: max naandt OMB} that such an upper bound does not hold in the M\"obius-world, and $\NAANDT(\OMB_n) \in \bra{\spar(\OMB_n), \spar(\OMB_n) - 1}$.
In this section we show that the Addressing function already witnesses that a separation similar to that in Theorem~\ref{thm: san19} cannot hold in the M\"obius-world (even on allowing randomization in the decision trees). While the bound we obtain here is weaker than that in Claim~\ref{claim: max naandt OMB}, it is interesting that the Addressing function, which witnesses a quadratic separation in the Fourier-world, no longer does so in the M\"obius-world.

The following is our main claim of this section.

\begin{claim}\label{claim: addressing naandt}
Let $n \geq 2$ be a positive integer that is a power of 2. Then,
\[
\QNAANDT(\ADDR_n) = \Theta(\spar(\ADDR_n)^{\frac{1}{\log 3}}).
\]
\end{claim}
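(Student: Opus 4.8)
The plan is to pin down both $\spar(\ADDR_n)$ and $\QNAANDT(\ADDR_n)$ separately, and then check that the target exponent $1/\log 3 = \log_3 2$ is exactly the ratio of their logarithms. First I would compute the M\"obius sparsity of $\ADDR_n$. Writing $\ADDR_n(x,y) = \sum_{a \in \zone^{\log n}} \rbra{\prod_{j : a_j = 1} x_j}\rbra{\prod_{j : a_j = 0}(1-x_j)} y_{\bin(a)}$ and expanding each factor $(1-x_j)$, every target variable $y_{\bin(a)}$ contributes the $2^{(\log n) - |a|}$ monomials of the form $\AND_S \cdot y_{\bin(a)}$ for $S \supseteq \{j : a_j = 1\}$, and no cancellation occurs across distinct $a$ (the addressing-variable parts have disjoint ``mandatory'' index sets in a suitable sense, and the $y$-variable makes them syntactically distinct). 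Summing $2^{(\log n)-|a|}$ over all $a \in \zone^{\log n}$ gives $\sum_{i=0}^{\log n}\binom{\log n}{i}2^{(\log n)-i} = 3^{\log n} = n^{\log 3}$. So $\spar(\ADDR_n) = n^{\log 3}$, hence $\log \spar(\ADDR_n) = (\log 3)(\log n)$, and thus $\spar(\ADDR_n)^{1/\log 3} = n$. The claim therefore reduces to showing $\QNAANDT(\ADDR_n) = \Theta(n)$; this is presumably isolated in the paper as the two claims referenced in the introduction (\texttt{claim: addr sparsity} computing the sparsity and \texttt{claim: addr qnaandt large} giving the lower bound), which I would cite rather than reprove in full.

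For the upper bound $\QNAANDT(\ADDR_n) = O(n)$, it suffices to exhibit a (deterministic) NAADT of cost $O(n)$: querying the $\log n$ singleton ANDs $\{x_j\}$ recovers the address $\bin(x)$, and querying the $n$ singleton ANDs $\{y_k\}$ recovers all target bits, so $\log n + n = O(n)$ queries determine $\ADDR_n$ with zero error. For the lower bound $\QNAANDT(\ADDR_n) = \Omega(n)$, I would mimic the argument used for $\OMB_n$ (Theorem~\ref{thm: quantum NAANDT OMB} via Theorem~\ref{thm: quantum dt lower bound}), but one cannot directly invoke Theorem~\ref{thm: quantum dt lower bound} because queries to AND's of large sets are not singleton queries. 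The right move is to first restrict to a subcube/subfunction where AND queries collapse to singleton queries: fix the addressing variables $x$ to a value, say $x = 0^{\log n}$, so that $\ADDR_n$ becomes the dictator $y_1$ — that is too weak. Instead, restrict so that exactly one target variable is ``live'': set all addressing bits to $0$ and note that on inputs where $x$ is fixed, $\AND_S(x,y)$ for $S$ containing any live $y$-index and only zero-valued $x$-indices reduces appropriately; more cleanly, for each target index $k$, consider the restriction fixing $x = \bin^{-1}(k)$ and fixing all $y_{k'}$, $k' \ne k$, to $1$. On this line the function is $y_k$ and any AND query $\AND_S$ restricts to either a constant or to $y_k$; Theorem~\ref{thm: quantum dt lower bound}-style reasoning then says a cost-$T$ quantum NAADT must, on average over the $n$ choices of $k$, place $\Omega(1)$ amplitude weight on queries whose set $S$ ``sees'' $y_k$, and since each query set $S$ can see at most one such $k$ per restriction... this needs care. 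I expect the honest route is exactly the one the paper takes in \texttt{claim: addr qnaandt large}: identify $n$ inputs forming a sensitive family (e.g.\ $0^{\log n}$ concatenated with $e_k$ over the target block, $k \in [n]$, where flipping $y_k$ flips the output), argue via Claim~\ref{claim: final states far} that the final states are pairwise far, and conclude that $\Omega(n)$ queries are needed because a single non-adaptive AND query can be ``responsible'' for distinguishing only a bounded number of these pairs — the subtlety being that an AND of a large set could in principle be sensitive to many coordinates at once, which is ruled out here because on the relevant inputs (address fixed to $0^{\log n}$, targets a $0/1$ vector) the only monomials that are non-constant and sensitive to $y_k$ are those whose set is exactly $\{k\}$ together with address-indices that are all $0$, i.e.\ effectively the singleton $\{k\}$.

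Finally I would assemble the pieces: $\QNAANDT(\ADDR_n) = \Theta(n) = \Theta\rbra{n^{(\log 3)\cdot \frac{1}{\log 3}}} = \Theta\rbra{\spar(\ADDR_n)^{1/\log 3}}$, using $\spar(\ADDR_n) = n^{\log 3}$ from the first step. The main obstacle is the $\Omega(n)$ quantum lower bound: unlike the $\OMB'_n$ case, the AND gadget does not obviously collapse to singleton queries, so the heart of the argument is showing that on the chosen hard distribution of inputs every AND query behaves like a singleton (or a constant), after which the Montanaro-style amplitude-counting bound of Theorem~\ref{thm: quantum dt lower bound} applies. Everything else — the sparsity computation and the $O(n)$ upper bound — is routine.
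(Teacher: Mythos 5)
Your outline is structurally right and the sparsity computation and the $O(n)$ upper bound are fine, but the $\Omega(n)$ lower bound --- which you yourself flag as the heart of the matter --- has a concrete gap in both constructions you sketch, and it is precisely the disjointness issue you worried about. The restriction $\rho_k$ fixing $x = \bin^{-1}(k)$ and $y_{k'}=1$ for $k' \neq k$ does \emph{not} give disjointness: a set $S$ containing target indices $k$ and $k'$ (with a compatible address part) restricts to $y_k$ under $\rho_k$ and to $y_{k'}$ under $\rho_{k'}$, so one AND query can be charged for many of the $n$ coordinates and the amplitude-counting sum is no longer bounded by $T$. Setting the other target bits to $1$ is exactly what lets a fat AND ``see'' several coordinates across restrictions. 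Your second family, $(0^{\log n}, e_k)$ for $k \in [n]$, is not sensitive at all: with the address frozen at $0^{\log n}$, $\ADDR_n(0^{\log n}, y)$ depends only on $y_{\bin(0^{\log n})}$, so flipping $y_k$ changes nothing for $k \neq \bin(0^{\log n})$; consequently your closing sentence about monomials ``sensitive to $y_k$'' with the address fixed is simply false for most $k$.

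What the paper's Claim~\ref{claim: addr qnaandt large} does is let the address track the target coordinate: take the pairs $z_i^0 = (i, 0^n)$ and $z_i^1 = (i, e_{\bin(i)})$ as $i$ ranges over $\zone^{\log n}$. Because the \emph{other} target bits are $0$ (not $1$), any $\AND_S$ whose set meets the target block in anything other than exactly $\{\bin(i)\}$ vanishes on both $z_i^0$ and $z_i^1$ and so cannot distinguish them; the distinguishing queries thus lie in $B_i := \{S : S \text{ meets the target indices in exactly } \{\bin(i)\}\}$, and these $B_i$ are pairwise disjoint. That disjointness is exactly what the Montanaro-style amplitude count needs, and it is proved directly for AND queries rather than by trying to reduce to Theorem~\ref{thm: quantum dt lower bound}. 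Once that is in place, the rest of your assembly ($\spar(\ADDR_n) = n^{\log 3}$, so $\spar(\ADDR_n)^{1/\log 3} = n = \Theta(\QNAANDT(\ADDR_n))$) is correct.
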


To show this, we show a lower bound on the \textqnaandt~complexity of the Addressing function, and we compute its sparsity exactly.
\begin{claim}\label{claim: addr qnaandt large}
For an integer $n \geq 2$ that is a power of 2,
\[
\QNAANDT(\ADDR_n) = \Theta(n).
\]
\end{claim}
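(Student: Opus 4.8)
\textbf{Proof proposal for Claim~\ref{claim: addr qnaandt large}.}

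The plan is to prove matching upper and lower bounds of $\Theta(n)$ on $\QNAANDT(\ADDR_n)$. The upper bound is immediate: $\QNAANDT(\ADDR_n) \le \NAANDT(\ADDR_n) \le \log n + n = O(n)$, since any Boolean function on $m$ bits has non-adaptive AND decision tree complexity at most $m$ (query all variables). So the real content is the lower bound $\QNAANDT(\ADDR_n) = \Omega(n)$.

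For the lower bound, the natural strategy is to reduce to a non-adaptive \emph{ordinary} quantum query lower bound, in the spirit of the proof of Theorem~\ref{thm: quantum NAANDT OMB} for $\OMB_n$. First I would restrict attention to a sub-domain of inputs on which every AND query collapses to a single-variable query, so that a \textqnaandt~becomes an ordinary non-adaptive quantum decision tree. The right sub-domain is the set of inputs where all $\log n$ addressing bits are set to $1$ (so the address points to the last target variable $y_n$) — wait, more usefully: fix the target block to be a string that is all $1$'s except possibly in one coordinate, and let the addressing bits range freely, or dually fix the addressing bits and let one target bit vary. Concretely, consider the $n$ inputs obtained by fixing all addressing variables to $1$ and all target variables to $1$ except $y_n$, versus... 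Actually the cleanest choice: set all target variables $y_1, \dots, y_n$ equal to $1$ except we will flip individual ones; set addressing variables so that $\bin(x)$ ranges over all of $[n]$. On any such input, for a set $S$ of variables, $\AND_S$ of a string that is $1$ on all addressing bits and $1$ on all-but-one target bit is determined by whether $S$ contains the special target coordinate. The point is that on a suitably chosen affine sub-cube, $\AND_S(x)$ becomes an affine (in fact single-variable or constant) function of the free coordinates, so the \textqnaandt~is simulated by a standard non-adaptive quantum query algorithm on the free coordinates. Then I would invoke Theorem~\ref{thm: quantum dt lower bound}: I need a sub-domain $\sS$ and an index set $I$ of size $\Omega(n)$ such that flipping coordinate $i \in I$ flips the value of $\ADDR_n$ for some point of $\sS$. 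Taking $\sS$ to be inputs with all addressing bits free and all target bits equal to $1$ except $y_{\bin(x)}$ — hmm, this needs care because flipping an addressing bit changes which target bit is read.

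The honest way to set this up: let $\sS$ consist of the $n$ inputs $z^{(j)}$, $j \in [n]$, where the addressing part of $z^{(j)}$ is the binary encoding of $j$ and the target part is $1^n$; together with the $n$ inputs where the target part is $1^n$ with the $j$-th bit flipped to $0$. Then $\ADDR_n$ flips when we flip target coordinate $j$ at the input addressing $j$, giving $n$ ``sensitive'' target coordinates. On this domain, a query $\AND_S(x)$ for $S$ touching only target variables equals $1$ iff $S$ avoids the flipped coordinate, and for $S$ touching addressing variables it is just $0$ or $1$ depending on the fixed addressing bits (which are not all $1$ in general) — so I should instead fix the \emph{addressing} block to a single value $j$ and only let target bits vary; but then only one target coordinate is sensitive, giving $\Omega(1)$, not $\Omega(n)$. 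This is the main obstacle: a single affine restriction cannot simultaneously expose $\Omega(n)$ sensitive coordinates while collapsing all AND queries. The resolution, following Montanaro~\cite{Mon10} and the proof of Theorem~\ref{thm: quantum dt lower bound}, is to argue directly at the amplitude level without first collapsing queries: for each target coordinate $j$, let $J_j$ be the basis states whose query set $S$ separates coordinate $j$ from the all-ones target string (equivalently, $\AND_S$ distinguishes the relevant input pair for index $j$); by Claim~\ref{claim: final states far} the algorithm must place $\Omega(1)$ amplitude mass on $J_j$; and the key combinatorial point is that a single query set $S$ can lie in $J_j$ for only $O(1)$ — or at worst $o(n)$ — values of $j$ among a well-chosen family of $\Omega(n)$ index pairs, so summing over $j$ forces $T = \Omega(n)$. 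Making ``a query set $S$ helps only few indices'' precise is exactly where the structure of $\ADDR_n$ enters: I would choose the $n$ input pairs so that the distinguishing condition for index $j$ is ``$S \ni$ (target coordinate $j$)'', which is satisfied by a set $S$ for at most $|S \cap \text{targets}|$ values of $j$; then $\sum_j |\{S_t : S_t \text{ helps } j\}| \le \sum_t |S_t| \le Tn$ on one hand and $\ge \Omega(n)\cdot\Omega(1)\cdot T$... this double counting needs the amplitude weights folded in exactly as in the proof of Theorem~\ref{thm: quantum dt lower bound}, and I expect it to go through giving $T = \Omega(n)$, i.e., $\QNAANDT(\ADDR_n) = \Omega(n)$, completing the proof together with the trivial upper bound.
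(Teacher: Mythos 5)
Your overall plan for the lower bound (a Montanaro-style amplitude argument, in the spirit of Theorem~\ref{thm: quantum dt lower bound}) matches the paper's strategy, and your upper bound and your recognition that a single affine restriction cannot expose $\Omega(n)$ sensitive coordinates are both correct. However, the input pairs you settle on break the argument at the double-counting step, which is exactly the step you flag with ``I expect it to go through.'' With your pairs---address block encoding $j$, target block $1^n$ versus $1^n$ with the $j$-th bit flipped to $0$---a query set $S$ with $S \subseteq I_2$ distinguishes the pair for index $j$ precisely when $\bin(j) \in S$. In particular $S = I_2$ (or any large subset of target coordinates) distinguishes all $n$ pairs simultaneously. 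So the number of indices $j$ for which a fixed basis state $(S_1,\dots,S_T,b,w)$ lies in $J_j$ can be as large as $n$, and the amplitude inequality degenerates to $\Omega(n) \leq n$, which is vacuous. Your count $\sum_t |S_t| \leq Tn$ is not the quantity you need: you need the per-basis-state overlap to be $O(T)$, not $O(Tn)$.

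The paper's proof fixes exactly this by using the complementary pairs $z_i^0 = (i, 0^n)$ and $z_i^1 = (i, s_i)$, where $s_i$ has a single $1$ in coordinate $\bin(i)$ and $0$ elsewhere. With the target block near $0^n$, a set $S$ can distinguish $z_i^0$ from $z_i^1$ only if $S \cap I_2 = \{\bin(i)\}$ exactly: any other target coordinate in $S$ forces $\AND_S = 0$ on both inputs, and if $\bin(i) \notin S$ the two inputs agree on $S$. Consequently the families $B_i := \{S : S \cap I_2 = \{\bin(i)\}\}$ are pairwise disjoint across $i$, so each basis state $(S_1,\dots,S_c,b,w)$ lies in $J_i$ for at most $c$ values of $i$, and summing the amplitude-mass lower bounds over all $n$ indices yields $\Omega(n) \leq c$. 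This disjointness, obtained by choosing a near-$0^n$ rather than near-$1^n$ target block, is the idea missing from your proposal; without it the combinatorial bound ``each $S$ helps only few indices'' is false.
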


\begin{claim}\label{claim: addr sparsity}
For an integer $n \geq 2$ that is a power of 2,
\[
\spar(\ADDR_n) = n^{\log 3}.
\]
\end{claim}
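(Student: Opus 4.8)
The plan is to compute the M\"obius sparsity of $\ADDR_n$ by induction on $\log n$, exploiting the recursive structure of the addressing function. Write $n = 2m$ and split the $\log n$ addressing variables into the leading bit $x_1$ and the remaining $\log m$ bits $x' = (x_2, \dots, x_{\log n})$. Correspondingly split the $n$ target variables into two blocks $y^{(0)}$ and $y^{(1)}$ of $m$ variables each, where $y^{(0)}$ is addressed when $x_1 = 0$ and $y^{(1)}$ when $x_1 = 1$. Then one has the identity
\[
\ADDR_n(x, y) = (1 - x_1)\cdot \ADDR_m(x', y^{(0)}) + x_1 \cdot \ADDR_m(x', y^{(1)}).
\]
Since the variable sets appearing in the two terms on the right are disjoint apart from $x'$, the plan is to argue that no monomials cancel when this expression is expanded into its M\"obius form, so that the M\"obius support of $\ADDR_n$ is the disjoint union of (i) the support of $\ADDR_m(x', y^{(0)})$ (coming from the $1\cdot \ADDR_m(x',y^{(0)})$ part), (ii) the supports of $\ADDR_m(x',y^{(1)})$ and of $x_1 \cdot \ADDR_m(x', y^{(1)})$, minus the overlap, etc. The cleanest bookkeeping is: every monomial of $\ADDR_n$ is of the form (monomial of $\ADDR_m$ on $x'$-and-$y^{(0)}$-variables) or ($x_1$ times a monomial of $\ADDR_m$ on $x'$-and-$y^{(1)}$-variables), and these two families are disjoint since the first never contains $x_1$ and the second always does. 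This would give the recurrence $\spar(\ADDR_n) = 3 \spar(\ADDR_m)$; wait — more carefully, the term $(1-x_1)\ADDR_m(x',y^{(0)}) = \ADDR_m(x',y^{(0)}) - x_1 \ADDR_m(x',y^{(0)})$ contributes $2\spar(\ADDR_m)$ monomials (those without $x_1$, and those equal to $x_1$ times a support monomial), while $x_1 \ADDR_m(x',y^{(1)})$ contributes $\spar(\ADDR_m)$ more monomials all containing $x_1$; one must check the $x_1$-containing monomials from these two sources are distinct, which holds because they involve disjoint target-variable blocks $y^{(0)}$ versus $y^{(1)}$ (except in the degenerate case where the monomial restricted to targets is empty — I need to handle $m=1$ as the base case, where $\ADDR_2(x_1, y_0, y_1) = (1-x_1)y_0 + x_1 y_1 = y_0 + x_1 y_1 - x_1 y_0$ has sparsity $3 = 2^{\log 3}$). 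So the recurrence is $\spar(\ADDR_n) = 3\spar(\ADDR_m)$, and with base $\spar(\ADDR_2) = 3$ and $n = 2^k$, we get $\spar(\ADDR_n) = 3^k = 3^{\log n} = n^{\log 3}$.

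First I would set up the base case $n = 2$ by direct computation as above. Then I would formalize the inductive step: assuming $\spar(\ADDR_m) = m^{\log 3}$, expand the recursive identity, carefully verify the three families of monomials ($y^{(0)}$-side without $x_1$; $y^{(0)}$-side with $x_1$ — these come from $-x_1 \ADDR_m(x',y^{(0)})$; and $y^{(1)}$-side with $x_1$) are pairwise disjoint and that no coefficient vanishes, concluding $\spar(\ADDR_n) = 3\spar(\ADDR_m) = 3m^{\log 3} = (2m)^{\log 3} = n^{\log 3}$.

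The main obstacle I anticipate is the no-cancellation / disjointness verification — specifically making sure that when I expand $(1-x_1)\ADDR_m(x',y^{(0)})$ the monomials containing $x_1$ (namely $x_1 \cdot M$ for $M \in \S_{\ADDR_m}$ on the $y^{(0)}$-block) do not collide with the monomials $x_1 \cdot M'$ for $M' \in \S_{\ADDR_m}$ on the $y^{(1)}$-block. This is where I must use that every monomial in $\S_{\ADDR_m}$ contains at least one target variable (equivalently, $\emptyset \notin \S_{\ADDR_m}$, since $\ADDR_m$ is not constant and has no constant term — which itself follows from $\ADDR_m(0^{\log m}, 0^m) = 0$ and can be tracked through the induction). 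Granting that, the $y^{(0)}$- and $y^{(1)}$-blocks being disjoint variable sets forces $M \neq M'$, so the collision never happens and all $3\spar(\ADDR_m)$ monomials are distinct with nonzero coefficients $\pm 1$. The rest is routine arithmetic with $\log$.
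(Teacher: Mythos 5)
Your approach is correct and genuinely different from the paper's. The paper computes the sparsity directly: it writes $\ADDR_n(x,y) = \sum_{b \in \zone^{\log n}} y_b \prod_{i: b_i = 0}(1-x_i)\prod_{i: b_i = 1} x_i$, notes that the monomials coming from distinct summands are disjoint (each monomial contains the unique target variable $y_b$), counts $2^{\log n - |b|}$ monomials in the $b$th summand, and sums via the binomial theorem to get $\sum_j \binom{\log n}{j} 2^j = 3^{\log n}$. You instead recurse on the leading addressing bit and derive the recurrence $\spar(\ADDR_n) = 3\,\spar(\ADDR_{n/2})$. Both are valid; the paper's route has the mild advantage of yielding an explicit description of the M\"obius support, while yours is arguably more structural and the bookkeeping (three disjoint monomial families, one without $x_1$ and two with $x_1$ on disjoint target blocks) is clean.

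One small logical slip worth flagging: you assert that ``every monomial in $\S_{\ADDR_m}$ contains at least one target variable'' is \emph{equivalent} to $\emptyset \notin \S_{\ADDR_m}$. It is not --- the latter only rules out the constant term, whereas a hypothetical monomial like $\{x'_1\}$ (addressing variables only, nonempty) would also cause exactly the collision between families (b) and (c) that you need to exclude, and is not ruled out by $\emptyset \notin \S_{\ADDR_m}$. The statement you actually need is the stronger one, and it is easy: restricting $\ADDR_m$ to $y = 0^m$ gives the identically zero function of $x'$, so by uniqueness of the M\"obius expansion of that restriction, every $x'$-only coefficient of $\ADDR_m$ vanishes. (Alternatively, as you suggest, track the invariant ``every support monomial contains a $y$-variable'' through the induction: it holds for $\ADDR_2$, and each of your three families manifestly preserves it.) With this fixed, the recurrence $\spar(\ADDR_n) = 3\,\spar(\ADDR_{n/2})$, the base case $\spar(\ADDR_2) = 3$, and the conclusion $\spar(\ADDR_n) = 3^{\log n} = n^{\log 3}$ are all correct.
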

Claim~\ref{claim: addressing naandt} follows immediately from the two above claims. We now prove these claims.

\begin{proof}[Proof of Claim~\ref{claim: addr qnaandt large}]
Let $\mathcal{T}$ be a quantum non-adaptive AND query algorithm of cost $c$ that computes $\ADDR_n$ to error at most $\eps = 1/3$. An input to $\ADDR_n$ consists of $\log n$ addressing bits and $n$ target bits. Let $I_1$ be the set of indices of the addressing bits and $I_2$ be the set of indices of the target bits.
The algorithm works with a state space $\ket{S_1, \dots, S_c}\ket{b}\ket{w}$, where each $S_j \subseteq I_1 \cup I_2$, $b \in \zone^c$ and the last register captures the workspace. The algorithm does the following on input $z \in \zone^{\log n + n}$:
\begin{itemize}
    \item It starts in an input-independent state, say $\sum_{S_1, \dots, S_c, b, w}\alpha_{S_1, \dots, S_c, b, w}\ket{S_1, \dots, S_c}\ket{b}\ket{w}$,
    \item applies $c$ non-adaptive queries, that act on the basis states as follows:
    \[
    O_z : \ket{S_1, \dots, S_c}\ket{b_1, \dots, b_c}\ket{w} \mapsto \ket{S_1, \dots, S_c}\ket{b_1 \oplus \AND_{S_1}(z), \dots, b_c \oplus \AND_{S_c}(z)}\ket{w}.
    \]
    \item It then applies a two-outcome projective measurement $\cbra{\Pi, I - \Pi}$ on the resulting state $O_z \ket{\psi}$ and outputs a value depending on the measurement outcome.
\end{itemize}

Fix any $i \in \zone^{\log n}$ and consider the inputs $z_i^0 := (i,0^n)$ and $z_i^1 := (i,s_i)$ where $s_i$ is the bitstring of length $n$ that has a $1$ at index $\bin(i)$ and $0$ everywhere else. Clearly $\ADDR_n(z_i^0) = 0$ and $\ADDR_n(z_i^1)=1$. We also define a set
\[
B_i := \cbra{S \subseteq I_1 \cup I_2 \mid S \cap I_2 = \cbra{\bin(i)}}.
\]
It represents the set of AND queries that query only the $\bin(i)$'th bit of the target variables. Note that for any $S \not\in B_i$, $\AND_S(z_i^0) = \AND_S(z_i^1)$. This is because any $\AND$ differentiating the two inputs must include the only bit in which they differ (i.e., $y_{\bin(i)}$), and if it includes any other target variables we know it must output $0$. It is easy to see that $B_i \cap B_j = \emptyset$ if $i \neq j$. Define
\[
J_i := \cbra{(S_1,\dots,S_c,b,w) \mid B_i \cap \cbra{S_1,\dots,S_c} \neq \emptyset} 
\]
to be (the indices of) those basis states which are mapped to different states on applying the oracles $O_{z_i^0}$ and $O_{z_i^1}$. Write the initial state of our algorithm as $\ket{\psi} = \ket{\phi_{1i}} + \ket{\phi_{2i}}$, where $\ket{\phi_{1i}} = \sum_{s \in J_i} \alpha_s \ket{s}$ and $\ket{\phi_{2i}} \perp \ket{\phi_{1i}}$. Thus,
\begin{align*}
    0.114 \approx 2 - 4\sqrt{2}/3 & \leq \|O_{z_i^0} \ket{\psi} - O_{z_i^1} \ket{\psi}\|^2 \tag*{by Claim~\ref{claim: final states far}}\\
    & = \| O_{z_i^0} \ket{\phi_{1i}} + O_{z_i^0} \ket{\phi_{2i}} - O_{z_i^1} \ket{\phi_{1i}} - O_{z_i^1} \ket{\phi_{2i}} \|^2\\
    & = \| O_{z_i^0} \ket{\phi_{1i}} - O_{z_i^1} \ket{\phi_{1i}} \|^2 \tag*{since $O_{z_i^0}$ and $O_{z_i^1}$ have the same action on $\ket{\phi_{2i}}$}\\
    & \leq (\|O_{z_i^0} \ket{\phi_{1i}}\| + \|O_{z_i^1} \ket{\phi_{1i}}\|)^2\tag*{by the triangle inequality}\\
    & \leq \left(2\cdot\sqrt{\sum_{(S_1, \dots, S_c, b, w) \in J_i}\abs{\alpha_{S_1, \dots, S_c, b, w}}^2}\right)^2.
\end{align*}
Since $B_k \cap B_\ell = \emptyset$ for all $k \neq \ell$, each set $S \subseteq I_1 \cup I_2$ can appear in at most one of the sets $\cbra{B_\ell}_{\ell \in \zone^{\log n}}$. Summing over all $i \in \zone^{\log n}$, this implies
\begin{align*}
    \Omega(n) \leq \sum_{i \in \zone^{\log n}}\sum_{(S_1, \dots, S_c, b, w) \in J_i} \abs{\alpha_{S_1, \dots, S_c,b,w}}^2 \leq c \sum_{S_1, \dots, S_c,b,w} \abs{\alpha_{S_1, \dots, S_c,b,w}}^2 \leq c.
\end{align*}
The second inequality above follows because of the following: since each $S_j$ can appear in at most one $B_k$, each $\abs{\alpha_{S_1, \dots, S_c, b, w}}^2$ would be included in the summation for at most $c$ values of $i$. The last inequality follows since the sum of squares of the moduli of the amplitudes must equal 1.
This completes the proof.
\end{proof}

\begin{proof}[Proof of Claim~\ref{claim: addr sparsity}]
Let $\Ind(E)$ denote the indicator function of $E$, that is, $\Ind(E) = 1$ if $E$ is true, and 0 otherwise. From the expression in Definition~\ref{defn: addressing} we have
\begin{align}\label{eqn: addr expansion}
\ADDR_n(x, y) = \sum_{b \in \zone^{\log n}} y_b \Ind[x = b] = \sum_{b \in \zone^{\log n}} y_b \prod_{i \in [\log n] : b_i = 0}(1 - x_i)\prod_{i \in [\log n] : b_i = 1}x_i.
\end{align}
The monomials arising from each summand are disjoint, since monomials containing $y_b$ only appear in the summand corresponding to $b$. For all $b \in \zone^{\log n}$, the number of monomials in $\prod_{i \in [\log n] : b_i = 0}(1 - x_i)\prod_{i \in [\log n] : b_i = 1}x_i$ equals $2^{\log n - |b|}$, where $|b|$ equals the Hamming weight (number of 1s) of $b$. From the expansion in Equation~\eqref{eqn: addr expansion}, we obtain
\[
\spar(\ADDR_n) = \sum_{b \in \zone^{\log n}}2^{\log n - |b|} = \sum_{j = 0}^{\log n}\binom{\log n}{j}2^j = 3^{\log n} = n^{\log 3}.
\]
\end{proof}

\begin{claim}\label{claim: addr patcomp lower bound}
For an integer $n$ that is a power of 2,
\[
\mpat(\ADDR_n) \geq 2^n.
\]
\end{claim}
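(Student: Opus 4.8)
The plan is to exhibit $2^n$ inputs to $\ADDR_n$ that induce pairwise distinct M\"obius patterns. The natural choice is to fix the $\log n$ addressing variables to the all-ones string $1^{\log n}$ and let the $n$ target variables range over all of $\zone^n$.

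First I would isolate the relevant part of the M\"obius support of $\ADDR_n$ using the expansion in Equation~\eqref{eqn: addr expansion}. For each address $b \in \zone^{\log n}$, the summand $y_b \prod_{i : b_i = 0}(1-x_i)\prod_{i : b_i = 1}x_i$ contributes the monomial of smallest degree involving $y_b$, namely $y_b \prod_{i : b_i = 1} x_i$, which comes from picking the constant term $1$ out of each factor $(1 - x_i)$ and hence has M\"obius coefficient $+1$. Let $S_b \subseteq [n] \cup \cbra{\text{target indices}}$ denote the corresponding variable set, consisting of the target variable $y_b$ together with the addressing variables indexed by the support of $b$. Since the summands of Equation~\eqref{eqn: addr expansion} corresponding to distinct $b$ involve disjoint sets of monomials (each monomial contains exactly one target variable), the coefficient $+1$ of this monomial is not cancelled, so $S_b \in \S_{\ADDR_n}$; moreover the sets $\cbra{S_b : b \in \zone^{\log n}}$ are pairwise distinct.

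Next I would evaluate these monomials on the chosen inputs. For any assignment $y \in \zone^n$ to the target variables, consider the input $(1^{\log n}, y)$. Every addressing variable appearing in $S_b$ is set to $1$, so $\AND_{S_b}(1^{\log n}, y) = y_b$. Hence the pattern of $(1^{\log n}, y)$, restricted to the coordinates indexed by $\cbra{S_b : b \in \zone^{\log n}}$, equals the string $(y_b)_{b \in \zone^{\log n}} = y$. Consequently, if $y \neq y'$ then their patterns differ (in coordinate $S_b$ for any $b$ on which $y$ and $y'$ disagree). Thus $y \mapsto (\text{pattern of } (1^{\log n}, y))$ is injective on $\zone^n$, giving $\mpat(\ADDR_n) \geq 2^n$.

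There is no substantial obstacle here; the only step requiring a little care is the bookkeeping in the previous-to-last paragraph, namely verifying that each $S_b$ genuinely survives in the M\"obius support (its coefficient is $+1$ and is not cancelled against monomials from other summands) and that the $S_b$ are distinct — both of which follow immediately from the fact that distinct summands of Equation~\eqref{eqn: addr expansion} involve disjoint target variables.
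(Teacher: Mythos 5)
Your proof is correct and takes essentially the same approach as the paper: fix the addressing variables to $1^{\log n}$, let the target variables range over $\zone^n$, and exhibit one support monomial per target variable whose evaluation on these inputs reads off the corresponding target bit. The only cosmetic difference is the choice of witness monomials — you use the minimal-degree monomials $y_b \prod_{i : b_i = 1} x_i$ (with coefficient $+1$), whereas the paper uses the maximal-degree monomials $y_b \prod_{i \in [\log n]} x_i$; both choices evaluate to $y_b$ on $(1^{\log n}, y)$, so the arguments are interchangeable.
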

\begin{proof}
    Define the set of monomials $M := \cbra{y_b \prod_{i \in [\log n]}x_i}_{b \in [n]}$. Observe from Equation~\eqref{eqn: addr expansion} that all these monomials are in the M\"obius support of $\ADDR_n$. Consider the set of inputs $I := \cbra{(1^{\log n},s)}_{s \in \zone^n}$. We claim that each input in $I$ generates a unique pattern, even when restricting to monomials in $M$. Indeed, the monomial $y_b \prod_{i \in [\log n]}x_i$ evaluated on $(1^{\log n},s)$ is equal to $s_b$. Hence the pattern generated by $(1^{\log n},s)$ on the monomials in $M$ is the string $s$, and so there are at least $2^n$ distinct patterns.
\end{proof}

We conclude the following.
\begin{corollary}\label{cor: doneway vs rank addressing}
For an integer $n$ that is a power of 2 and for $f = \ADDR_n$,
\[
\Doneway(f \circ \AND) \geq \rank(M_{f \circ \AND})^{\log_32}.
\]
\end{corollary}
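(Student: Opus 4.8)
The corollary is a direct consequence of the three preceding computations together with the characterization of $\Doneway(f \circ \AND)$ in terms of pattern complexity, so the plan is simply to chain these together. First I would invoke Claim~\ref{claim: doneway equals log pattern}, which gives $\Doneway(\ADDR_n \circ \AND) = \lceil \log(\mpat(\ADDR_n)) \rceil$. Combining this with the lower bound $\mpat(\ADDR_n) \geq 2^n$ from Claim~\ref{claim: addr patcomp lower bound} yields $\Doneway(\ADDR_n \circ \AND) \geq n$.

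Next I would compute the rank of the communication matrix. By Equation~\eqref{eqn: mobius sparsity rank} we have $\rank(M_{\ADDR_n \circ \AND}) = \spar(\ADDR_n)$, and Claim~\ref{claim: addr sparsity} evaluates this to $n^{\log 3}$. Raising to the power $\log_3 2$ gives
\[
\rank(M_{\ADDR_n \circ \AND})^{\log_3 2} = n^{(\log 3)(\log_3 2)} = n^{\log 2} = n,
\]
using $\log_3 2 = (\log 2)/(\log 3)$ and that all logarithms are base $2$, so $\log 2 = 1$.

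Putting the two together, $\Doneway(\ADDR_n \circ \AND) \geq n = \rank(M_{\ADDR_n \circ \AND})^{\log_3 2}$, which is exactly the claimed inequality. There is essentially no obstacle here: the substantive work is already done in Claims~\ref{claim: doneway equals log pattern}, \ref{claim: addr sparsity}, and~\ref{claim: addr patcomp lower bound}, and the only point to be careful about is the exponent arithmetic, namely that $(\log 3)(\log_3 2) = 1$, so that the rank, raised to the power $\log_3 2$, collapses back to $n$.
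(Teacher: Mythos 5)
Your proof is correct and follows exactly the same chain as the paper: Claim~\ref{claim: doneway equals log pattern} for the pattern-complexity characterization, Claim~\ref{claim: addr patcomp lower bound} for the lower bound $\mpat(\ADDR_n)\geq 2^n$, Claim~\ref{claim: addr sparsity} for the sparsity $n^{\log 3}$, and Equation~\eqref{eqn: mobius sparsity rank} to convert sparsity to rank. The exponent arithmetic $(\log 3)(\log_3 2)=1$ is the only remaining step, and you have it right.
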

\begin{proof}
\begin{align*}
    \Doneway(f \circ \AND) & = \lc\log(\mpat(f))\rc \tag*{by Claim~\ref{claim: doneway equals log pattern}}\\
    & \geq n \tag*{by Claim~\ref{claim: addr patcomp lower bound}}\\
    & = \spar(f)^{\log_32} \tag*{by Claim~\ref{claim: addr sparsity}}\\
    & = \rank(M_{f \circ \AND})^{\log_32} \tag*{by Equation~\eqref{eqn: mobius sparsity rank}}.
\end{align*}
\end{proof}

\section{Derivation of Theorem~\ref{thm: packing}}
\label{erdos}
Recall Theorem~\ref{thm: packing}, restated below.
\begin{theorem}[Restatement of Theorem~\ref{thm: packing}]\label{thm: appendix packing}
Let $q \geq 3$ and $1 \leq d \leq n/3$. Let $\A \subseteq [q]^n$ be such that for all $x^{(1)}=(x^{(1)}_1, \ldots, x^{(1)}_n)$, $x^{(2)}=(x^{(2)}_1, \ldots, x^{(2)}_n) \in \A$, $|\{i \in [n] \mid x^{(1)}_i = x^{(2)}_i\}| \geq d$. Then, $|\A| < q^{n-\frac{d}{10}}$.
\end{theorem}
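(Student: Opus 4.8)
The plan is to recognize $\A$ as an extremal set system, apply the diametric theorem to bound $\abs{\A}$ by the size of the extremal configuration, and then estimate that size via a binomial tail bound. First, I would reformulate the hypothesis: a family $\A \subseteq [q]^n$ in which every two elements agree in at least $d$ coordinates is exactly an anticode of Hamming diameter at most $n-d$, i.e.\ a ``$d$-agreeing'' family. The diametric theorem of Ahlswede and Khachatrian~\cite{AK98}, obtained independently by Frankl and Tokushige~\cite{FT99} in the guise of the Erd\H{o}s--Ko--Rado theorem for integer sequences, determines the maximum size of such a family: it equals $\max_{i \ge 0,\ d+2i \le n}\abs{\mathcal{F}_i}$, where
\[
\mathcal{F}_i := \bra{\, x \in [q]^n : \abs{\bra{\, j \in [d+2i] : x_j = 1 \,}} \ge d+i \,}.
\]
(Each such ``generalized Frankl family'' is $d$-agreeing: two of its members share at least $2(d+i)-(d+2i)=d$ coordinates equal to $1$ among the first $d+2i$ positions.) So it suffices to prove $\abs{\mathcal{F}_i} < q^{\,n-d/10}$ for every admissible $i$.

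Second, I would rewrite $\abs{\mathcal{F}_i}$ in a form amenable to concentration. Counting the at most $i$ coordinates among the first $d+2i$ that differ from $1$, and leaving the remaining $n-d-2i$ coordinates free, gives
\[
\abs{\mathcal{F}_i} \;=\; q^{\,n-d-2i}\sum_{j=0}^{i}\binom{d+2i}{j}(q-1)^j \;=\; q^{\,n}\cdot\Pr\!\left[\, \mathrm{Bin}\!\left(d+2i,\tfrac{1}{q}\right) \ge d+i \,\right],
\]
where the last equality uses that $(d+2i)-\mathrm{Bin}(d+2i,\tfrac{q-1}{q})$ is distributed as $\mathrm{Bin}(d+2i,\tfrac{1}{q})$, and that ``at most $i$ non-$1$ coordinates'' is the event ``at least $d+i$ one-coordinates''. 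Thus the theorem reduces to the tail bound $\Pr\!\left[\mathrm{Bin}(d+2i,\tfrac{1}{q})\ge d+i\right] < q^{-d/10}$ for all $q \ge 3$ and $0 \le i \le (n-d)/2$.

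Third, I would establish this tail bound, noting that $d+i > (d+2i)/q$ for $q \ge 2$, so it is a genuine upper-tail estimate, and splitting on the size of $q$. For small $q$ (concretely $3 \le q \le 6$), the multiplicative Chernoff bound applied to $Y := \mathrm{Bin}(d+2i,1-\tfrac{1}{q})$ — with $\E Y = (d+2i)(1-\tfrac{1}{q})$, $\E Y - i = d+i-(d+2i)/q$, and hence $\Pr[Y \le i] \le \exp\!\left(-(\E Y - i)^2/(2\,\E Y)\right)$ — makes the exponent exceed $\tfrac{1}{10}\ln q \cdot d$ for every admissible $i$; this reduces to an inequality of the shape $c_1 d^2 + c_2 di + c_3 i^2 > 0$ with positive constants. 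For large $q$ ($q \ge 7$), the union bound $\Pr\!\left[\mathrm{Bin}(d+2i,\tfrac{1}{q})\ge d+i\right] \le \binom{d+2i}{d+i}q^{-(d+i)} \le (2e/q)^{d+i}$ already beats $q^{-d/10}$, since $(2e/q)^{d} \le q^{-d/10}$ exactly when $\tfrac{9}{10}\ln q \ge \ln(2e)$. The hypothesis $d \le n/3$ keeps the admissible range $i \le (n-d)/2$ comfortably within reach of these estimates.

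The main obstacle I anticipate is making the tail estimate hold \emph{uniformly in $q \ge 3$} with the explicit constant $\tfrac{1}{10}$: the Gaussian form of Chernoff is too lossy for moderately large $q$, while the union bound is vacuous at $q=3$, so the argument must stitch the two regimes together (or invoke the sharp Kullback--Leibler form of the Chernoff bound throughout) and verify the constant in each, together with the edge cases where $d$ is small or $i$ is near its maximum. The set-theoretic input of the first step is used purely as a black box.
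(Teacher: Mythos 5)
Your plan is correct, and both your argument and the paper's invoke the same combinatorial black box (the diametric theorem of Ahlswede--Khachatrian and Frankl--Tokushige), but you feed it in at different strengths, which changes the character of the remaining work. The paper cites Theorem 2 of~\cite{FT99} in its sharp form: the maximum $d$-agreeing family in $[q]^n$ is \emph{exactly} $\B_r$ with the explicit optimal parameter $r=\lfloor (d-1)/(q-2)\rfloor$. Plugging in that one value of $r$, the crude estimate $\binom{d+2r}{d+r}\le 2^{d+2r}$ immediately gives $|\B_r|\le q^{\,n-d(1-1/\log q)-r(1-2/\log q)}$, and the two-line case split $q\ge 4$ versus $q=3$ (where $r=d-1$) finishes. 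You instead use only the weaker ``$\max_i |\mathcal{F}_i|$'' form of the theorem, so you must bound \emph{every} candidate $\mathcal{F}_i$ over the full admissible range $0\le i\le(n-d)/2$. Your reduction to the binomial tail $\Pr[\mathrm{Bin}(d+2i,1/q)\ge d+i]<q^{-d/10}$ is correct (I checked the counting identity), and the two-regime argument does close: for $q\ge 7$ the first-moment bound $\binom{d+2i}{d+i}q^{-(d+i)}\le(2e/q)^{d+i}$ is worst at $i=0$, where it needs $q\ge(2e)^{10/9}\approx 6.56$; for $3\le q\le 6$ the Chernoff exponent $\frac{(d(q-1)+i(q-2))^2}{2q(q-1)(d+2i)}$ does exceed $(d\ln q)/10$, and clearing denominators indeed yields a quadratic form in $(d,i)$ with all three coefficients positive, so it holds trivially for all $i\ge 0$. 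What your route buys is robustness: you never need the formula for the optimizing $r$, only the list of extremal candidates, so the proof survives even if one only trusts the coarser statement of the diametric theorem. What it costs is length: the uniform-in-$i$ tail estimate, including the regime split and the coefficient check, is substantially more bookkeeping than the paper's single plug-in plus a one-line binomial bound. Either way the constant $1/10$ is not at all tight, and both proofs use $n\ge 3d$ only to guarantee that the relevant $\mathcal{F}_i$ (respectively $\B_r$) fit inside $[n]$.
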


Let $q$ be as in the statement of the theorem. For $z \in \{0,1\}^n$ and $\sS \subseteq [n]$, let $z_\sS$ denote the restriction of $z$ to the indices in $\sS$. Let $|z|$ denote the Hamming weight of $z$, which is $|\{i \in [n] \mid z_i=1\}|$. 

For an arbitrary alphabet $L$, a set $\cH \subseteq L^n$ is called $d$-intersecting if for each $x=(x_i)_{i \in [n]}, x'=(x'_i)_{i \in [n]} \in \cH$, $|\{i \in [n] \mid x_i=x'_i\}| \geq d$. Let $\agr (d,q,n)$ denote the size of a largest $d$-intersecting set in $[q]^n$. Ahlswede and Khachatrian~\cite{AK98} and Frankl and Tokushige~\cite{FT99} independently determined $\agr(d,q,n)$ in their works.

For an integer $r \leq (n-d)/2$, let $\A_r$ be the following $d$-intersecting family in $\{0,1\}^n$.
\[\A_r:=\{z \in \{0,1\}^n \mid |z_{\{1,\ldots,d+2r\}}| \geq d+r\}.\]

Now consider the following $d$-intersecting family $\B_r$ in $[q]^n$: A string $x \in [q]^n$ belongs to $\B_r$ iff there exists a string $z \in \A_r$ such that for each $i \in [n]$, $z_i=1 \Rightarrow x_i=1$. $\B_r$ is easily seen to be $d$-intersecting. Hence for each such $r$, $\agr(d,q,n) \geq |\B_r|$.

\cite{AK98} and \cite{FT99} showed that in fact there is a choice of $r$ for which $\agr(d,q,n) = |\B_r|$. In other words, there exists a choice of $r$ such that $\B_r$ is a largest $d$-intersecting family in $[q]^n$.

\begin{theorem}[Theorem 2 in \cite{FT99}]
\label{FT}
Let $q \geq 3, d \geq 1, r=\lfloor \frac{d-1}{q-2} \rfloor$ and $n \geq d+2r$. Then, $\agr(d,q,n)=|\B_r|$.
\end{theorem}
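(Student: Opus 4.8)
The plan is to derive the claim directly from Theorem~\ref{FT} together with an explicit estimate of $|\B_r|$. Set $r := \lfloor (d-1)/(q-2) \rfloor$ and $m := d + 2r$. First I would check the hypotheses of Theorem~\ref{FT}: since $q \geq 3$, $r \leq (d-1)/(q-2) \leq d-1$, hence $m = d+2r \leq 3d-2 \leq n-2 < n$, so $n \geq d+2r$ and Theorem~\ref{FT} applies. Since $\A$ is a $d$-intersecting family in $[q]^n$ by hypothesis, this gives $|\A| \leq \agr(d,q,n) = |\B_r|$, and it remains to prove $|\B_r| < q^{n-d/10}$.

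Next I would pin down $\B_r$ exactly. Unwinding the definitions, $x \in \B_r$ iff there is $z \in \A_r$ with $\cbra{i : z_i = 1} \subseteq \cbra{i : x_i=1}$; since membership of $z$ in $\A_r$ only asks that at least $d+r$ of the first $m = d+2r$ coordinates of $z$ equal $1$, a short argument in both directions shows
\[
\B_r = \cbra{x \in [q]^n : \abs{\cbra{i \in [m] : x_i = 1}} \geq d+r}
\]
(for the inclusion $\supseteq$, given such an $x$ take $z$ to be the indicator of any $d+r$ of the first $m$ coordinates on which $x$ equals $1$). Counting an $x \in \B_r$ by the number $j$ of the first $m$ coordinates that are \emph{not} equal to $1$ (so $0 \leq j \leq m - (d+r) = r$, the remaining $n-m$ coordinates being free) gives
\[
|\B_r| = q^{n-m}\sum_{j=0}^{r}\binom{m}{j}(q-1)^j .
\]

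The heart of the argument is a binomial tail estimate. Note that $q^{-m}\sum_{j=0}^{r}\binom{m}{j}(q-1)^j = \Pr[X \leq r]$ for $X \sim \mathrm{Bin}(m,(q-1)/q)$. Since $(q-2)r \leq d-1 < d$ we get $qr < d+2r = m$, i.e.\ $r < m/q \leq (1-1/q)m$, so $r$ lies below the mean of $X$ and the Chernoff lower-tail bound yields $\Pr[X \leq r] \leq \exp(-m\,D(r/m\,\|\,(q-1)/q))$, where $D(\alpha\|\beta) := \alpha\ln(\alpha/\beta) + (1-\alpha)\ln((1-\alpha)/(1-\beta))$ is the binary relative entropy (natural logarithm). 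Because $\alpha \mapsto D(\alpha\|\beta)$ is non-increasing on $[0,\beta]$, $r/m \leq 1/q \leq (q-1)/q$, and $D(1/q\,\|\,(q-1)/q) = \frac{q-2}{q}\ln(q-1)$, this gives $|\B_r| \leq q^{n}(q-1)^{-m(q-2)/q}$. Finally, using $m \geq d$, the bound $|\B_r| < q^{n-d/10}$ reduces to $\frac{q-2}{q}\ln(q-1) > \frac{1}{10}\ln q$, i.e.\ $10(q-2)\ln(q-1) > q\ln q$; this holds for every $q \geq 3$ because $q-2 \geq q/3$ and $\ln(q-1) \geq \frac12\ln q$ (equivalently $q^2 - 3q + 1 \geq 0$), which together give $10(q-2)\ln(q-1) \geq \frac{5}{3}q\ln q > q\ln q$.

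I expect the only genuinely delicate points to be (i) establishing the combinatorial identity for $\B_r$ and getting the exponent bookkeeping exactly right, and (ii) confirming $r < m/q$ so that the Chernoff bound applies in the needed regime — including the degenerate case $r = 0$, where $m = d$ and $|\B_r| = q^{n-d} < q^{n-d/10}$ is immediate. Everything else is a routine estimate, and the slack of a factor $\frac{5}{3}$ in the final inequality shows both that the constant $10$ is far from optimal and that the inequality is strict.
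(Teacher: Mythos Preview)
Your argument is correct and follows the same overall architecture as the paper's derivation in Appendix~\ref{erdos}: verify $n \geq d+2r$, invoke Theorem~\ref{FT} to get $|\A| \leq |\B_r|$, and then bound $|\B_r|$. The difference is entirely in the last step. The paper does not compute $|\B_r|$ exactly; it overcounts via $|\B_r| \leq \binom{d+2r}{d+r}\,q^{n-d-r} \leq 2^{d+2r}q^{n-d-r} = q^{\,n-d(1-1/\log q)-r(1-2/\log q)}$ and then splits into two cases: for $q\geq 4$ both parenthesized factors are nonnegative (the first at least $1/2$), giving $|\B_r|\leq q^{n-d/2}$; for $q=3$ one substitutes $r=d-1$ and checks the exponent numerically. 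Your route---the exact count $|\B_r|=q^{n-m}\sum_{j\leq r}\binom{m}{j}(q-1)^j$, recognized as $q^n\Pr[X\leq r]$ for $X\sim\mathrm{Bin}(m,(q-1)/q)$ and then bounded via the relative-entropy Chernoff inequality---handles all $q\geq 3$ uniformly and reduces everything to the single elementary check $10(q-2)\ln(q-1)>q\ln q$. The paper's estimate is more elementary (only $\binom{m}{k}\leq 2^m$ is used) but requires the case split; yours is cleaner and avoids the split at the cost of importing the Chernoff--Hoeffding bound. Both yield the strict inequality with room to spare, and your remark that the constant $10$ is far from tight matches the paper, which in fact gets $d/2$ rather than $d/10$ for all $q\geq 4$.
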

Note that since $n \geq 3d$, it is also true that $n \geq d + 2r$. Proving Theorem~\ref{thm: packing} now amounts to estimating $|\B_r|$. A string in $\B_r$ can be generated as follows.
\begin{itemize}
\item Choose a subset $T \subseteq [d+2r]$ of size $d+r$.
\item For each $i \in T$, set $x_i=1$.
\item For each $i \notin T$, set $x_i$ arbitrarily.
\end{itemize}
There are $\binom{d+2r}{d+r}$ choices of $T$. For each choice of $T$, there are $q^{n-d-r}$ ways of assigning variables with indices outside $T$. We thus have,
\begin{align}
|\B_r| &\leq \binom{d+2r}{d+r}\cdot q^{n-d-r} \nonumber \\
&\leq 2^{d+2r} \cdot q^{n-d-r} \nonumber \\
& = q^{n-d(1-\frac{1}{\log q})-r(1-\frac{2}{\log q})} \label{inter}
\end{align}
When $q \geq 4$, this gives us $|\B_r| \leq q^{n - \frac{d}{2}}$.
When $q = 3$, the value of $r$ is $d-1$ and so $$|\B_r| \leq q^{n-d(1-\log_3 2)-(d-1)(1-2\log_3 2)} = q^{n-2d+3d\log_3 2 + 1 - 2\log_3 2} \leq q^{n-\frac{d}{10}}.$$

We now give a self-contained and simple proof of the statement of Theorem~\ref{thm: packing} for the special case of $q > (en/d)^2$ (with a worse constant).

Let $\X \subseteq [q]^n$ be such that every $x,x' \in \X$ agree in at least $d$ locations. Observe that each pair $(x,x')$ can be uniquely specified by,
\begin{itemize}
\item A set $P_{x,x'} \subseteq [n]$ of indices of size $d$ such that $x_i=x'_i$ for each $i \in P_{x,x'}$.
\item A vector $\textbf{a}=(a_i)_{i \in P_{x,x'}} \in [q]^d$. $\textbf{a}$ represents $x_{P_{x,x'}}=x'_{P_{x,x'}}$.
\item Vectors $x_{\overline{P_{x,x'}}}$ and $x'_{\overline{P_{x,x'}}}$.
\end{itemize}
Thus the number of pairs $(x,x')$ is at most the number of such representations, which is upper bounded by $\binom{n}{d} \cdot q^d \cdot q^{2(n-d)} \leq (en/d)^d \cdot q^{2n-d} <q^{2n-\frac{d}{2}}$ (since $q > (\frac{en}{d})^2$). Thus $|\X|^2 < q^{2n-\frac{d}{2}} \Rightarrow |\X| < q^{n-\frac{d}{4}}$.

\end{document}